\numberwithin{equation}{section}
\numberwithin{equation}{section}
\theoremstyle{plain}
\newtheorem{theorem}{Theorem}[section]
\newtheorem{proposition}[theorem]{Proposition}
\newtheorem{lemma}[theorem]{Lemma}
\newtheorem{corollary}[theorem]{Corollary}
\newtheorem{definition}[theorem]{Definition}
\DeclareMathAlphabet\scr{U}{scr}{m}{n}
\SetMathAlphabet\scr{bold}{U}{scr}{b}{n}
  \DeclareFontFamily{U}{scr}{\skewchar\font'177}%
  \DeclareFontShape{U}{scr}{m}{n}{<-6>rsfs5<6-8>rsfs7<8->rsfs10}{}%
  \DeclareFontShape{U}{scr}{b}{n}{<-6>rsfs5<6-8>rsfs7<8->rsfs10}{}%
\theoremstyle{definition}
\newtheorem{remark}[theorem]{Remark}
\begin{document}

\title{Equilibrium Returns with Transaction Costs\footnote{We are grateful to Michalis Anthropelos, Peter Bank, Paolo Guasoni, and Felix K\"ubler for stimulating discussions and detailed comments. Moreover, we thank an anonymous referee for his or her careful reading and pertinent remarks.}}

\author{Bruno Bouchard\thanks{Universit\'e Paris-Dauphine, PSL, CNRS, UMR [7534], CEREMADE, 75016 Paris, France, email \texttt{bouchard@ceremade.dauphine.fr}.}
\and
Masaaki Fukasawa\thanks{Osaka University, Graduate School of Engineering Science, 1-3 Machikayama, Toyonaka, Osaka, Japan, 
email: \texttt{fukasawa@sigmath.es.osaka-u.ac.jp} and Tokyo Metropolitan University, Graduate School of Social Sciences. Support from KAKENHI Grant number 25245046 is gratefully acknowledged.} 
\and
Martin Herdegen\thanks{University of Warwick, Department of Statistics, Coventry, CV4 7AL, UK, email
 \texttt{M.Herdegen@warwick.ac.uk}. Partly supported by the Swiss National Science Foundation (SNF) under grant 150101.} 
\and
Johannes Muhle-Karbe\thanks{Carnegie Mellon University, Department of Mathematical Sciences, 5000 Forbes Avenue, Pittsburgh, PA 15213, USA, email \texttt{johannesmk@cmu.edu}. Parts of this paper were written while this author was visiting ETH Z\"urich; he is grateful to the Forschungsinstitut f\"ur Mathematik and H.M.~Soner for their hospitality.}
}

\date{\today}

\maketitle

\begin{abstract}
We study how trading costs are reflected in equilibrium returns. To this end, we develop a tractable continuous-time risk-sharing model, where heterogeneous mean-variance investors trade subject to a quadratic transaction cost. The corresponding equilibrium is characterized as the unique solution of a system of coupled but linear forward-backward stochastic differential equations. Explicit solutions are obtained in a number of concrete settings. The sluggishness of the frictional portfolios makes the corresponding equilibrium returns mean-reverting. Compared to the frictionless case, expected returns are higher if the more risk-averse agents are net sellers or if the asset supply expands over time. 
\end{abstract}

\bigskip
\noindent\textbf{Mathematics Subject Classification (2010):} 91G10, 91G80.

\bigskip
\noindent\textbf{JEL Classification:} C68, D52, G11, G12.

\bigskip
\noindent\textbf{Keywords:} equilibrium, transaction costs, FBSDEs.

\section{Introduction}

It is empirically well documented that asset returns depend on liquidity~\cite{amihud.mendelson.86,brennan.96,pastor.03}. To understand the theoretical underpinning of such ``liquidity premia'', we study a continuous-time risk-sharing equilibrium with transaction costs.\footnote{Liquidity premia with exogenous asset prices are studied by \cite{constantinides.86,jang.al.07,lynch.tan.11,gerhold.al.14,dai.al.16}, for example.} For tractability, we assume (local) mean-variance preferences and quadratic trading costs, levied on the agents' trading rates. Then, both the \emph{unique} equilibrium return that clears the market and the corresponding optimal trading strategies can be characterized by a system of coupled but linear forward-backward stochastic differential equations (FBSDEs). These equations can be solved explicitly in terms of matrix power series, leading to closed-form expressions for the liquidity premia compared to the frictionless benchmark. 

If the risk aversions of all agents are homogenous and the asset supply remains constant over time, then the frictionless price dynamics still clear the market. As a consequence, illiquidity only affects trading strategies but not equilibrium prices in this case. By contrast, if the asset supply expands over time, positive liquidity premia are necessary to compensate the agents for the trading costs incurred when purchasing these additional shares. 

Nontrivial liquidity premia also arise with heterogenous preferences. Then, the more risk averse agents have a stronger motive to trade and therefore have to provide additional compensation to the less risk-averse ones. This leads to positive liquidity premia when the more risk averse agents are net sellers. With heterogenous preferences, illiquidity also makes expected returns mean reverting. This result does not depend on mean-reverting fundamentals, but is instead induced by the sluggishness of the frictional portfolios. With trading costs, allocations do not move directly to their stationary allocation but only gradually adjust over time, leading to autocorrelated return dynamics. For example, if endowment exposures have independent increments, then the liquidity premia have Ornstein-Uhlenbeck dynamics. If the agents exogenous trading needs are also mean-reverting, they enter the liquidity premium as a stochastic mean-reversion level. Illiquidity in turn determines the fluctuations of the actual equilibrium return around this value.

From a mathematical perspective, our analysis is based on the study of systems of coupled but linear FBSDEs. Since their forward components are degenerate, general FBSDE theory as in \cite{delarue.02} only yields local existence in this context. As we need global existence and uniqueness results, we provide a direct argument. Using the theory of primary matrix functions, we extend the univariate results of Bank, Soner, and Vo\ss~\cite{bank.al.17} to the multivariate settings needed to analyze the interaction of multiple agents trading several assets. In order to cover tractable stationary models as a special case, we also show how to extend this analysis to infinite time horizons under suitable transversality conditions.

\paragraph{Related Literature} 

Equilibrium models with transaction costs are notoriously intractable, because trading costs severely complicate the agents' individual optimization problems. Moreover, representative agents cannot be used to simplify the analysis since they abstract from the trades between the individual market participants.

Accordingly, most of the literature on equilibrium asset pricing with transaction costs has focused either on numerical methods or on models with very particular simplifying assumptions. For example, \cite{heaton.lucas.96,buss.al.13,buss.dumas.15} propose algorithms for the numerical approximation of equilibrium dynamics in discrete-time, finite-state models. 

In contrast, \cite{lo.al.04,vayanos.vila.99,weston.17} obtain explicit formulas in continuous-time models but focus on settings with deterministic asset prices for tractability. Garleanu and Pedersen~\cite{garleanu.pedersen.16} solve for the equilibrium returns in a model with a single rational agent and noise traders. For exogenous mean-reverting demands, they also obtain mean-reverting returns like in our model.\footnote{Mean-reverting fundamentals also drive the mean-reverting dynamics in the overlapping-generations model with linear costs studied in \cite{vayanos.98}, for example.} Our more general results show that this effect persists even in the absence of mean-reverting fundamentals, as the sluggishness of optimal portfolios with transaction costs already suffices to generate this effect.

A similar observation is made by Sannikov and Skrzypacz~\cite{sannikov.skrzypacz.16}. Like us, they study a model with several rational mean-variance investors. However, by making information about trading targets private, they also strive to endogenize the price impact. If trades are implemented by means of a ``conditional double auction'', where each agent observes all others' supply and demand schedules, then linear, stationary equilibria can be characterized by a coupled system of algebraic equations. However, this system generally admits multiple solutions and these are not available in closed form except in the case of (almost) homogenous risk aversions. 

\paragraph{Outline of the paper} 

This article is organized as follows.  Section~\ref{s:model} describes the model, both in its frictionless baseline version and with quadratic trading costs. In Section 3, we derive the frictionless equilibrium, before turning to individual optimality with transaction costs (and given exogenous returns) in Section~\ref{s:indopt}. Section~\ref{s:eqopt} in turn contains our main results, on existence, uniqueness, and an explicit characterization of the equilibrium return, complemented by several examples. Section \ref{sec:conclusion} concludes. Appendix~\ref{a:FBSDE} contains the existence and uniqueness results for linear FBSDEs that are used in Section ~\ref{s:indopt} and \ref{s:eqopt}. Appendix B summarizes some material on primary matrix functions that is needed in Appendix~\ref{a:FBSDE}.

\paragraph{Notation}
Throughout, we fix a filtered probability space $(\Omega,\mathcal{F},(\mathcal{F}_t)_{t \in \scr{T}},P)$, where either $\scr{T} = [0, T]$ for $T \in (0, \infty)$ (``finite time horizon") or $\scr{T} = [0, \infty)$ for $T = +\infty$ (``infinite time horizon"). To treat models with a finite and infinite time horizon in a unified manner, we fix a constant $\delta \geq 0$,\footnote{This will be the time-discount rate below; for infinite horizon models, it needs to be strictly positive.} and say that an $\mathbb{R}^\ell$-valued progressively measurable process $(X_t)_{t \in \scr{T}}$ belongs to $\scr{L}^{p}_\delta$, $p \geq 1$, if
$E[\int_0^Te^{-\delta t}\Vert X_t\Vert^p dt] < \infty$,
where $\Vert \cdot \Vert$ is any norm on $\mathbb{R}^\ell$. Likewise, an $\mathbb{R}^\ell$-valued local martingale $(M_t)_{t \in \scr{T}}$ belongs to $\scr{M}^{p}_\delta$, $p \geq 1$, if
$E[\Vert \int_0^T e^{-2 \delta s} d[M]_s \Vert^{p/2}] < \infty$. Here, $\Vert \cdot \Vert$ denotes any matrix norm on $\mathbb{R}^{\ell \times \ell}$.

\section{Model}\label{s:model}

\subsection{Financial Market}

We consider a financial market with $1+d$ assets. The first one is safe, and normalized to one for simplicity. The other $d$ assets are risky,  with dynamics driven by $d$-dimensional  Brownian motion  $(W_t)_{t \in \scr{T}}$:
\begin{equation}\label{eq:price}
dS_t=\mu_t dt+\sigma dW_t.
\end{equation}
Here, the $\mathbb{R}^{d}$-valued expected return process $(\mu_t)_{t\in \scr{T}} \in \scr{L}^2_\delta$ is to be determined in equilibrium, whereas the constant $\mathbb{R}^{d\times d}$-valued volatility matrix $\sigma$ is given exogenously. Throughout, we write $\Sigma= \sigma \sigma^\top$ and assume that this infinitesimal covariance matrix is nonsingular. 

\begin{remark}
Since our goal is to obtain a model with maximal tractability, it is natural to assume that the exogenous volatility matrix $\sigma$ is constant.\footnote{If one instead assumes that the volatility follows some (sufficiently integrable) stochastic process $(\sigma_t)_{t \in \scr{T}}$, then the subsequent characterization of individually optimal strategies and equilibrium returns in terms of coupled but linear FBSDEs as in  \eqref{eq:fbsde1}--\eqref{eq:fbsde2} still applies. However, the stochastic volatility then appears in the coefficients of this equation, so that the solution can no longer be characterized (semi-)explicitly in terms of matrix power series. Instead, a ``backward stochastic Riccati differential equation'' appears as a crucial new ingredient already in the one-dimensional models with exogenous price dynamics studied by \cite{kohlmann.tang.02,bank.voss.16}.} However, stochastic volatilities are bound to appear naturally in more general models where they are determined \emph{endogenously}. Such extensions of the current setting are an important direction for further research.
\end{remark}

\subsection{Endowments, Preferences, and Trading Costs}

A finite number of agents $n=1,\ldots,N$ receive (cumulative) random endowments $(Y^n_t)_{t \in \scr{T}}$ with dynamics
\begin{equation}\label{eq:endow}
dY^n_t=d A^n_t +(\zeta^n_t)^\top \sigma dW_t +dM_t^{\perp,n}, \quad n=1,\ldots,N.
\end{equation}
Here, the $\mathbb{R}$-valued adapted process $(A^n_t)_{t \in \scr{T}}$ with $E[\int_0^T e^{-\delta s} d \vert A \vert_s] < \infty$ denotes the finite variation component of Agent $n$'s endowment; it may contain lump-sum payments as well as absolutely continuous cash-flows. The $\mathbb{R}^{d}$-valued process $\zeta^n \in \scr{L}^2_\delta$ describes the exposure of the endowment to asset price shocks. Finally, the orthogonal $\mathbb{R}$-valued martingale $M^{\perp,n} \in \scr{M}^2_{\delta/2}$ models unhedgeable shocks. 

Without trading costs, the goal of Agent $n$ is to choose an $\mathbb{R}^{d}$-valued progressively measurable trading strategy $\varphi  \in \scr{L}^2_\delta$ (the number of shares held in each risky asset) to maximize the (discounted) expected changes of her wealth, penalized for the quadratic variation of wealth changes as in, e.g., \cite{garleanu.pedersen.13,garleanu.pedersen.16}:
\begin{align}
&E\left[\int_0^T e^{-\delta t}\left(\varphi _t^\top dS_t+dY^n_t- \frac{\gamma^n}{2}d\left\langle \textstyle\int_0^\cdot \varphi_s^\top dS_s+Y^n\right\rangle_t \right)\right] \notag\\
&=E\bigg[ \int_0^T e^{-\delta t} \left(\varphi_t^\top \mu_t - \frac{\gamma^n}{2} (\varphi_t+\zeta^n_t)^\top \Sigma  (\varphi_t+\zeta^n_t) \right)dt \notag \\
&\qquad\quad+ \int_0^T e^{-\delta t} \left(d A^n_t -\frac{\gamma^n}{2} d\langle M^{\perp,n} \rangle_t \right) \bigg] \to \max! \label{eq:frictionless}
\end{align}
Here, $\gamma^n>0$ and $\delta \geq 0$ are Agent $n$'s risk aversion and the (common) discount rate, respectively. We assume without loss of generality that
\begin{equation*}
\gamma^N = \max(\gamma^1, \ldots, \gamma^N),
\end{equation*}
so that Agent $N$ has the highest risk aversion among all agents. For simplicity, we also suppose that the initial stock position $\varphi^n_{0-}$ of each agent is zero.

\begin{remark}
A strictly positive discount rate allows to postpone the planning horizon indefinitely to obtain stationary infinite-horizon solutions as in \cite{martin.schoeneborn.11,martin.12,garleanu.pedersen.13,garleanu.pedersen.16}. In that case, $\varphi \in \scr{L}^2_\delta$ is an appropriate transversality condition that ensures that the problem is well posed. 
\end{remark}

The solution of the frictionless problem \eqref{eq:frictionless} is readily determined by pointwise optimization as
\begin{equation}\label{eq:flopt}
\varphi^n_t=\frac{\Sigma^{-1}\mu_t}{\gamma^n}-\zeta^n_t.
\end{equation}
The first term is the classical (myopic) Merton portfolio; the second is the mean-variance hedge for the replicable part of the endowment. 

As in \cite{almgren.chriss.01,grinold.06,garleanu.pedersen.13,cartea.jaimungal.16,garleanu.pedersen.16,almgren.li.16,guasoni.weber.15a,moreau.al.15,bank.voss.16,bank.al.17} we now assume that trades incur costs proportional to the square of the order flow $\dot{\varphi}_t=\frac{d}{dt}\varphi_t$.\footnote{The assumption of quadratic rather than proportional costs is made for tractability. However, buoyed by the results from the partial equilibrium literature, we expect the qualitative properties of our results to be robust across different \emph{small} transaction costs, compare with the discussion in \cite{moreau.al.15}.} This trading friction can either be interpreted as temporary price impact proportional to both trade size and trade speed, or as a (progressive) transaction tax or trading fee. For the first interpretation it is natural to assume that trades also move the prices of correlated securities (compare \cite{schied.al.10,garleanu.pedersen.13,garleanu.pedersen.16,guasoni.weber.15c}), and each agent's trades also affect the others' execution prices. In contrast, a tax as in \cite{subrahmanyam.98} or the fee charged by an exchange affects trades in each asset and by each agent separately. We focus on the second specification here, which simplifies the analysis by avoiding a coupling of the agents' optimization problems through common price impact. To wit, $\lambda^m > 0$, $m=1,\ldots,d$, describes the quadratic costs levied separately on each agent's order flow for asset $m$ and we denote by $\Lambda \in \mathbb{R}^{d \times d}$ the diagonal matrix with diagonal entries $\lambda^1, \ldots, \lambda^d$.\footnote{More general specifications do no seem natural for the tax interpretation of the model. Note, however, that the mathematical analysis below only uses that $\Lambda$ is symmetric and positive definite.} With this notation, Agent $n$'s optimization problem then reads as follows:

\begin{align}
J^n(\dot{\varphi}):=& E\left[ \int_0^T e^{-\delta t}\left(\varphi_t^\top \mu_t - \frac{\gamma^n}{2} (\varphi_t+\zeta^n_t)^\top \Sigma  (\varphi_t+\zeta^n_t) - \dot{\varphi}_t^\top \Lambda \dot{\varphi}_t dt\right)dt\right]\notag\\
&\quad +E\left[ \int_0^T  e^{-\delta t} \left(d A^n_t -\frac{\gamma^n}{2} d\langle M^{n\perp} \rangle_t\right) \right] \to \max!  \label{eq:goal}
\end{align}
In order to avoid infinite transaction costs, all trading rates (as well as the corresponding trading strategies themselves) naturally have to belong to $\scr{L}^2_\delta$.

The goal now is to solve for the equilibrium excess return that matches the agents' (and, potentially, noise traders') supply and demand. A similar model with a single strategic agent and noise traders with a particular parametric demand is analyzed in \cite[Section 4]{garleanu.pedersen.16}. Conversely, \cite{zitkovic.12,choi.larsen.15,kardaras.al.15,xinZit17} study models of the above form without noise traders (and with exponential rather than mean-variance preferences).

\section{Frictionless Equilibrium}\label{s:fless}

For later comparison to the frictional case, we first consider the model without trading costs. To clear the market, the expected return process $(\mu_t)_{t \in \scr{T}}$ needs to be chosen so that the demand of the strategic agents and the exogenous demand of a group of noise traders matches the total supply of zero at all times. To wit, modeling the noise trader demand by an exogenous process $\psi \in \scr{L}^2_\delta$ \texttt{with $\psi_{0-} = 0$}, the clearing condition reads as
$$0=\varphi^1_t+\ldots+\varphi^N_t+\psi_t.$$
(Alternatively, one can interpret $-\psi$ as the exogenous supply of the risky assets.) In view of \eqref{eq:flopt}, the frictionless equilibrium expected return therefore is
\begin{equation}\label{eq:mcapm}
\mu_t=\frac{\Sigma(\zeta^1_t+\zeta^2_t+\ldots+\zeta^N_t-\psi_t)}{1/\gamma^1+1/\gamma^2+\ldots+1/\gamma^N}.
\end{equation}
The interpretation is that the investment demand induced by the equilibrium return needs to offset the difference between the noise trading volume and the strategic agents' total hedging demand. Whence, the equilibrium return scales with the (exogenous) covariance matrix of the risky assets, relative to the total risk tolerance.

In this simple model, equilibrium dynamics and strategies are known in closed form, rather than only being characterized via martingale representation \cite{karatzas.shreve.98} or BSDEs~\cite{kardaras.al.15}. This makes the model an ideal point of departure for analyzing the impact of transaction costs on the equilibrium return.

\section{Individual Optimality with Transaction Costs}\label{s:indopt}

As a first step towards our general equilibrium analysis in Section~\ref{s:eqopt}, we now consider each Agent $n$'s individual optimization problem with transaction costs~\eqref{eq:goal}, taking an expected return process $\mu \in \scr{L}^2_\delta$ as exogenously given. A multidimensional generalization of the calculus of variations argument of \cite{bank.al.17} leads to the following representation of the optimal strategy in terms of a coupled but linear system of forward-backward stochastic differential equations (henceforth FBSDEs):

\begin{lemma}\label{thm:indopt1}
Let $\varphi^n_t =\frac{\Sigma^{-1}\mu_t}{\gamma^n}-\zeta^n_t$ be the frictionless optimizer from \eqref{eq:flopt}. Then the frictional optimization problem \eqref{eq:goal}  for Agent $n$ has a unique solution, characterized by the following FBSDE:
\begin{equation}\label{eq:FBSDE}
\begin{split}
d\varphi^{\Lambda, n}_t &=\dot{\varphi}^{\Lambda, n}_t dt, \quad \varphi^{\Lambda, n}_0 =0, \\
d\dot \varphi^{\Lambda, n}_t &= dM^n_t+\frac{\gamma^n \Lambda^{-1}\Sigma}{2}(\varphi^{\Lambda, n}_t-\varphi^n_t)dt + \delta \dot \varphi^{\Lambda, n}_t dt.
\end{split}
\end{equation}
Here, $\varphi^{\Lambda, n}_t, \dot \varphi^{\Lambda, n}_t \in \scr{L}^2_\delta$, and the $\mathbb{R}^d$-valued square-integrable martingale $M^n$ needs to be determined as part of the solution. If $T < \infty$, the dynamics~\eqref{eq:FBSDE} are complemented by the terminal condition\footnote{This means that agents stop trading near maturity, when there is not enough time left to recuperate the costs of further transactions. If $T=\infty$, this terminal condition is replaced by the transversality conditions implicit in $\varphi^{\Lambda, n}_t, \dot \varphi^{\Lambda, n}_t \in \scr{L}^2_\delta$ for $\delta>0$.}
\begin{equation}
\label{eq:FBSDE:terminal condition}
\dot \varphi^{\Lambda, n}_T = 0.
\end{equation}
For $T=\infty$, Agent $n$'s unique individually optimal strategy $\varphi^{\Lambda,n}$ has the explicit representation \eqref{eq:varphi:infinite}; the corresponding optimal trading rate $\dot{\varphi}^{\Lambda,n}$ is given in feedback form by \eqref{eq:ODE:infinite}. For $T<\infty$, the corresponding formulas are provided in \eqref{eq:varphi} and~\eqref{eq:ODE}, respectively.
\end{lemma}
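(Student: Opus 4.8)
The strategy is to carry out the multivariate version of the calculus-of-variations argument of \cite{bank.al.17}. First I would reduce \eqref{eq:goal} to a convex tracking problem. Completing the square in $\varphi_t$ in the integrand of \eqref{eq:goal},
$$\varphi_t^\top\mu_t-\tfrac{\gamma^n}{2}(\varphi_t+\zeta^n_t)^\top\Sigma(\varphi_t+\zeta^n_t)=-\tfrac{\gamma^n}{2}(\varphi_t-\varphi^n_t)^\top\Sigma(\varphi_t-\varphi^n_t)+R^n_t,$$
with $\varphi^n$ as in \eqref{eq:flopt} and $R^n$ independent of $\varphi$ and integrable (because $\mu,\zeta^n\in\scr{L}^2_\delta$), shows that maximizing $J^n$ over admissible $\dot\varphi\in\scr{L}^2_\delta$ (with primitive $\varphi\in\scr{L}^2_\delta$, $\varphi_0=0$) is equivalent to minimizing
$$F^n(\dot\varphi):=E\Big[\int_0^T e^{-\delta t}\Big(\tfrac{\gamma^n}{2}(\varphi_t-\varphi^n_t)^\top\Sigma(\varphi_t-\varphi^n_t)+\dot\varphi_t^\top\Lambda\dot\varphi_t\Big)\,dt\Big].$$
Since $\Sigma$ and $\Lambda$ are symmetric positive definite, $F^n$ is strictly convex and coercive on the admissible set, so it admits at most one minimizer and its first-order condition is both necessary and sufficient for optimality.

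Next I would compute the Gateaux derivative of $F^n$ at a candidate $\dot\varphi^{\Lambda,n}$ in the direction of an admissible perturbation $\dot\beta\in\scr{L}^2_\delta$ with primitive $\beta_t=\int_0^t\dot\beta_s\,ds$; stationarity reads
$$E\Big[\int_0^T e^{-\delta t}\Big(\gamma^n\big(\Sigma(\varphi^{\Lambda,n}_t-\varphi^n_t)\big)^\top\beta_t+2\big(\Lambda\dot\varphi^{\Lambda,n}_t\big)^\top\dot\beta_t\Big)\,dt\Big]=0.$$
Re-expressing the first term via Fubini as an integral against $\dot\beta_s$ and then exploiting the arbitrariness of $\dot\beta$ over progressively measurable $\scr{L}^2_\delta$-processes (the stochastic fundamental lemma of the calculus of variations) would force
$$E\Big[\,2e^{-\delta s}\Lambda\dot\varphi^{\Lambda,n}_s+\int_s^T e^{-\delta t}\gamma^n\Sigma(\varphi^{\Lambda,n}_t-\varphi^n_t)\,dt\ \Big|\ \mathcal{F}_s\Big]=0,\qquad s\in\scr{T}.$$

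Then I would read off the FBSDE: writing the left-hand side above as $N_s-\int_0^s e^{-\delta t}\gamma^n\Sigma(\varphi^{\Lambda,n}_t-\varphi^n_t)\,dt$ for a square-integrable martingale $N$, differentiating in $s$, and setting $dM^n_s:=-\tfrac12\Lambda^{-1}e^{\delta s}\,dN_s$ yields exactly the backward equation in \eqref{eq:FBSDE}; evaluating the identity at $s=T$ gives $\dot\varphi^{\Lambda,n}_T=0$ when $T<\infty$, while for $T=\infty$ the requirement $\varphi^{\Lambda,n},\dot\varphi^{\Lambda,n}\in\scr{L}^2_\delta$ together with the martingale property plays the role of the transversality condition. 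Existence, uniqueness, and the closed-form representations \eqref{eq:varphi:infinite}--\eqref{eq:ODE:infinite} and \eqref{eq:varphi}--\eqref{eq:ODE} would then be imported from Appendix~\ref{a:FBSDE}: \eqref{eq:FBSDE} is a coupled linear FBSDE with degenerate forward component, so the general theory of \cite{delarue.02} only yields local solvability, but the appendix supplies the global statement. Concretely, $\gamma^n\Lambda^{-1}\Sigma/2$ is similar to the symmetric positive definite matrix $\tfrac{\gamma^n}{2}\Lambda^{-1/2}\Sigma\Lambda^{-1/2}$, hence diagonalizable with positive spectrum, so the primary-matrix-function calculus of Appendix~B applies and produces the feedback representation of $\dot\varphi^{\Lambda,n}$ (a linear mean-reverting adjustment of $\varphi^{\Lambda,n}$ towards an exponentially weighted conditional average of the future frictionless targets $\varphi^n$, with matrix-valued mean-reversion speed) together with the companion formula for $\varphi^{\Lambda,n}$; a direct integrability estimate then confirms that these processes lie in $\scr{L}^2_\delta$, so $\dot\varphi^{\Lambda,n}$ is admissible, $M^n$ is a true square-integrable martingale, and the unique FBSDE solution coincides with the unique optimizer of \eqref{eq:goal}.

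I expect the main obstacle to be this last step, namely the global well-posedness and explicit solution of the \emph{degenerate} linear FBSDE \eqref{eq:FBSDE} — which is precisely why the self-contained Appendix~\ref{a:FBSDE} (with the primary-matrix-function preliminaries of Appendix~B) is needed; inside that argument, the diagonalizability of $\Lambda^{-1}\Sigma$ and the transversality bookkeeping in the infinite-horizon case are the points that demand the most care. A secondary technical point is justifying Gateaux differentiability of $F^n$ and the stochastic fundamental lemma under only $\scr{L}^2_\delta$-integrability, which requires a localization/approximation argument but no genuinely new ideas.
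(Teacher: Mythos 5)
Your proposal is correct and follows essentially the same route as the paper: strict convexity, Gâteaux first-order condition, Fubini plus the tower property to obtain the conditional-expectation representation of $\dot\varphi^{\Lambda,n}$, then the martingale/integration-by-parts step to read off the FBSDE, with global well-posedness and the explicit feedback formulas imported from Appendix~\ref{a:FBSDE}. The only cosmetic differences are that you complete the square to recast \eqref{eq:goal} as a quadratic tracking problem (the paper works with $J^n$ directly, yielding the same stationarity condition), and that you argue positivity of the spectrum of $\gamma^n\Lambda^{-1}\Sigma/2$ via similarity to $\tfrac{\gamma^n}{2}\Lambda^{-1/2}\Sigma\Lambda^{-1/2}$ rather than citing a result on products of SPD matrices. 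One step worth spelling out more carefully, and which the paper does explicitly, is the converse direction: showing that a solution of the FBSDE in $\scr{L}^2_\delta$ actually satisfies the conditional-expectation representation, which in the infinite-horizon case requires passing to the limit along a subsequence $t_k\uparrow\infty$ and invoking Proposition~\ref{prop:BSDE:martingale} (to know $M^n\in\scr{M}^2_\delta$) together with martingale convergence; your phrase ``a direct integrability estimate'' gestures at this but understates the bookkeeping involved.
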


\begin{proof}
Since the goal functional~\eqref{eq:goal} is strictly convex, \eqref{eq:goal}  has a unique solution if and only if there exists a (unique) solution to the following first-order condition~\cite{ekeland.temam.99}:
\begin{equation}\label{eq:gateaux}
\left\langle J'\left(\dot \varphi \right),\dot{\vartheta}\right\rangle=0, \quad \mbox{for all } \vartheta \text{ with } \vartheta_0 = 0 \text{ and } \vartheta, \dot \vartheta \in \scr{L}^2_\delta.
\end{equation}
Here, the G\^{a}teaux derivative of $J$ in the direction $\dot{\vartheta}$ is given by
\begin{align*}
\left\langle J'\left(\dot \varphi \right),\dot{\vartheta}\right\rangle &= \lim_{\rho \to 0} \frac{J(\dot \varphi +\rho\dot{\vartheta})-J(\dot \varphi)}{\rho}\\
&=E\left[\int_0^T e^{-\delta t}\left(  (\mu^\top_t-\gamma^n  (\varphi_t+\zeta^n_t)^\top \Sigma) \left(\int_0^t \dot{\vartheta}_s ds\right)- 2\left(\dot \varphi_t\right)^\top \Lambda \dot{\vartheta}_t\right)  dt\right].
\end{align*}
By Fubini's theorem,
\begin{align*}
\int_0^T  &\left(e^{-\delta t}(\mu^\top_t -\gamma^n\left(\varphi_t+\zeta^n_t\right)^\top \Sigma)  \left(\int_0^t \dot{\vartheta}_s ds \right)\right)dt \\
&=\int_0^T \left(\int_s^T e^{-\delta t}\left(\mu^\top_t -\gamma^n\left(\varphi_t+\zeta^n_t\right)^\top \Sigma\right) dt \right)  \dot{\vartheta}_s ds.
\end{align*}
Together with the tower property of the conditional expectation, this allows to rewrite the first-order condition \eqref{eq:gateaux} as
\begin{equation*}
0= E\left[\int_0^T \left( E\left[\int_t^T e^{-\delta s}\left(\mu^\top_s -\gamma^n\left(\varphi_s+\zeta^n_s\right)^\top \Sigma\right)ds  \Big|\mathcal{F}_t\right] - 2e^{-\delta t} \left(\dot\varphi \right)^\top \Lambda \right)  \dot{\vartheta}_t dt\right].
\end{equation*}
Since this has to hold for any perturbation $\dot{\vartheta}$, ~\eqref{eq:goal} has a (unique) solution $\dot \varphi^{\Lambda, n}$ if and only if 
\begin{align}\label{eq:rep}
\dot \varphi^{\Lambda, n}_t &= \frac{\gamma^n \Lambda^{-1}\Sigma}{2} e^{\delta t} E\left[\int_t^T e^{-\delta s}\left(\frac{\Sigma^{-1}\mu_s}{\gamma^n} -\zeta^n_s -\varphi^{\Lambda, n}_s\right)ds \Big|\mathcal{F}_t\right]
\end{align}
has a a (unique) solution. 

Now, assume that \eqref{eq:rep} has a (unique) solution $\dot \varphi^{\Lambda, n}$. Note that if $T < \infty$, \eqref{eq:FBSDE:terminal condition} is satisfied. Define the square-integrable martingale $\tilde{M}_t=\frac{\gamma^n \Lambda^{-1} \Sigma}{2} E\big[\int_0^T e^{-\delta s}(\varphi^n_s -\varphi^{\Lambda, n}_s)ds|\mathcal{F}_t\big]$, $t \in \scr{T}$. Integration by parts then allows to rewrite \eqref{eq:rep} as
$$d\dot{\varphi}^{\Lambda,n}_t =e^{\delta t} d\tilde{M}_t- \frac{\gamma^n\Lambda^{-1}\Sigma}{2} (\varphi^n_t-\varphi^{\Lambda, n}_t)dt+\delta \dot \varphi^{\Lambda, n}_t dt.$$
Together with the definition $d\varphi_t^{\Lambda, n}= \dot{\varphi}_t^{\Lambda, n} dt$, this yields the claimed FBSDE representation \eqref{eq:FBSDE}.

Conversely, assume that \eqref{eq:FBSDE} has a (unique) solution $(\varphi^{\Lambda, n}, \dot \varphi^{\Lambda, n}, M^n)$, where $\varphi^{\Lambda, n}, \dot \varphi^{\Lambda, n} \in \scr{L}^2_\delta$ and $M^n$ is an $\mathbb{R}^\ell$-valued martingale with finite second moments. 

 First note that, for $t \in \scr{T}$ with $t < \infty$, integration by parts gives
\begin{equation}
\label{eq:pf:thm:indopt1:int parts}
e^{-\delta t} \dot \varphi^{\Lambda, n}_t = \dot \varphi^{\Lambda, n}_0 + \int_0^t e^{-\delta s} dM^n_s + \int_0^t e^{-\delta s}\frac{\gamma^n \Lambda^{-1}\Sigma}{2}(\varphi^{\Lambda, n}_s-\varphi^n_t)ds.
\end{equation}
Next, we claim that
\begin{equation}
\label{eq:pf:thm:indopt1:claim}
\dot \varphi^{\Lambda, n}_0 =  - \int_0^T e^{-\delta s} dM^n_s - \int_0^T e^{-\delta s}\frac{\gamma^n \Lambda^{-1}\Sigma}{2}(\varphi^{\Lambda, n}_s-\varphi^n_t)ds.
\end{equation}
If $T < \infty$, this follows from \eqref{eq:pf:thm:indopt1:int parts} for $t = T$ together with the terminal condition \eqref{eq:FBSDE:terminal condition}. If $T = \infty$, we argue as follows: since $\dot \varphi^{\Lambda, n} \in \scr{L}^2_\delta$, there exists an increasing sequence $(t_k)_{k \in \mathbb{N}}$ with $\lim_{k \to \infty} t_k = \infty$ along which the left-hand side of \eqref{eq:pf:thm:indopt1:int parts} converges a.s.~to zero. Moreover, Proposition \ref{prop:BSDE:martingale}, the martingale convergence theorem and $\varphi^{\Lambda, n} ,\varphi^n \in \scr{L}^2_\delta$ show that the right-hand side of \eqref{eq:pf:thm:indopt1:int parts} converges (along $t_k$) a.s. to 
$$\dot \varphi^{\Lambda, n}_0 + \int_0^\infty e^{-\delta s} dM^n_s + \int_0^\infty e^{-\delta s}\frac{\gamma^n \Lambda^{-1}\Sigma}{2}(\varphi^{\Lambda, n}_s-\varphi^n_t)ds.$$
Hence, \eqref{eq:pf:thm:indopt1:claim} holds also in this case.

Inserting \eqref{eq:pf:thm:indopt1:claim} into \eqref{eq:pf:thm:indopt1:int parts}, taking conditional expectations and rearranging in turn yields \eqref{eq:rep}.

It remains to show that the FBSDE \eqref{eq:FBSDE} has a (unique) solution $(\varphi^{\Lambda, n}, \dot \varphi^{\Lambda, n}, M^n)$. Since the matrix $\frac{\gamma^n}{2} \Lambda^{-1}\Sigma$ has only positive eigenvalues (because it is the product of two symmetric positive definite matrices, cf.~\cite[Proposition 6.1]{serre.10}), this follows from Theorem~\ref{thm:fbsde:infinite} (for $T=\infty$) or Theorem~\ref{thm:fbsde} (for $T<\infty$), respectively.
\end{proof}

\section{Equilibrium with Transaction Costs}\label{s:eqopt}

\subsection{Equilibrium Returns}

We now use the above characterization of individually optimal strategies to determine the \emph{equilibrium return} $(\mu_t)_{t \in \scr{T}}$, for which the agents' individually optimal demands match the zero net supply of the risky asset at all times. As each agent's trading rate is now constrained to be absolutely continuous, the same needs to hold for the exogenous noise-trading volume: 
$$
d\psi_t=\dot{\psi}_tdt,
$$ 
where $d\dot{\psi}_t=\mu^\psi_tdt+ dM_t^\psi$ for $\mu^\psi \in \scr{L}^2_\delta$ and a local martingale $M^\psi$. We also assume that $\psi, \dot \psi \in \scr{L}^2_\delta$. The key ingredient for the equilibrium return is the solution of another system of coupled but linear FBSDEs:

\begin{lemma}\label{lem:FBSDE2}
There exists a unique solution  $(\varphi^\Lambda,\dot \varphi^\Lambda)=(\varphi^{\Lambda,1},\ldots,\varphi^{\Lambda, N-1},\dot{\varphi}^{\Lambda,1},\ldots,\dot{\varphi}^{\Lambda,N-1})$ of the following FBSDE:
\begin{align}
\label{eq:FBSDE3}
\begin{split}
d\varphi^\Lambda_t &=\dot{\varphi}^\Lambda_t dt, \quad \varphi_0=0,\\
d\dot{\varphi}^\Lambda_t&=dM_t +\left(B\varphi^\Lambda_t+\delta\dot{\varphi}^\Lambda_t-A\zeta_t+\chi_t\right) dt,
\end{split}
\end{align}
satisfying the terminal condition $\dot \varphi^\Lambda_T = 0$ if $T < \infty$. Here, $M$ is an $\mathbb{R}^{d(N-1)}$-valued martingale with finite second moments, $\zeta=((\zeta^1)^\top,\ldots,(\zeta^N)^\top)^\top$,
\begin{align*}
B&= \begin{pmatrix} \left(\frac{\gamma^N-\gamma^1}{N}+\gamma^1\right) \frac{\Lambda^{-1}\Sigma}{2} & \cdots &\frac{\gamma^{N}-\gamma^{N-1}}{N} \frac{\Lambda^{-1}\Sigma}{2}  \\ \vdots & \ddots & \vdots  \\ \frac{\gamma^N-\gamma^1}{N} \frac{\Lambda^{-1}\Sigma}{2} & \cdots & \left(\frac{\gamma^{N}-\gamma^{N-1}}{N}+\gamma^{N-1} \right) \frac{\Lambda^{-1}\Sigma}{2} \end{pmatrix} \in \mathbb{R}^{d (N-1)\times d (N-1)},\\
A&=\begin{pmatrix} \left(\frac{\gamma^1}{N}-\gamma^1\right) \frac{\Lambda^{-1}\Sigma}{2} & \cdots &\frac{\gamma^{N-1}}{N} \frac{\Lambda^{-1}\Sigma}{2} & \frac{\gamma^N}{N} \frac{\Lambda^{-1}\Sigma}{2} \\ \vdots & \ddots & \vdots & \vdots \\ \frac{\gamma^1}{N} \frac{\Lambda^{-1}\Sigma}{2} & \cdots & \left(\frac{\gamma^{N-1}}{N}-\gamma^{N-1}\right) \frac{\Lambda^{-1}\Sigma}{2} & \frac{\gamma^N}{N} \frac{\Lambda^{-1}\Sigma}{2} \end{pmatrix} \in \mathbb{R}^{d(N-1)\times d N},
\end{align*}
and
$$
\chi_t=\frac{1}{N}\left(\left(\frac{\gamma^N\Lambda^{-1}\Sigma}{2}\psi_t+\delta\dot{\psi}_t-\mu^\psi_t\right)^\top,\ldots,\left(\frac{\gamma^N\Lambda^{-1}\Sigma}{2}\psi_t+\delta\dot{\psi}_t-\mu^\psi_t\right)^\top\right)^\top \in \mathbb{R}^{d(N-1)}.
$$
\end{lemma}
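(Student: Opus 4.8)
The plan is to recognize \eqref{eq:FBSDE3} as an instance of the abstract linear FBSDEs treated in Appendix~\ref{a:FBSDE} and then simply verify the hypotheses of those results. Indeed, \eqref{eq:FBSDE3} has exactly the required shape: a degenerate forward equation $d\varphi^\Lambda_t = \dot\varphi^\Lambda_t\,dt$ with $\varphi^\Lambda_0 = 0$, coupled to a linear backward equation $d\dot\varphi^\Lambda_t = dM_t + (B\varphi^\Lambda_t + \delta\dot\varphi^\Lambda_t + \eta_t)\,dt$ with constant driver matrix $B$, inhomogeneity $\eta_t := -A\zeta_t + \chi_t$, an unknown square-integrable martingale $M$, and terminal/transversality condition $\dot\varphi^\Lambda_T = 0$. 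So, just as in the proof of Lemma~\ref{thm:indopt1}, it suffices to check that (i) $\eta \in \scr{L}^2_\delta$ and (ii) the spectrum of $B$ lies in the open right half-plane, and then to invoke Theorem~\ref{thm:fbsde:infinite} (for $T = \infty$, where we also use $\delta > 0$) or Theorem~\ref{thm:fbsde} (for $T < \infty$). Point (i) is routine bookkeeping: $A$ and $\Lambda$ are constant matrices, $\zeta = ((\zeta^1)^\top,\dots,(\zeta^N)^\top)^\top$ with $\zeta^n \in \scr{L}^2_\delta$ for each $n$, and $\chi$ is an explicit linear combination of $\psi,\dot\psi \in \scr{L}^2_\delta$ and $\mu^\psi \in \scr{L}^2_\delta$, so $\eta \in \scr{L}^2_\delta$. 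The step I expect to be the main obstacle is the spectral condition (ii) on the block matrix $B$.

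For (ii), I would first record the Kronecker-product structure $B = \widetilde B \otimes \tfrac{\Lambda^{-1}\Sigma}{2}$, where $\widetilde B \in \mathbb{R}^{(N-1)\times(N-1)}$ has entries $\widetilde B_{ij} = \tfrac{\gamma^N - \gamma^j}{N} + \gamma^i\mathbbm{1}_{\{i=j\}}$; equivalently $\widetilde B = G + \mathbf{1}v^\top$ with $G = \mathrm{diag}(\gamma^1,\dots,\gamma^{N-1})$, $\mathbf{1} = (1,\dots,1)^\top \in \mathbb{R}^{N-1}$, and $v = \tfrac1N(\gamma^N-\gamma^1,\dots,\gamma^N-\gamma^{N-1})^\top$, which is componentwise nonnegative since $\gamma^N = \max_n\gamma^n$. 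Because $\mathrm{spec}(B) = \{\beta\nu : \beta \in \mathrm{spec}(\widetilde B),\ \nu \in \mathrm{spec}(\tfrac{\Lambda^{-1}\Sigma}{2})\}$ and $\tfrac{\Lambda^{-1}\Sigma}{2}$, being a positive multiple of the product of two symmetric positive definite matrices, has only positive real eigenvalues (cf.\ the proof of Lemma~\ref{thm:indopt1}), it is enough to show that $\widetilde B$ has only positive real eigenvalues. This is a secular-equation/rank-one-perturbation computation: by the matrix determinant lemma, any $\lambda \notin \{\gamma^1,\dots,\gamma^{N-1}\}$ is an eigenvalue of $\widetilde B$ if and only if $1 + \sum_{i=1}^{N-1}\tfrac{v_i}{\gamma^i - \lambda} = 0$. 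Writing $\lambda = a + \mathrm{i}b$ and taking imaginary parts, this equation forces $b\sum_{i=1}^{N-1}\tfrac{v_i}{(\gamma^i - a)^2 + b^2} = 0$; since all summands are nonnegative, either $b = 0$ or $v_i = 0$ for all $i$ (in which case $\widetilde B = G$ and the claim is immediate). Hence every such $\lambda$ is real, and for real $\lambda \le 0$ one has $1 + \sum_{i=1}^{N-1}\tfrac{v_i}{\gamma^i - \lambda} \ge 1 > 0$, so in fact $\lambda > 0$. The remaining eigenvalues of $\widetilde B$ lie in $\{\gamma^1,\dots,\gamma^{N-1}\} \subset (0,\infty)$. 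Therefore $\mathrm{spec}(\widetilde B) \subset (0,\infty)$ and consequently $\mathrm{spec}(B) \subset (0,\infty)$.

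With (i) and (ii) established, Theorem~\ref{thm:fbsde:infinite} or Theorem~\ref{thm:fbsde} yields a unique solution $(\varphi^\Lambda, \dot\varphi^\Lambda, M)$ with $\varphi^\Lambda, \dot\varphi^\Lambda \in \scr{L}^2_\delta$ and $M$ an $\mathbb{R}^{d(N-1)}$-valued martingale with finite second moments, satisfying $\dot\varphi^\Lambda_T = 0$ when $T < \infty$, which is exactly the assertion of the lemma. The only genuinely non-mechanical ingredient is the positivity of $\mathrm{spec}(B)$; once the Kronecker-product structure is made explicit, this reduces to the elementary sign analysis of the secular equation above, and everything else is a direct appeal to the Appendix~\ref{a:FBSDE} machinery already used for the single-agent problem in Lemma~\ref{thm:indopt1}.
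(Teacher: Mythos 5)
Your proposal is correct and matches the paper's overall strategy: reduce \eqref{eq:FBSDE3} to the abstract linear FBSDE of Appendix~\ref{a:FBSDE}, check the integrability of the inhomogeneity, verify that $\mathrm{spec}(B) \subset (0,\infty)$, and invoke Theorems~\ref{thm:fbsde:infinite} and~\ref{thm:fbsde}. The one place where you depart from the paper is in the spectral claim, which the paper delegates to Lemma~\ref{lem:eigen}. There the argument is: conjugate by the block-diagonal matrix $Q = \mathrm{diag}(P,\dots,P)$ to replace each block $\tfrac{\Lambda^{-1}\Sigma}{2}$ by a diagonal $U$, view the resulting matrix as an element of $\mathcal R_d^{(N-1)\times(N-1)}$ over the commutative ring of diagonal matrices, apply Silvester's block-determinant theorem and row reduction, and then read off a scalar secular equation for each diagonal entry $u^i$. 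You instead observe the Kronecker structure $B = \widetilde B \otimes \tfrac{\Lambda^{-1}\Sigma}{2}$, use that eigenvalues of a Kronecker product are products of eigenvalues, write $\widetilde B = G + \mathbf 1 v^\top$ as a diagonal-plus-rank-one matrix, and apply the matrix determinant lemma to obtain a single secular equation in the $(N-1)\times(N-1)$ factor. Both routes terminate in the same sign analysis (nonnegativity of $v_i = \tfrac{\gamma^N-\gamma^i}{N}$ forces $\Im\lambda = 0$, and then $\lambda\le 0$ is excluded since each term in the secular sum is nonnegative); yours is more economical because it sidesteps the Silvester machinery and the explicit block basis change, at the cost of quoting the Kronecker eigenvalue-product fact. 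Either way the reduction is complete, so the appeal to Theorem~\ref{thm:fbsde:infinite} (for $T=\infty$, $\delta>0$) or Theorem~\ref{thm:fbsde} (for $T<\infty$, using that $\Delta = B + \tfrac{\delta^2}{4}I$ inherits real positive eigenvalues) is valid.
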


\begin{proof}
Lemma \ref{lem:eigen} shows that all eigenvalues of the matrix $B$ are real and positive; in particular, $B$ is invertible. The assertion in turn follows from Theorem \ref{thm:fbsde:infinite} for $T = \infty$ and from Theorem \ref{thm:fbsde} for $T < \infty$ because $\zeta, \chi \in \scr{L}^2_\delta$.
\end{proof}

We can now state our main result:

\begin{theorem}\label{thm:main}
The unique frictional equilibrium return is 
\begin{equation}\label{eq:mu}
\mu^\Lambda_t =\sum_{n=1}^{N-1} \frac{(\gamma^n-\gamma^N)\Sigma}{N} \varphi^{\Lambda,n}_t+\sum_{n=1}^N \frac{\gamma^n\Sigma}{N} \zeta^n_t-\frac{\gamma_N \Sigma}{N}\psi_t+ \frac{2\Lambda}{N}(\mu^\psi_t-\delta\dot{\psi}_t).
\end{equation}
The corresponding individually optimal trading strategies of Agents $n=1,\ldots,N$ are $\varphi^{\Lambda,1},\ldots,\varphi^{\Lambda, N-1}$ from Lemma~\ref{lem:FBSDE2} and $\varphi^{\Lambda,N}=-\sum_{n=1}^{N-1}\varphi^{\Lambda,n}-\psi$.
\end{theorem}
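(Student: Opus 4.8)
The plan is to characterize a frictional equilibrium as a pair consisting of an admissible return $\mu\in\scr{L}^2_\delta$ and strategies $\varphi^{\Lambda,1},\dots,\varphi^{\Lambda,N}$ such that (i) by Lemma~\ref{thm:indopt1}, each $\varphi^{\Lambda,n}$ is individually optimal given $\mu$, i.e.\ $(\varphi^{\Lambda,n},\dot\varphi^{\Lambda,n},M^n)$ solves the FBSDE~\eqref{eq:FBSDE} with frictionless target $\varphi^n_t=\Sigma^{-1}\mu_t/\gamma^n-\zeta^n_t$, and (ii) the market clears, $\sum_{n=1}^N\varphi^{\Lambda,n}_t+\psi_t=0$. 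I would show this is equivalent to: $(\varphi^{\Lambda,1},\dots,\varphi^{\Lambda,N-1})$ together with the stacked martingale $M=(M^1,\dots,M^{N-1})$ solves the FBSDE~\eqref{eq:FBSDE3}, $\mu$ is given by~\eqref{eq:mu}, and $\varphi^{\Lambda,N}:=-\sum_{n=1}^{N-1}\varphi^{\Lambda,n}-\psi$. Existence and uniqueness of the equilibrium then follow from Lemma~\ref{lem:FBSDE2}; admissibility ($\mu^\Lambda$ and all strategies and rates in $\scr{L}^2_\delta$) is immediate from~\eqref{eq:mu} once one knows $\varphi^{\Lambda,n},\dot\varphi^{\Lambda,n}\in\scr{L}^2_\delta$ and recalls $\psi,\dot\psi,\mu^\psi\in\scr{L}^2_\delta$.

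For the forced form of the return, I would sum the second line of~\eqref{eq:FBSDE} over $n=1,\dots,N$. The left-hand side is $d\sum_n\dot\varphi^{\Lambda,n}_t=-d\dot\psi_t=-\mu^\psi_t\,dt-dM^\psi_t$ by clearing; on the right, using $\gamma^n\varphi^n_t=\Sigma^{-1}\mu_t-\gamma^n\zeta^n_t$ and eliminating $\varphi^{\Lambda,N}$ via clearing (so that $\sum_n\gamma^n\varphi^{\Lambda,n}_t=\sum_{n=1}^{N-1}(\gamma^n-\gamma^N)\varphi^{\Lambda,n}_t-\gamma^N\psi_t$), the drift is $\tfrac{\Lambda^{-1}\Sigma}{2}\bigl(\sum_{n=1}^{N-1}(\gamma^n-\gamma^N)\varphi^{\Lambda,n}_t-\gamma^N\psi_t-N\Sigma^{-1}\mu_t+\sum_n\gamma^n\zeta^n_t\bigr)-\delta\dot\psi_t$. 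Matching the predictable finite-variation parts of these two special semimartingales and solving the resulting linear relation for $\mu_t$ (using $\tfrac{\Lambda^{-1}\Sigma}{2}N\Sigma^{-1}=\tfrac{N}{2}\Lambda^{-1}$) gives exactly~\eqref{eq:mu}; as a by-product one records the aggregate identity $\sum_{n=1}^N\tfrac{\gamma^n\Lambda^{-1}\Sigma}{2}(\varphi^{\Lambda,n}_t-\varphi^n_t)=\delta\dot\psi_t-\mu^\psi_t$, needed below.

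Next, substituting~\eqref{eq:mu} into $\varphi^n_t=\Sigma^{-1}\mu_t/\gamma^n-\zeta^n_t$, I would check that for $n=1,\dots,N-1$ the drift $\tfrac{\gamma^n\Lambda^{-1}\Sigma}{2}(\varphi^{\Lambda,n}_t-\varphi^n_t)$ of~\eqref{eq:FBSDE} coincides with the $n$-th $\mathbb{R}^d$-block of $B\varphi^\Lambda_t-A\zeta_t+\chi_t$. This is the computational heart of the proof, but it is just block-by-block bookkeeping: the matrices $B$, $A$ and the inhomogeneity $\chi$ in Lemma~\ref{lem:FBSDE2} were defined precisely so that the coefficient of $\varphi^{\Lambda,n}_t$ is $\gamma^n-\tfrac{\gamma^n-\gamma^N}{N}$, that of $\varphi^{\Lambda,m}_t$ ($m\neq n$) is $\tfrac{\gamma^N-\gamma^m}{N}$, the $\zeta$-coefficients assemble into $-A\zeta_t$, and the $\psi$- and $\mu^\psi$-terms into $\chi_t$. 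Consequently $(\varphi^{\Lambda,1},\dots,\varphi^{\Lambda,N-1})$ solves~\eqref{eq:FBSDE3}; by Lemma~\ref{lem:FBSDE2} it is the unique such solution, which yields uniqueness of the equilibrium, and conversely, defining $\mu:=\mu^\Lambda$ from~\eqref{eq:mu} and building the remaining objects from the solution of~\eqref{eq:FBSDE3} produces an equilibrium.

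Finally I would verify that $\varphi^{\Lambda,N}=-\sum_{n=1}^{N-1}\varphi^{\Lambda,n}-\psi$ is itself individually optimal given $\mu^\Lambda$, i.e.\ solves~\eqref{eq:FBSDE} with $n=N$. Differentiating this identity twice and inserting~\eqref{eq:FBSDE} for $n<N$ and $d\dot\psi_t=\mu^\psi_t\,dt+dM^\psi_t$, the martingale part of $d\dot\varphi^{\Lambda,N}_t$ is $dM^N_t$ with $M^N:=-M^\psi-\sum_{n=1}^{N-1}M^n$ (square-integrable since $M^\psi$ is, by $\dot\psi,\mu^\psi\in\scr{L}^2_\delta$), and its drift equals $\tfrac{\gamma^N\Lambda^{-1}\Sigma}{2}(\varphi^{\Lambda,N}_t-\varphi^N_t)+\delta\dot\varphi^{\Lambda,N}_t$ by the aggregate identity from the second step; the initial condition holds as $\psi_0=\psi_{0-}=0$, and $\dot\varphi^{\Lambda,N}_T=0$ for $T<\infty$ in accordance with the clearing requirement $\dot\psi_T=0$. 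The only genuine obstacle is getting the block-matrix identity in the third step exactly right; conceptually, clearing forces~\eqref{eq:mu} and reduces the whole problem to the FBSDE~\eqref{eq:FBSDE3} already solved in Lemma~\ref{lem:FBSDE2}.
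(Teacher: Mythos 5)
Your proposal is correct and follows essentially the same route as the paper's proof: sum the individual FBSDEs over all agents, use market clearing of the trading rates $\sum_n\dot\varphi^{\Lambda,n}=-\dot\psi$ to identify the drift, solve the resulting linear algebraic relation for $\mu_t$ to obtain \eqref{eq:mu}, substitute back to recognize the system \eqref{eq:FBSDE3} whose uniqueness is Lemma~\ref{lem:FBSDE2}, and verify Agent $N$'s optimality via clearing. The only notable variation is that you verify Agent $N$'s optimality by differentiating the clearing identity and invoking your explicit aggregate identity, whereas the paper argues through the integral form~\eqref{eq:rep}; in doing so you also correctly surface the implicit requirement $\dot\psi_T=0$ for $T<\infty$, which the paper leaves tacit.
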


\begin{proof}
Let $\nu \in \scr{L}^2_\delta$ be \emph{any} equilibrium return and denote by $\vartheta^\Lambda=(\vartheta^{\Lambda,1},\ldots,\vartheta^{\Lambda,N})$ the corresponding individually optimal trading strategies. Then, market clearing implies that not only the positions of the agents but also their trading rates must sum to zero, $0=\sum_{n=1}^N \dot{\vartheta}^{\Lambda,n}+\dot{\psi}$. Together with the FBSDEs \eqref{eq:FBSDE} describing each agent's optimal trading rate, it follows that
$$
0=dM_t +\sum_{n=1}^N \frac{\Lambda^{-1}}{2} \big(\gamma^n \Sigma \vartheta^{\Lambda,n}_t -(\nu_t -\gamma^n \Sigma \zeta^n_t) \big)dt+\sum_{n=1}^N \delta \dot{\vartheta}^{\Lambda,n}_t dt+d\dot{\psi}_t,
$$
for a local martingale $M$. Market clearing implies $\vartheta^{\Lambda,N}=-\sum_{n=1}^{N-1} \vartheta^{\Lambda,n}-\psi$, and so this gives
$$
0=dM_t + \frac{\Lambda^{-1}}{2} \left(\sum_{n=1}^{N-1}(\gamma^n-\gamma^N) \Sigma \vartheta^{\Lambda,n}_t -\sum_{n=1}^N (\nu_t -\gamma^n \Sigma \zeta^n_t)-\gamma_N \Sigma \psi_t \right)dt-\delta \dot{\psi}_t dt+\mu_t^\psi dt + dM^\psi_t.
$$
Since any continuous local martingale of finite variation is constant, it follows that
\begin{equation}\label{eq:drift}
\nu_t =\sum_{n=1}^{N-1} \frac{(\gamma^n-\gamma^N)\Sigma}{N} \vartheta^{\Lambda,n}_t+\sum_{n=1}^N \frac{\gamma^n\Sigma}{N} \zeta^n_t-\frac{\gamma_N \Sigma}{N}\psi_t+ \frac{2\Lambda}{N}(\mu^\psi_t-\delta\dot{\psi}_t).
\end{equation}
Plugging this expression for $\nu_t$ back into Agent $n=1,\ldots,N-1$'s individual optimality condition~\eqref{eq:FBSDE}, we deduce that
\begin{align*}
d\dot{\vartheta}^{\Lambda,n}_t = dM^n_t &+\frac{\Lambda^{-1}\Sigma}{2}\left(\gamma^n\vartheta^{\Lambda,n}_t +\sum_{m=1}^{N-1}\frac{\gamma^N-\gamma^m}{N} \vartheta^{\Lambda,m}_t+\gamma^n\zeta_t^n -\sum_{m=1}^N \frac{\gamma^m}{N}\zeta^m_t\right)dt\\
&+\frac{1}{N} \left(\frac{\gamma_N\Lambda^{-1}\Sigma}{2}\psi_t+\delta\dot{\psi}_t-\mu^\psi_t\right)dt, \qquad n=1,\ldots,N-1.
\end{align*}
Hence, $(\vartheta^{\Lambda,1},\ldots,\vartheta^{\Lambda,N-1},\dot{\vartheta}^{\Lambda,1},\ldots,\dot{\vartheta}^{\Lambda,N-1})$ solves the FBSDE~\eqref{eq:FBSDE3} and therefore coincides with its \emph{unique} solution from Lemma~\ref{lem:FBSDE2}. Market clearing in turn shows $\vartheta^{\Lambda,N}=\varphi^{\Lambda,N}$, and \eqref{eq:drift} implies that the equilibrium return coincides with \eqref{eq:mu}. This establishes that if an equilibrium exists, then it has to be of the proposed form.

To verify that the proposed returns process and trading strategies indeed form an equilibrium, we revert the above arguments. Market clearing holds by definition of $\varphi^{\Lambda,N}$, so it remains to check that $\varphi^{\Lambda,n}$ is indeed optimal for agent $n=1,\ldots,N$. To this end, it suffices to show that the individual optimality conditions~\eqref{eq:rep} are satisfied for $n=1,\ldots,N$. After inserting the definitions of $\mu^\Lambda$, one first realises that for $n =1, \ldots, N-1$, \eqref{eq:rep} coincides with the respective equation in \eqref{eq:FBSDE3}, and for $n=N$, this follows from market clearing. This completes the proof.
\end{proof}

\subsection{Equilibrium Liquidity Premia}

Let us now discuss the \emph{equilibrium liquidity premia} implied by Theorem~\ref{thm:main}, i.e., the differences between the frictional equilibrium returns \eqref{eq:mu} and their frictionless counterparts \eqref{eq:mcapm}.
To this end, denote by $\bar \varphi^n$, $n=1,\ldots,N$,  the frictionless optimal strategy from \eqref{eq:flopt} for Agent $n$, corresponding to the frictionless equilibrium return~\eqref{eq:mcapm}:
\begin{equation*}
\bar \varphi^n_t = \frac{1/\gamma_n \left(\sum_{m=1}^N
	\zeta^m_t - \psi_t \right)}{\sum_{m=1}^N 1/\gamma^m}  - \zeta^n_t.
\end{equation*}
With this notation, the frictionless equilibrium return $\mu$ can be written as
\begin{equation}
\label{eq:mu frictionless rewritten}
\mu_t = \sum_{n=1}^N \frac{\gamma_n \Sigma}{N} (\bar \varphi^n_t + \zeta^n_t).
\end{equation}
Now subtract~\eqref{eq:mu frictionless rewritten} from the frictional equilibrium return \eqref{eq:mu}, use that $- \sum_{n =1}^{N-1} \varphi^{\Lambda, n} = \varphi^{\Lambda_N} + \psi_t$ by the frictional clearing condition, and note that $\sum_{n=1}^N ( \varphi_t^{\Lambda,n}  - \bar \varphi_t^n) = (-\psi_t + \psi_t) = 0$ by frictional and frictionless market clearing. This yields the following expression for the liquidity premium:
\begin{align}
\mathrm{LiPr}_t := \mu^\Lambda_t - \mu_t &= \sum_{n=1}^{N} \frac{\gamma^n\Sigma}{N} \varphi^{\Lambda,n}_t+\sum_{n=1}^N \frac{\gamma^n\Sigma}{N} \zeta^n_t + \frac{2\Lambda}{N}(\mu^\psi_t-\delta\dot{\psi}_t) - \sum_{n=1}^N \frac{\gamma_n \Sigma}{N} (\bar \varphi^n_t + \zeta_t) \notag \\
&=\frac{\Sigma}{N} \sum_{n=1}^N \gamma^n ( \varphi_t^{\Lambda,n}  - \bar \varphi_t^n) + \frac{2\Lambda}{N}(\mu^\psi_t-\delta\dot{\psi}_t) \notag \\
&=\frac{\Sigma}{N} \sum_{n=1}^N (\gamma^n -\bar \gamma)( \varphi_t^{\Lambda,n}  - \bar \varphi_t^n) + \frac{2\Lambda}{N}(\mu^\psi_t-\delta\dot{\psi}_t), 
\label{eq:LiPrM}
\end{align}
where $\bar \gamma = \sum_{n =1}^N \frac{\gamma_n}{N}$ denotes the average risk aversion of the strategic agents.

Let us now interpret this result. A first observation is that if all agents are strategic and have the same risk aversion, then the frictionless equilibrium returns \eqref{eq:mcapm} also clear the market with transaction costs:

\begin{corollary}\label{cor:homo}
Suppose there are no noise traders and all strategic agents have the same risk aversion $\bar \gamma=\gamma^1=\ldots= \gamma^N$. Then there are no liquidity premia.
\end{corollary}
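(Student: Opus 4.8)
The plan is to read the statement off directly from the representation \eqref{eq:LiPrM} of the liquidity premium established just above. First I would translate the two hypotheses into the notation of Theorem~\ref{thm:main}. ``No noise traders'' means $\psi \equiv 0$; hence $\dot\psi \equiv 0$ and, since $d\dot\psi_t = \mu^\psi_t\,dt + dM^\psi_t$, also $\mu^\psi \equiv 0$ (and $M^\psi \equiv 0$). ``All strategic agents have the same risk aversion'' means $\gamma^1 = \cdots = \gamma^N = \bar\gamma$, i.e.\ $\gamma^n - \bar\gamma = 0$ for every $n = 1, \ldots, N$, where $\bar\gamma = \frac{1}{N}\sum_{m=1}^N \gamma^m$ is the average risk aversion appearing in \eqref{eq:LiPrM}.

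Next I would simply substitute these identities into \eqref{eq:LiPrM}. The term $\frac{2\Lambda}{N}(\mu^\psi_t - \delta\dot\psi_t)$ vanishes because $\mu^\psi$ and $\dot\psi$ are identically zero, and every summand of $\frac{\Sigma}{N}\sum_{n=1}^N (\gamma^n - \bar\gamma)(\varphi^{\Lambda,n}_t - \bar\varphi^n_t)$ vanishes because its scalar coefficient $\gamma^n - \bar\gamma$ is zero; here one only needs that each tracking error $\varphi^{\Lambda,n}_t - \bar\varphi^n_t$ is a.s.\ finite, which holds since $\varphi^{\Lambda,n} \in \scr{L}^2_\delta$ by Lemma~\ref{lem:FBSDE2} and Theorem~\ref{thm:main}, and $\bar\varphi^n \in \scr{L}^2_\delta$ by \eqref{eq:flopt} together with $\mu \in \scr{L}^2_\delta$. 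Hence $\mathrm{LiPr}_t = \mu^\Lambda_t - \mu_t = 0$ for all $t \in \scr{T}$, $P$-a.s., which is the assertion.

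Since \eqref{eq:LiPrM} is already available, there is no real obstacle: the argument is pure substitution, the only (entirely routine) point being the finiteness of the tracking errors so that the zero coefficients genuinely annihilate them. For intuition I would add one sentence explaining \emph{why} the frictionless return survives the introduction of costs at a structural level: in this symmetric case the frictionless optimizers satisfy $\sum_{n=1}^N \bar\varphi^n \equiv 0$, so summing the individual tracking FBSDEs \eqref{eq:FBSDE} driven by the frictionless return $\mu$ shows that the aggregate frictional position $\sum_{n=1}^N \varphi^{\Lambda,n}$ solves a \emph{homogeneous} degenerate linear FBSDE with vanishing initial condition, whose unique solution (by the uniqueness results underlying Lemma~\ref{lem:FBSDE2}) is identically zero. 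Thus $\mu$ already clears the frictional market, which is precisely the content of the corollary.
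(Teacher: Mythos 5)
Your argument is correct and is exactly the paper's: Corollary~\ref{cor:homo} is stated as an immediate consequence of the representation \eqref{eq:LiPrM}, and you simply substitute $\psi\equiv 0$ (hence $\dot\psi\equiv 0$, $\mu^\psi\equiv 0$) and $\gamma^n=\bar\gamma$ for all $n$ to annihilate both terms. The closing paragraph on why the frictionless return clears the frictional market is a nice addition that matches the paper's own interpretive remark preceding the corollary.
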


A similar result has been established for exponential investors in the limit for small transaction costs by \cite{herdegen.muhlekarbe.17}. In the present quadratic context, this result holds true exactly. A result in the same spirit in a static model is \cite[Corollary 4.12]{anthropelos.17}, where incompleteness also only affects strategies but not equilibrium prices for mean-variance investors with homogenous risk aversions. Another related result is \cite[Proposition 12]{sannikov.skrzypacz.16}, where homogeneous risk aversion imply that the frictional equilibrium converges to the frictionless one as the horizon grows. 

However, this result no longer remains true in the presence of noise traders:
\begin{corollary}\label{cor:homo2}
	Suppose that all strategic agents have the same risk aversion $\bar \gamma=\gamma^1=\ldots= \gamma^N$. Then:
	\begin{equation*}
	\mathrm{LiPr}_t = \frac{2\Lambda}{N}(\mu^\psi_t-\delta \dot{\psi}_t).
	\end{equation*}
\end{corollary}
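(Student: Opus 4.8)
The plan is to obtain the claim as an immediate consequence of the representation~\eqref{eq:LiPrM} of the liquidity premium, which was derived for arbitrary—possibly heterogeneous—risk aversions. Recall that
\[
\mathrm{LiPr}_t = \frac{\Sigma}{N} \sum_{n=1}^N (\gamma^n - \bar \gamma)\,( \varphi_t^{\Lambda,n} - \bar \varphi_t^n) + \frac{2\Lambda}{N}(\mu^\psi_t - \delta \dot{\psi}_t),
\]
where $\bar \gamma = \frac1N \sum_{n=1}^N \gamma^n$ is the average risk aversion.

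First I would substitute the homogeneity hypothesis $\gamma^1 = \cdots = \gamma^N = \bar \gamma$ into this formula. Then $\gamma^n - \bar \gamma = 0$ for every $n$, so the first sum vanishes term by term, irrespective of the values of $\varphi_t^{\Lambda,n}$ and $\bar \varphi_t^n$. What survives is precisely $\frac{2\Lambda}{N}(\mu^\psi_t - \delta \dot{\psi}_t)$, which is the asserted identity.

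There is no real obstacle here; the only point that warrants a brief check is that the derivation of~\eqref{eq:LiPrM}—in particular the final rewriting that inserts $\bar\gamma$ into the sum—uses only that $\sum_{n=1}^N (\varphi_t^{\Lambda,n} - \bar \varphi_t^n) = 0$, which follows from frictional and frictionless market clearing and holds regardless of the risk aversions. Hence~\eqref{eq:LiPrM} remains valid under the homogeneity assumption, and the corollary follows. (Corollary~\ref{cor:homo} is then the further special case in which there are no noise traders, so that $\psi \equiv 0$ and therefore $\mu^\psi \equiv \dot{\psi} \equiv 0$, making the right-hand side vanish.)
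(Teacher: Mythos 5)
Your proposal is correct and matches the paper's own (implicit) argument: Corollary~\ref{cor:homo2} is read off directly from~\eqref{eq:LiPrM} by noting that homogeneous risk aversions make every coefficient $\gamma^n-\bar\gamma$ vanish, and your sanity check that the derivation of~\eqref{eq:LiPrM} only uses $\sum_{n=1}^N(\varphi^{\Lambda,n}_t-\bar\varphi^n_t)=0$ is exactly the right thing to verify.
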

To illustrate the intuition behind this result, consider the simplest case where the noise traders simply sell at a constant rate, $\dot{\psi}<0$. Put differently, the number of risky shares available for trading expands linearly. Then $\mathrm{LiPr}_t  = -\frac{2\Lambda}{N}\delta \dot{\psi}>0$. This illustrates how market growth can lead to positive liquidity premia even for homogenous agents. 

If there is only one strategy agent ($N =1$), we are always in the setting of Corollary~\ref{cor:homo2}. An example is the model of Garleanu and Pedersen~\cite[Section 4]{garleanu.pedersen.16} with a single risky asset ($d=1$), a single strategic agent without random endowment ($\zeta=0$) and exogenous noise traders, whose positions $\psi_t$ are mean-reverting around a stochastic mean:\footnote{Several groups of noise traders with different mean positions as considered in \cite[Section 4]{garleanu.pedersen.16} can be treated analogously.} 
	$$d\psi_t=\kappa_\psi(X_t-\psi_t)dt,$$
	where $X$ is an Ornstein-Uhlenbeck process driven by a Brownian motion $W^X$:
	$$dX_t=-\kappa_X X_t dt+\sigma_X dW^X_t.$$
	In the notation from Section~\ref{s:eqopt}, we then have $\dot{\psi}_t=\kappa_\psi (X_t-\psi_t)$ and
	$$
	\mu^\psi_t=-\kappa_\psi \kappa_X X_t - \kappa_\psi \dot{\psi}_t=-\kappa_\psi \kappa_X X_t - \kappa^2_\psi (X_t-\psi_t) = \kappa_\psi^2 \psi_t -\kappa_\psi(\kappa_\psi+\kappa_X)X_t.
	$$  
We therefore recover the nontrivial liquidity premia of \cite[Proposition 9]{garleanu.pedersen.16}, to which we also refer for a discussion of the corresponding comparative statics.

\medskip

To obtain nontrivial liquidity premia in a model with only strategic agents and a fixed supply of risky assets, one needs to consider agents with heterogeneous risk aversions. To ease notation and interpretation, suppose there are no noise traders ($\psi=0$). Then, the liquidity premium~\eqref{eq:LiPrM} is simplifies to
\begin{equation*}
 \mathrm{LiPr}_t= \frac{\Sigma}{N} \sum_{n=1}^N (\gamma^n - \bar \gamma)( \varphi_t^{\Lambda,n}  - \bar \varphi_t^n).
\end{equation*}
This means that the liquidity premium is the sample covariance between the vector $(\gamma^1,\ldots,\gamma^N)$ of risk aversions and the current deviations $(\varphi^{\Lambda,1}_t-\bar \varphi^1_t,\ldots,\varphi^{\Lambda,N}_t-\bar \varphi^N_t)$ between the agents' actual positions and their frictionless targets. Hence, the liquidity premium is positive if and only if sensitivity and excess exposure to risk are positively correlated, i.e., if the more risk averse agents hold larger risky positions than in the (efficient) frictionless equilibrium. Then, these agents will tend to be net sellers and, as their trading motive is stronger than for the net buyers, a positive liquidity premium is needed to clear the market. 
  
  \medskip
  
  To shed further light on the dynamics of liquidity premia induced by heterogenous risk aversions, we now consider some concrete examples where the aggregate of the agents' endowments is zero as in~\cite{lo.al.04}. First, we consider the simplest case where endowment exposures have independent, stationary increments: 

\begin{corollary}\label{cor:OU 1}
	Let the time horizon be infinite and consider two strategic agents with risk aversions $\gamma_1 < \gamma_2$, discount rate $\delta>0$, and endowment volatilities following arithmetic Brownian motions:
	\begin{equation*}\label{eq:zeta}	
	\zeta^1_t=at+N_t, \qquad \zeta^2_t =-\zeta^1_t,
	\end{equation*}
	for an $\mathbb{R}^d$-valued Brownian motion $N$ and $a \in \mathbb{R}^d$. Then, the frictionless equilibrium return vanishes. The equilibrium return with transaction costs has Ornstein-Uhlenbeck dynamics:
	$$
	d\mu^\Lambda_t =\left(\sqrt{\frac{\gamma_1+\gamma_2}{2}\frac{\Sigma \Lambda^{-1}}{2}+\frac{\delta^2}{4}I_d}-\frac{\delta}{2}I_d\right) \left(2\frac{\gamma_1-\gamma_2}{\gamma_1+\gamma_2}\delta \Lambda a-\mu^\Lambda_t\right)dt +\frac{(\gamma_1-\gamma_2)\Sigma}{2}dN_t.
	$$
\end{corollary}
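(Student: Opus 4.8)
The plan is to derive both claims from the two-agent specialization of Theorem~\ref{thm:main} together with the explicit infinite-horizon solution of the FBSDE in Lemma~\ref{lem:FBSDE2}. The frictionless statement is immediate: with $\zeta^1_t+\zeta^2_t=0$ and no noise traders ($\psi\equiv 0$), the numerator in~\eqref{eq:mcapm} vanishes, so $\mu_t=0$. For the frictional return I would specialize~\eqref{eq:mu} to $N=2$, $\psi\equiv 0$, $\zeta^2=-\zeta^1$: the noise-trader terms drop out and $\tfrac{\gamma_1\Sigma}{2}\zeta^1_t+\tfrac{\gamma_2\Sigma}{2}\zeta^2_t=\tfrac{(\gamma_1-\gamma_2)\Sigma}{2}\zeta^1_t$, so that $\mu^\Lambda_t=\tfrac{(\gamma_1-\gamma_2)\Sigma}{2}\bigl(\varphi^{\Lambda,1}_t+\zeta^1_t\bigr)$, where $\varphi^{\Lambda,1}$ is the single component of the solution of~\eqref{eq:FBSDE3}. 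Everything thus reduces to identifying the law of $Z_t:=\varphi^{\Lambda,1}_t+\zeta^1_t$, the deviation of Agent~1's position from her frictionless target $\bar\varphi^1=-\zeta^1$.

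Next I would compute the coefficients of~\eqref{eq:FBSDE3} in this case: $B=\tfrac{\gamma_1+\gamma_2}{2}\tfrac{\Lambda^{-1}\Sigma}{2}$, $\chi\equiv 0$, and, using $\zeta^2=-\zeta^1$, $-A\zeta_t=B\zeta^1_t$; hence $d\dot\varphi^{\Lambda,1}_t=dM_t+\bigl(B(\varphi^{\Lambda,1}_t+\zeta^1_t)+\delta\dot\varphi^{\Lambda,1}_t\bigr)dt$. Set $C:=\sqrt{\tfrac{\delta^2}{4}I_d+B}-\tfrac{\delta}{2}I_d$, which by Lemma~\ref{lem:eigen} and the primary-matrix-function calculus of Appendix~B is well defined, has positive eigenvalues, and satisfies $C(C+\delta I_d)=B$. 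Plugging the decoupling ansatz $\dot\varphi^{\Lambda,1}_t=-C\varphi^{\Lambda,1}_t+\eta_t$ into the FBSDE and using $C(C+\delta I_d)=B$ reduces it to the linear BSDE $d\eta_t=dM_t+\bigl((C+\delta I_d)\eta_t-B\xi_t\bigr)dt$ with $\xi_t:=-\zeta^1_t$, whose $\scr{L}^2_\delta$-solution is $\eta_t=E_t\bigl[\int_t^\infty e^{-(C+\delta I_d)(s-t)}B\xi_s\,ds\bigr]$; this matches the feedback solution from Theorem~\ref{thm:fbsde:infinite}, and a direct check that the resulting candidate solves~\eqref{eq:FBSDE3} and the transversality condition $\varphi^{\Lambda,1},\dot\varphi^{\Lambda,1}\in\scr{L}^2_\delta$ (valid since $\delta>0$ and $\zeta^1$ grows at most linearly), together with uniqueness from Lemma~\ref{lem:FBSDE2}, closes this step. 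Evaluating with $E_t[\zeta^1_s]=\zeta^1_t+a(s-t)$, $\int_0^\infty e^{-(C+\delta I_d)r}\,dr=(C+\delta I_d)^{-1}$, $\int_0^\infty e^{-(C+\delta I_d)r}r\,dr=(C+\delta I_d)^{-2}$, and $(C+\delta I_d)^{-1}B=C$ yields
\[
\dot\varphi^{\Lambda,1}_t=-C\bigl(\varphi^{\Lambda,1}_t+\zeta^1_t\bigr)-(C+\delta I_d)^{-1}Ca .
\]

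Finally I would read off the dynamics of $\mu^\Lambda$. Since $dZ_t=\dot\varphi^{\Lambda,1}_t\,dt+a\,dt+dN_t$ and $a-(C+\delta I_d)^{-1}Ca=\delta(C+\delta I_d)^{-1}a$, the process $Z$ is Ornstein--Uhlenbeck: $dZ_t=-C\bigl(Z_t-\delta B^{-1}a\bigr)dt+dN_t$, using $C^{-1}(C+\delta I_d)^{-1}=B^{-1}$. Multiplying by $\tfrac{(\gamma_1-\gamma_2)\Sigma}{2}$, substituting $\mu^\Lambda_t=\tfrac{(\gamma_1-\gamma_2)\Sigma}{2}Z_t$ and $B^{-1}=\tfrac{4}{\gamma_1+\gamma_2}\Sigma^{-1}\Lambda$, and using that primary matrix functions commute with conjugation (Appendix~B), so $\Sigma C\Sigma^{-1}=\sqrt{\tfrac{\delta^2}{4}I_d+\Sigma B\Sigma^{-1}}-\tfrac{\delta}{2}I_d$ with $\Sigma B\Sigma^{-1}=\tfrac{\gamma_1+\gamma_2}{2}\tfrac{\Sigma\Lambda^{-1}}{2}$, turns the drift into $\bigl(\sqrt{\tfrac{\gamma_1+\gamma_2}{2}\tfrac{\Sigma\Lambda^{-1}}{2}+\tfrac{\delta^2}{4}I_d}-\tfrac{\delta}{2}I_d\bigr)\bigl(2\tfrac{\gamma_1-\gamma_2}{\gamma_1+\gamma_2}\delta\Lambda a-\mu^\Lambda_t\bigr)$ and the diffusion into $\tfrac{(\gamma_1-\gamma_2)\Sigma}{2}dN_t$, which is the claim. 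The main obstacle is pinning down the explicit solution of the degenerate FBSDE — equivalently, getting the constant of integration in the feedback law right and verifying the transversality condition — and carefully tracking the matrix square roots under conjugation by $\Sigma$, which is precisely what replaces $\Lambda^{-1}\Sigma$ by $\Sigma\Lambda^{-1}$ in the final formula; the remaining steps are routine linear algebra and It\^o calculus.
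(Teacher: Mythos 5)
Your proposal is correct and follows essentially the same route as the paper: specialize Theorem~\ref{thm:main} to $N=2$, $\psi\equiv 0$, $\zeta^2=-\zeta^1$ to obtain $\mu^\Lambda_t=\tfrac{(\gamma_1-\gamma_2)\Sigma}{2}(\varphi^{\Lambda,1}_t+\zeta^1_t)$, then use the explicit feedback solution of the one-block FBSDE with $B=\tfrac{\gamma_1+\gamma_2}{2}\tfrac{\Lambda^{-1}\Sigma}{2}$ and $\xi=-\zeta^1$. The only stylistic difference is that the paper directly invokes the representation \eqref{eq:ODE:infinite}--\eqref{def:bar xi:infinite} from Theorem~\ref{thm:fbsde:infinite} and computes $\bar\xi^1_t$ by conditional expectation, whereas you re-derive the same feedback law via a decoupling ansatz $\dot\varphi^{\Lambda,1}=-C\varphi^{\Lambda,1}+\eta$ and a linear BSDE for $\eta$; the resulting $\eta$ is exactly the paper's $\bar\xi^1$, and the final algebra (the factorization $C(C+\delta I_d)=B$, the commutation of primary matrix functions of $B$, and the conjugation $\Sigma C\Sigma^{-1}$ via Lemma~\ref{lem:matrix function:properties}(b)) matches the paper's line by line.
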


\begin{proof}
	The frictionless equilibrium return vanishes by \eqref{eq:mcapm}. In view of Theorem~\ref{thm:main} and since $\zeta^2 =-\zeta^1$, its frictional counterpart is given by
	\begin{equation}\label{eq:eqmu}
	\mu^\Lambda_t= \frac{(\gamma^1-\gamma^2)\Sigma}{2} (\varphi^{\Lambda,1}_t+\zeta^1_t).
	\end{equation}
	By  Lemma~\ref{lem:FBSDE2} as well as Theorem~\ref{thm:fbsde:infinite} (with $B=\frac{\gamma_1+\gamma_2}{2}\frac{\Lambda^{-1}\Sigma}{2}$ and $\xi_t=-\zeta^1_t$) and the representation \eqref{eq:ODE:infinite} from its proof, Agent 1's optimal trading rate is 
	\begin{equation}
	\label{eq:ex:trading rate}
	\dot{\varphi}^{\Lambda,1}_t=\bar{\xi}^1_t- \left(\sqrt{\Delta}- \frac{\delta}{2}I_d\right)\varphi^{\Lambda,1}_t,
	\end{equation}
	where
	\begin{align*}
\Delta &= \frac{\gamma_1+\gamma_2}{2}\frac{\Lambda^{-1}\Sigma}{2}+\frac{\delta^2}{4}I_d, \\
	\bar{\xi}^1_t &= -\left(\sqrt{\Delta} -\tfrac{\delta}{2}I_d \right) E\left[\int_t^\infty \left(\sqrt{\Delta} +\tfrac{\delta}{2} I_d \right) e^{-(\sqrt{\Delta}+\frac{\delta}{2}I_d)(u-t)}(a u+N_u) d u\Big|\mathcal{F}_t\right] \\
	&= -\left(\sqrt{\Delta} - \frac{\delta}{2}I_d\right) \left(\zeta^1_t+(\sqrt{\Delta}+\tfrac{\delta}{2}I_d)^{-1}a\right).
	\end{align*}
	Here, we have used an elementary integration and the martingale property of $N$ for the last equality. Plugging this back into \eqref{eq:ex:trading rate} yields
	\begin{equation*}
	\dot{\varphi}^{\Lambda,1}_t=-\left(\sqrt{\Delta} - \frac{\delta}{2}I_d\right)\left(\varphi^{\Lambda,1}_t + \zeta^1_t+(\sqrt{\Delta} + \tfrac{\delta}{2}I_d)^{-1}a\right).
	\end{equation*}
	Inserting this into \eqref{eq:eqmu} in turn leads to the asserted Ornstein-Uhlenbeck dynamics:
	\begin{align*}
	d \mu^\Lambda_t &= \tfrac{(\gamma^1-\gamma^2)\Sigma}{2} (\dot \varphi^{\Lambda,1}_t dt+d\zeta^1_t) \\
	&= \tfrac{(\gamma^1-\gamma^2)\Sigma}{2} \left(\left(-\left(\sqrt{\Delta} - \tfrac{\delta}{2}I_d\right)\Big(\varphi^{\Lambda,1}_t + \zeta^1_t+(\sqrt{\Delta}+\tfrac{\delta}{2}I_d)^{-1}a\Big) + a\right) dt+d N_t\right) \\
	&= \left(\tfrac{(\gamma^1-\gamma^2)\Sigma}{2}  \left(\sqrt{\Delta} - \tfrac{\delta}{2}I_d\right) (\Delta -\tfrac{\delta^2}{4}I_d)^{-1}\delta a) -  \Sigma \left(\sqrt{\Delta} - \tfrac{\delta}{2}I_d\right) \Sigma^{-1}\mu_t^\Lambda\right) dt + \tfrac{(\gamma^1-\gamma^2)\Sigma}{2} dN_t \\
	&= \left(2\tfrac{\gamma^1-\gamma^2}{\gamma_1+\gamma_2}  \Sigma \left(\sqrt{\Delta} - \tfrac{\delta}{2}I_d\right) \Sigma^{-1} \delta \Lambda a -  \Sigma \left(\sqrt{\Delta} - \tfrac{\delta}{2}I_d\right) \Sigma^{-1}\mu_t^\Lambda\right) dt + \tfrac{(\gamma^1-\gamma^2)\Sigma}{2} dN_t \\
	&= \Sigma \left(\sqrt{\Delta} - \tfrac{\delta}{2}I_d\right) \Sigma^{-1} \left(2\tfrac{\gamma^1-\gamma^2}{\gamma_1+\gamma_2}\delta \Lambda a- \mu^\Lambda_t\right) dt + \tfrac{(\gamma^1-\gamma^2)\Sigma}{2} dN_t \\
	&= \left(\sqrt{\tfrac{\gamma_1+\gamma_2}{2}\tfrac{\Sigma \Lambda^{-1}}{2}+\tfrac{\delta^2}{4}I_d}-\tfrac{\delta}{2}I_d\right) \left(2\tfrac{\gamma_1-\gamma_2}{\gamma_1+\gamma_2}\delta \Lambda a-\mu^\Lambda_t\right)dt +\tfrac{(\gamma_1-\gamma_2)\Sigma}{2}dN_t,
	\end{align*}
	where we have used Lemma \ref{lem:matrix function:properties}(b) in the last equality.
	\end{proof}

Let us briefly discuss the comparative statics of the above formula. In line with Corollary~\ref{cor:homo}, the average liquidity premium $2\frac{\gamma_1-\gamma_2}{\gamma_1+\gamma_2}\delta \Lambda a$ and the corresponding volatility both vanish if the agents' risk aversions coincide. More generally, its size is proportional to the degree of heterogenity, measured by $\frac{\gamma_1-\gamma_2}{\gamma_1+\gamma_2}$, multiplied by the discount rate $\delta$, the trading cost $\Lambda$, and the trend $a$ of Agent $1$'s position. To understand the intuition behind this result, suppose that $a < 0$ so that Agent 1 is a net buyer and Agent 2 is a net seller. Since $\gamma_1 < \gamma_2$, Agent's 2 motive to sell dominates Agent's 1 motive to buy and hence an additional positive drift is required to clear the market with friction. Since these readjustments do not happen immediately but only gradually over time, the size of these effects is multiplied by the discount rate $\delta$: a higher demand for immediacy forces the more risk averse agent to pay a larger premium.

Even for correlated assets, the \emph{average} liquidity premium only depends on the trading cost for the respective asset and the corresponding imbalance in Agent 1 and 2's demands.
The linear scaling in the trading cost also shows that for small costs, the average value of the liquidity premium is much smaller than its standard deviations (which then scale with the square-root of the liquidity premium).

Finally, note that liquidity premia are mean reverting here even though the agents' endowments are not stationary. The reason is the sluggishness of the agents' portfolios with transaction costs: the stronger trading need of the more risk-averse agent is not realized immediately but only gradually. This endogenously leads to autocorrelated returns like in the reduced-form models from the frictionless portfolio choice literature~\cite{kim.omberg.96,wachter.02}. 

\medskip 

Next, we turn to a stationary model where endowment exposures are also mean-reverting as in \cite{christensen.larsen.14,garleanu.pedersen.16}. This generates an additional state variable, that appears in the corresponding liquidity premia as a stochastic mean-reversion level:

\begin{corollary}\label{cor:OU 2}
	Let the time horizon be infinite and consider two strategic agents with risk aversions $\gamma_1 < \gamma_2$, discount rate $\delta>0$, and endowment volatilities with Ornstein-Uhlenbeck dynamics:
	\begin{equation*}\label{eq:zeta2}	
	d\zeta^1_t=-\kappa \zeta^1_t dt  +dN_t, \qquad d\zeta^2_t =-d\zeta^1_t, \quad \zeta^1_0=\zeta^2_0=0,
	\end{equation*}
	for an $\mathbb{R}^d$-valued Brownian motion $N$ and a positive-definite mean-reversion matrix $\kappa \in \mathbb{R}^{d \times d}$. Then, the frictionless equilibrium return vanishes. The equilibrium return with transaction costs has Ornstein-Uhlenbeck-type dynamics with a stochastic mean-reversion level that is a constant multiple of the endowment levels:
\begin{align*}
	d\mu^\Lambda_t &=\Sigma \left(\sqrt{\Delta} - \frac{\delta}{2}I_d\right) \Sigma^{-1} \left(\frac{(\gamma^1-\gamma^2) \Sigma}{2} \Big(\kappa (\sqrt{\Delta}+\tfrac{\delta}{2} I_d +\kappa)^{-1}-(\sqrt{\Delta}-\tfrac{\delta}{2} I_d)^{-1}\kappa\Big)\zeta^1_t-\mu^\Lambda_t\right)dt  \\
	&\quad+\frac{(\gamma^1-\gamma^2)\Sigma}{2}dN_t,
\end{align*}
where 
\begin{equation*}
\Delta = \frac{\gamma_1+\gamma_2}{2}\frac{\Lambda^{-1}\Sigma}{2}+\frac{\delta^2}{4}I_d.
\end{equation*}
\end{corollary}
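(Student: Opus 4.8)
The plan is to mirror the proof of Corollary~\ref{cor:OU 1}, substituting the Ornstein--Uhlenbeck endowment and redoing only the conditional expectation that enters the feedback form of Agent~1's strategy. The frictionless claim is immediate: since $\zeta^1+\zeta^2=0$ and $\psi=0$, \eqref{eq:mcapm} gives $\mu\equiv 0$. For the frictional return, specialise Theorem~\ref{thm:main} to $N=2$, $\psi=0$, $\zeta^2=-\zeta^1$; exactly as in~\eqref{eq:eqmu} this gives $\mu^\Lambda_t=\tfrac{(\gamma^1-\gamma^2)\Sigma}{2}(\varphi^{\Lambda,1}_t+\zeta^1_t)$, so everything reduces to computing Agent~1's equilibrium position. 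By Lemma~\ref{lem:FBSDE2} (with $N=2$, $B=\tfrac{\gamma_1+\gamma_2}{2}\tfrac{\Lambda^{-1}\Sigma}{2}$, $\xi_t=-\zeta^1_t$; here $\zeta\in\scr{L}^2_\delta$ since $\delta>0$ and $\kappa$ is positive definite, so that $\zeta^1$ is $L^2$-bounded) together with Theorem~\ref{thm:fbsde:infinite} and the feedback representation~\eqref{eq:ODE:infinite} from its proof, Agent~1's trading rate is $\dot\varphi^{\Lambda,1}_t=\bar\xi^1_t-(\sqrt{\Delta}-\tfrac{\delta}{2}I_d)\varphi^{\Lambda,1}_t$, with $\Delta=\tfrac{\gamma_1+\gamma_2}{2}\tfrac{\Lambda^{-1}\Sigma}{2}+\tfrac{\delta^2}{4}I_d$ and $\bar\xi^1_t=-(\sqrt{\Delta}-\tfrac{\delta}{2}I_d)E[\int_t^\infty(\sqrt{\Delta}+\tfrac{\delta}{2}I_d)e^{-(\sqrt{\Delta}+\frac{\delta}{2}I_d)(u-t)}\zeta^1_u\,du\mid\mathcal F_t]$, precisely as in Corollary~\ref{cor:OU 1}.

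The one genuinely new computation is $\bar\xi^1_t$ for the Ornstein--Uhlenbeck endowment. From $d\zeta^1_t=-\kappa\zeta^1_t\,dt+dN_t$ and $\zeta^1_0=0$ one gets $\zeta^1_u=e^{-\kappa(u-t)}\zeta^1_t+\int_t^u e^{-\kappa(u-s)}dN_s$ for $u\ge t$, hence $E[\zeta^1_u\mid\mathcal F_t]=e^{-\kappa(u-t)}\zeta^1_t$. Substituting and evaluating the resulting matrix integral $\int_0^\infty(\sqrt{\Delta}+\tfrac{\delta}{2}I_d)e^{-(\sqrt{\Delta}+\frac{\delta}{2}I_d+\kappa)v}dv=(\sqrt{\Delta}+\tfrac{\delta}{2}I_d)(\sqrt{\Delta}+\tfrac{\delta}{2}I_d+\kappa)^{-1}$ gives $\bar\xi^1_t$ as a constant matrix times $\zeta^1_t$. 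Writing $P=\sqrt{\Delta}+\tfrac{\delta}{2}I_d$, $Q=\sqrt{\Delta}-\tfrac{\delta}{2}I_d$ (so $P-Q=\delta I_d$ and $PQ=QP=\tfrac{\gamma_1+\gamma_2}{2}\tfrac{\Lambda^{-1}\Sigma}{2}$) and using $P(P+\kappa)^{-1}=I_d-\kappa(P+\kappa)^{-1}$, the feedback form collapses to $\dot\varphi^{\Lambda,1}_t=-Q(\varphi^{\Lambda,1}_t+\zeta^1_t-\kappa(P+\kappa)^{-1}\zeta^1_t)$.

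To finish, differentiate $\mu^\Lambda_t=\tfrac{(\gamma^1-\gamma^2)\Sigma}{2}(\varphi^{\Lambda,1}_t+\zeta^1_t)$ using $d\varphi^{\Lambda,1}_t=\dot\varphi^{\Lambda,1}_t\,dt$ and $d\zeta^1_t=-\kappa\zeta^1_t\,dt+dN_t$, insert the feedback form, and use $\varphi^{\Lambda,1}_t+\zeta^1_t=\tfrac{2}{\gamma^1-\gamma^2}\Sigma^{-1}\mu^\Lambda_t$: the $\mu^\Lambda$-terms combine to $-\Sigma Q\Sigma^{-1}\mu^\Lambda_t\,dt$, and the $\zeta^1_t$-drift $\tfrac{(\gamma^1-\gamma^2)\Sigma}{2}(Q\kappa(P+\kappa)^{-1}-\kappa)\zeta^1_t$ is rewritten (pulling out $\Sigma Q\Sigma^{-1}$ and using $QQ^{-1}=I_d$) as $\Sigma Q\Sigma^{-1}$ times $\tfrac{(\gamma^1-\gamma^2)\Sigma}{2}(\kappa(P+\kappa)^{-1}-Q^{-1}\kappa)\zeta^1_t$. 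Re-substituting $P=\sqrt{\Delta}+\tfrac{\delta}{2}I_d$, $Q=\sqrt{\Delta}-\tfrac{\delta}{2}I_d$ yields exactly the asserted Ornstein--Uhlenbeck-type dynamics with the stated mean-reversion level; the diffusion term is $\tfrac{(\gamma^1-\gamma^2)\Sigma}{2}dN_t$ directly.

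The linear-FBSDE well-posedness and the primary-matrix-function calculus are imported verbatim from Lemma~\ref{lem:FBSDE2}, Theorem~\ref{thm:fbsde:infinite} and Appendix~B, so the real work is the matrix bookkeeping of the last two paragraphs, where one must track commutativity: $\sqrt{\Delta}$, $P$, $Q$ and $\Lambda^{-1}\Sigma$ are primary matrix functions of $\Lambda^{-1}\Sigma$ and hence mutually commute, but $\kappa$ is a priori not assumed to. The evaluation of $\int_0^\infty e^{-Pv}e^{-\kappa v}dv$ above uses $e^{-Pv}e^{-\kappa v}=e^{-(P+\kappa)v}$, i.e.\ $[\kappa,\Lambda^{-1}\Sigma]=0$; I expect this to be the point needing the most care. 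Without it the integral equals the unique solution $G$ of the Sylvester equation $PG+G\kappa=I_d$, and the mean-reversion level is the corresponding expression in $G$, which collapses to the stated one when $\kappa$ commutes with $\Lambda^{-1}\Sigma$.
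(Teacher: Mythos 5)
Your proof is correct and follows essentially the same route as the paper's: specialise Theorem~\ref{thm:main} to $N=2$, $\psi=0$, $\zeta^2=-\zeta^1$ to get \eqref{eq:eqmu}; reuse the feedback form \eqref{eq:ODE:infinite} from Theorem~\ref{thm:fbsde:infinite}; recompute the target process $\bar\xi^1$ using $E[\zeta^1_u\mid\mathcal F_t]=e^{-\kappa(u-t)}\zeta^1_t$ and an elementary matrix integral; and then differentiate \eqref{eq:eqmu}. The intermediate manipulations via $P(P+\kappa)^{-1}=I_d-\kappa(P+\kappa)^{-1}$ and the factorisation of $\Sigma(\sqrt{\Delta}-\tfrac{\delta}{2}I_d)\Sigma^{-1}$ out of the drift are exactly what the paper does, and your verification that $\zeta^1\in\scr{L}^2_\delta$ (because $\delta>0$ and $\kappa$ is positive definite, so the OU process has bounded second moments) correctly supplies a hypothesis of Lemma~\ref{lem:FBSDE2} that the paper leaves implicit.

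Your flag on commutativity is well placed and genuinely worth spelling out: the step $\int_0^\infty\big(\sqrt{\Delta}+\tfrac{\delta}{2}I_d\big)e^{-(\sqrt{\Delta}+\frac{\delta}{2}I_d)v}e^{-\kappa v}\,dv = \big(\sqrt{\Delta}+\tfrac{\delta}{2}I_d\big)\big(\sqrt{\Delta}+\tfrac{\delta}{2}I_d+\kappa\big)^{-1}$, which the paper invokes as ``an elementary integration,'' requires $[\kappa,\Lambda^{-1}\Sigma]=0$ so that $e^{-(\sqrt{\Delta}+\frac{\delta}{2}I_d)v}$ and $e^{-\kappa v}$ merge into a single exponential (commutativity with $\Lambda^{-1}\Sigma$ propagates to $\sqrt{\Delta}$ by Lemma~\ref{lem:matrix function:properties}(a)). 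The corollary's hypotheses as stated do not guarantee this; without it the integral equals $\big(\sqrt{\Delta}+\tfrac{\delta}{2}I_d\big)G$ with $G$ the unique solution of the Sylvester equation $\big(\sqrt{\Delta}+\tfrac{\delta}{2}I_d\big)G+G\kappa=I_d$, and the stated mean-reversion coefficient changes accordingly. The assumption holds automatically for $d=1$ (the case discussed after the corollary) and whenever $\kappa$ is a polynomial in, or any primary matrix function of, $\Lambda^{-1}\Sigma$; in general it should be imposed explicitly.
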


\begin{proof}
	As in Corollary~\ref{cor:OU 1}, the frictionless equilibrium return vanishes by \eqref{eq:mcapm}, its frictional counterpart is given by~\eqref{eq:eqmu}, and Agent 1's optimal trading rate is \eqref{eq:ex:trading rate}. The only change is the target process $\bar{\xi}^1$, which can be computed as follows in the present context:
	\begin{align*}
	\bar{\xi}^1_t &= -\left(\sqrt{\Delta} -\tfrac{\delta}{2}I_d \right) \int_t^\infty \left(\sqrt{\Delta} +\tfrac{\delta}{2} I_d \right) e^{-(\sqrt{\Delta}+\frac{\delta}{2}I_d)(u-t)}E[\zeta^1_u|\mathcal{F}_t] du \\
	&= -\left(\sqrt{\Delta} - \tfrac{\delta}{2}I_d\right) (\sqrt{\Delta}+\tfrac{\delta}{2}I_d) (\sqrt{\Delta}+\tfrac{\delta}{2}I_d+\kappa)^{-1} \zeta^1_t.
	\end{align*}
	Here, we have used the expectation $E[\zeta^1_u|\mathcal{F}_t]=e^{-\kappa(u-t)}\zeta^1_t$ of Ornstein-Uhlenbeck processes and an elementary integration for the last equality. Plugging this back into \eqref{eq:ex:trading rate} yields
	\begin{equation*}
	\dot{\varphi}^{\Lambda,1}_t=-\left(\sqrt{\Delta} - \tfrac{\delta}{2}I_d\right)\left(\varphi^{\Lambda,1}_t + (\sqrt{\Delta}+\tfrac{\delta}{2}I_d) (\sqrt{\Delta}+\tfrac{\delta}{2}I_d+\kappa)^{-1} \zeta^1_t\right).
	\end{equation*}
	Inserting this \eqref{eq:eqmu} in turn leads to the asserted Ornstein-Uhlenbeck dynamics:
	\begin{align*}
	d \mu^\Lambda_t &= \tfrac{(\gamma^1-\gamma^2)\Sigma}{2} (\dot \varphi^{\Lambda,1}_t dt+d\zeta^1_t) \\
	&= \tfrac{(\gamma^1-\gamma^2)\Sigma}{2} \left(\left(-\left(\sqrt{\Delta} - \tfrac{\delta}{2}I_d\right)\Big(\varphi^{\Lambda,1}_t + (\sqrt{\Delta}+\tfrac{\delta}{2}I_d) (\sqrt{\Delta}+\tfrac{\delta}{2}I_d+\kappa)^{-1} \zeta^1_t\Big)-\kappa \zeta^1_t\right) dt+d N_t\right)\\
	&=\Sigma \left(\sqrt{\Delta} - \tfrac{\delta}{2}I_d\right) \Sigma^{-1} \left(\tfrac{(\gamma^1-\gamma^2) \Sigma}{2} \Big(\kappa (\sqrt{\Delta}+\tfrac{\delta}{2} I_d +\kappa)^{-1}-(\sqrt{\Delta}-\tfrac{\delta}{2} I_d)^{-1}\kappa\Big)\zeta^1_t-\mu^\Lambda_t\right)dt \\
	&\quad +\tfrac{(\gamma^1-\gamma^2)\Sigma}{2}dN_t. \qedhere
		\end{align*}
	\end{proof}
	
For a single risky asset ($d=1$), the mean-reversion level of the liquidity premium in Corollary~\ref{cor:OU 2} can be rewritten as 
\begin{align*}
 \frac{(\gamma^2-\gamma^1) \Sigma \kappa(\delta+\kappa)}{2(\sqrt{\Delta}+\tfrac{\delta}{2} I_d +\kappa)(\sqrt{\Delta}-\tfrac{\delta}{2} I_d)}\zeta^1_t =2 \frac{\gamma_2 - \gamma_1}{\gamma^1+ \gamma_2} \Lambda \kappa (\kappa + \delta) \left(1 - \frac{\kappa}{\sqrt{\Delta}+\tfrac{\delta}{2} I_d +\kappa} \right) \zeta^1_t.
\end{align*}
Since $\gamma_1 < \gamma_2$, the coefficient of $\zeta^1_t$ in this expression is positive, so that the sign of the liquidity premium depends on Agent 1's risky exposure $\zeta^1_t$. If $\zeta^1_t>0$, mean-reversion implies that Agent~1's exposure will tend to decrease, so that this agent will want to buy back part of her negative hedging position in the risky asset. Conversely, Agent 2 will tend to sell risky shares. Since Agent 2 is more risk averse, her selling motive dominates and needs to be offset by an additional positive expected return to clear the market, in line with the sign of the above expression.  

\section{Conclusion}
\label{sec:conclusion}
In this paper, we develop a tractable risk-sharing model that allows to study how trading costs are reflected in expected returns. In a continuous-time model populated by heterogenous mean-variance investors, we characterize the unique equilibrium by a system of coupled but linear FBSDEs. This system can be solved in terms of matrix power series, and leads to fully explicit equilibrium dynamics in a number of concrete settings.

If all agents are homogenous, positive liquidity premia are obtained if the asset supply expands over time. For a fixed asset supply but heterogenous agents, the sign of the liquidity premia compared to the frictionless case is determined by the trading needs of the more risk averse agents. Since these have a stronger motive to trade, they need to compensate their more risk-tolerant counterparties accordingly. The sluggishness of illiquid portfolios also introduces autocorrelation into the corresponding equilibrium expected returns even for fundamentals with independent increments.

Several extensions of the present model are intriguing directions for further research. One important direction concerns more general specifications of preferences (e.g., exponential rather than quadratic utilities) or trading costs (e.g., proportional instead of quadratic). Such variations are bound to destroy the linearity of the corresponding optimality conditions, but might still lead to tractable results in the small-cost limit similarly as for models with exogenous prices \cite{soner.touzi.13,moreau.al.15,kallsen.muhlekarbe.17}. Another important direction for further research concerns extensions where the price volatility is no longer assumed to be exogenously given, but is instead determined as an output of the equilibrium. However, even the simplest versions of such models are also bound to lead to nonlinear FBSDEs.

\appendix

\section{Existence and Uniqueness of Linear FBSDEs}\label{a:FBSDE}

For the determination of both individually optimal trading strategies in Section~\ref{s:indopt} and equilibrium returns in Section~\ref{s:eqopt}, this appendix develops existence and uniqueness results for systems of coupled but linear FBSDEs:\footnote{Due to the degeneracy of the forward component \eqref{eq:fbsde1}, general FBSDE theory as in \cite{delarue.02} only yields local existence. However, the direct arguments developed below allow to establish global existence and also lead to explicit representations of the solution in terms of matrix power series.}
\begin{align}
d\varphi_t &=\dot{\varphi}_t dt, \quad \varphi_0=0, \quad t \in \scr{T}, \label{eq:fbsde1}\\
d\dot{\varphi}_t&=dM_t +B\left(\varphi_t-\xi_t\right)dt + \delta \dot \varphi_t dt, \quad t \in \scr{T}, \label{eq:fbsde2}
\end{align}
where $B \in \mathbb{R}^{\ell \times \ell}$ has only positive eigenvalues, $\delta \geq 0$, and $\xi \in \scr{L}^2_\delta$. If $T<\infty$, \eqref{eq:fbsde2} is complemented by the \emph{terminal condition} 
\begin{equation}
\label{eq:fbsde3:terminal}
\dot \varphi_T = 0.
\end{equation}
If $T = \infty$, we assume that $\delta > 0$ and the terminal condition is replaced by the \emph{transversality conditions} implicit in $\varphi, \dot \varphi \in \scr{L}^2_\delta$ for $\delta>0$. 
 A \emph{solution} of (\ref{eq:fbsde1}--\ref{eq:fbsde2}) is a triple $(\varphi, \dot \varphi, M)$ for which $\varphi, \dot \varphi \in  \scr{L}^2_\delta$ and $M$ is a martingale on $\scr{T}$ with finite second moments.

We first consider the infinite time-horizon case. In this case, the linear FBSDEs (\ref{eq:fbsde1}-\ref{eq:fbsde2}) can be solved using matrix exponentials similarly as in \cite{garleanu.pedersen.16}. To this end, we first establish a technical result stating that the martingale $M$ appearing in the solution of the FBSDE (\ref{eq:fbsde1}--\ref{eq:fbsde2}) automatically belongs to $\scr{M}^2_\delta$:

\begin{proposition}
\label{prop:BSDE:martingale}
Let $T = \infty$. If $(\varphi, \dot  \varphi, M)$ is a solution to the FBSDE (\ref{eq:fbsde1}--\ref{eq:fbsde2}), then $M \in \scr{M}^2_\delta$.
\end{proposition}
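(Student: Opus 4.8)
The plan is to exploit the structure of the BSDE component \eqref{eq:fbsde2} together with the integrability already built into the notion of solution, namely $\varphi, \dot\varphi \in \scr{L}^2_\delta$, to bootstrap square-integrability of the martingale part. The starting point is the identity obtained by integration by parts, as in \eqref{eq:pf:thm:indopt1:int parts}: for $0 \le s \le t < \infty$,
\begin{equation*}
e^{-\delta t}\dot\varphi_t - e^{-\delta s}\dot\varphi_s = \int_s^t e^{-\delta u}\,dM_u + \int_s^t e^{-\delta u} B(\varphi_u - \xi_u)\,du .
\end{equation*}
Here the stochastic integral $N_t := \int_0^t e^{-\delta u}\,dM_u$ is a local martingale; I would fix a localizing sequence $(\tau_k)$ of stopping times. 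The idea is then to take $s=0$, stop at $t \wedge \tau_k$, square, take expectations, and use the Itô isometry on the stopped martingale to express $E\big[\int_0^{t\wedge\tau_k} e^{-2\delta u}\,d[M]_u\big]$ in terms of $E\big[(e^{-\delta(t\wedge\tau_k)}\dot\varphi_{t\wedge\tau_k})^2\big]$, $E[\dot\varphi_0^2]$, and the Lebesgue-integral term $\int_0^{t\wedge\tau_k} e^{-\delta u}B(\varphi_u-\xi_u)\,du$.

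The key steps, in order, are: (i) write the stopped, squared identity and apply the elementary bound $(a+b+c)^2 \le 3(a^2+b^2+c^2)$ to isolate $\big(\int_0^{t\wedge\tau_k} e^{-\delta u}\,dM_u\big)^2$; (ii) bound the drift term uniformly in $t$ and $k$ by Cauchy--Schwarz: $\big\| \int_0^{t\wedge\tau_k} e^{-\delta u} B(\varphi_u-\xi_u)\,du \big\|^2 \le C \int_0^\infty e^{-\delta u}\|\varphi_u-\xi_u\|^2\,du \cdot \int_0^\infty e^{-\delta u}\,du$, which has finite expectation because $\varphi, \xi \in \scr{L}^2_\delta$ and $\delta>0$ makes $e^{-\delta u}$ integrable on $[0,\infty)$ (absorbing the operator norm of $B$ into $C$); (iii) control $E\big[(e^{-\delta(t\wedge\tau_k)}\dot\varphi_{t\wedge\tau_k})^2\big]$ — this is the delicate point, since a pointwise bound on $\dot\varphi_t$ is not available, only $\dot\varphi \in \scr{L}^2_\delta$; (iv) apply the Itô isometry to the stopped martingale, then let $k \to \infty$ via monotone convergence on $E\big[\int_0^{t\wedge\tau_k} e^{-2\delta u}\,d[M]_u\big]$ and uniform integrability / Fatou on the remaining terms; (v) finally let $t \to \infty$, again by monotone convergence, to conclude $E\big[\int_0^\infty e^{-2\delta u}\,d[M]_u\big] < \infty$, i.e. $M \in \scr{M}^2_\delta$.

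The main obstacle I anticipate is step (iii): getting a bound on $\sup_{t} E\big[(e^{-\delta(t\wedge\tau_k)}\dot\varphi_{t\wedge\tau_k})^2\big]$, or at least a bound that survives the limit, from the mere fact that $\dot\varphi \in \scr{L}^2_\delta$. The clean way around this is to avoid fixing a single large $t$ and instead integrate in $t$ as well: multiply the stopped squared identity by $\delta e^{-\delta t}$ (or equivalently integrate $\int_0^\infty \delta e^{-\delta t}(\cdots)\,dt$), so that the troublesome term becomes $\delta \int_0^\infty e^{-\delta t} E\big[(e^{-\delta(t\wedge\tau_k)}\dot\varphi_{t\wedge\tau_k})^2\big]\,dt$, which is controlled by $E\big[\int_0^\infty e^{-\delta t}\|\dot\varphi_t\|^2\,dt\big] < \infty$ (bounding $t\wedge\tau_k \ge$ small times carefully, or simply using that $e^{-\delta(t\wedge\tau_k)}\|\dot\varphi_{t\wedge\tau_k}\| \le \sup_{u\le t} e^{-\delta u}\|\dot\varphi_u\|$ together with a Doob-type estimate is not needed if one integrates against $\delta e^{-\delta t}\,dt$ and uses Fubini). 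Alternatively, and perhaps most cleanly, one selects a deterministic sequence $t_j \uparrow \infty$ along which $E\big[(e^{-\delta t_j}\dot\varphi_{t_j})^2\big] \to 0$ — such a sequence exists precisely because $\int_0^\infty e^{-\delta t} E[\|\dot\varphi_t\|^2]\,dt < \infty$ forces the integrand to have a subsequence tending to $0$ — and runs steps (i)--(iv) along $t = t_j \wedge \tau_k$, so the boundary term vanishes in the limit and only the (finite) drift-term bound and the martingale term remain. This last route dovetails with the stopping-sequence argument already used in the proof of Lemma~\ref{thm:indopt1}, so it is the version I would write up.
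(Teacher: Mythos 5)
Your proposal follows essentially the same argument as the paper's proof: integration by parts to express $\int_0^t e^{-\delta s}\,dM_s$, the elementary bound $(a+b+c)^2 \le 3(a^2+b^2+c^2)$, Jensen/Cauchy--Schwarz on the drift term, a deterministic sequence $t_j\uparrow\infty$ along which $E[e^{-\delta t_j}\|\dot\varphi_{t_j}\|^2]\to 0$ (which exists because $\dot\varphi\in\scr{L}^2_\delta$), and monotone convergence for the quadratic variation. The one superfluous ingredient is your localizing sequence $(\tau_k)$: by the paper's definition of a solution, $M$ is already a martingale with finite second moments on $[0,\infty)$, so $\int_0^t e^{-\delta s}\,dM_s$ is a genuine $L^2$-martingale for each finite $t$ and It\^o's isometry applies directly, without localization.
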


\begin{proof}
Let $(\varphi, \dot  \varphi, M)$ be a solution to the FBSDE (\ref{eq:fbsde1}--\ref{eq:fbsde2}), where $\varphi, \dot \varphi \in  \scr{L}^2_\delta$ and $M$ is a martingale on $[0, \infty)$ with finite second moments.
Fix $t \in (0, \infty)$. Then integration by parts yields
\begin{equation}
\label{eq:pf:prop:BSDE:martingale:int parts}
e^{-\delta t} \dot \varphi_t = \dot \varphi_0 + \int_0^t e^{-\delta s} dM_s + \int_0^t e^{-\delta s} B (\varphi_s - \xi_s) ds.
\end{equation}
Let $\Vert \cdot \Vert_2$ be the Euclidean norm in $\mathbb{R}^{\ell}$ and $\Vert \cdot \Vert_{\max}$ the maximum norm on $\mathbb{R}^{\ell \times \ell}$. Rearranging \eqref{eq:pf:prop:BSDE:martingale:int parts} and using subsequently the elementary inequality $(a + b + c)^2 \leq 3(a^2 + b^2 + c^2)$ for $a, b, c \in \mathbb{R}$, the elementary estimate $\Vert A x \Vert_2 \leq \sqrt{\ell} \Vert A \Vert_{\max} \Vert x \Vert_2$, and Jensen's inequality gives
\begin{align}
\left\Vert \int_0^t  e^{-\delta s} dM_s \right\Vert_2^2 &\leq 3 \left(\Vert \dot \varphi_0 \Vert_2^2 + e^{- 2 \delta t} \Vert \dot \varphi_t \Vert^2_2 + \left(\int_0^t e^{-\delta s} \sqrt{\ell} \Vert B \Vert_{\max} \Vert \varphi_s-\xi_s \Vert_2 ds\right)^2 \right) \notag \\
&\leq 3 \left(\Vert \dot \varphi_0 \Vert_2^2 + e^{- \delta t} \Vert \dot \varphi_t \Vert^2_2 + \frac{\ell}{\delta} \Vert B \Vert^2_{\max} \int_0^t e^{-\delta s}  \Vert \varphi_s-\xi_s \Vert^2_2 ds \right). 
\label{eq:pf:prop:BSDE:martingale:estimate 1}
\end{align}
The definitions of the maximum and Euclidean norms, the estimate $|[N^1, N^2 ]| \leq \tfrac{1}{2} ([N^1 ]+ [N^2])$ for real-valued local martingales $N^1$ and $N^2$, and It\^o's isometry give, for $t \in (0, \infty)$,
\begin{align}
E\left[\Big\Vert \int_0^t e^{-2 \delta s } d[M]_s \Big \Vert_{\max} \right] &= E\left[ \max_{i, j \in \{1, \ldots, \ell\}} \int_0^t e^{-2 \delta s } d[M^i, M^j]_s \right] \notag \\
&\leq \frac{1}{2} E\left[ \max_{i \in \{1, \ldots, \ell\}} \int_0^t e^{-2 \delta s } d[M^i]_s + \max_{j \in \{1, \ldots, \ell\}} \int_0^t e^{-2 \delta s } d[M^j]_s \right] \notag \\
&= E\left[ \max_{i \in \{1, \ldots, \ell\}} \int_0^t e^{-2 \delta s } d[M^i]_s \right] \leq E\left[ \sum_{i =1}^\ell \int_0^t e^{-2 \delta s } d[M^i]_s \right] \notag \\
&= E\left[ \sum_{i =1}^\ell \left(\int_0^t e^{-\delta s } dM^i_s\right)^2 \right] = E\left[\left\Vert \int_0^t  e^{-\delta s} dM_s \right\Vert_2^2\right].
\label{eq:pf:prop:BSDE:martingale:estimate 2}
\end{align}
Since $\dot \varphi \in \scr{L}^2_\delta$, there exists an increasing sequence $(t_k)_{k \in \mathbb{N}}$ with $\lim_{k \to \infty} t_k =\infty$ such that 
\begin{equation}
\label{eq:pf:prop:BSDE:martingale:limit}
\lim_{n \to \infty} E\left[e^{- \delta t_k} \Vert \dot \varphi_{t_k} \Vert^2_2 \right] = 0.
\end{equation}
Monotone convergence, (\ref{eq:pf:prop:BSDE:martingale:estimate 1}-\ref{eq:pf:prop:BSDE:martingale:limit}) and $\varphi, \xi \in \scr{L}^2_\delta$ in turn yield
\begin{align*}
E\left[\Big\Vert \int_0^\infty e^{-2 \delta s } d[M]_s \Big \Vert_{\max} \right] &= \lim_{k \to \infty} E\left[\Big\Vert \int_0^{t_k} e^{-2 \delta s } d[M]_s \Big\Vert_{\max} \right] \leq  \lim_{k \to \infty} E\left[\left\Vert \int_0^{t_k}  e^{-\delta s} dM_s \right\Vert_2^2\right] \\
&\leq 3 \ell \left(\Vert \dot \varphi_0 \Vert_2^2 + \frac{\ell}{\delta} \Vert B \Vert^2_{\max} E\left[ \int_0^\infty e^{-\delta s}  \Vert \varphi_s-\xi_s \Vert^2_2 ds \right] \right) < \infty.
\end{align*}
Thus, $M \in \scr{M}^2_\delta$ as claimed.
\end{proof}

\begin{theorem}\label{thm:fbsde:infinite}
	Suppose that $T = \infty$, $\delta >  0$, and the matrix $B$ from \eqref{eq:fbsde2} has only positive eigenvalues. Set $\Delta = B + \frac{\delta^2}{4} I_\ell$. Then, the unique solution of the FBSDE (\ref{eq:fbsde1}--\ref{eq:fbsde2}) is given by 
	\begin{equation}\label{eq:varphi:infinite}
	\varphi_t= \int_0^t \left(e^{-(\sqrt{\Delta} - \frac{\delta}{2} I_\ell)(t-s)} \bar{\xi}_s \right)ds,
	\end{equation}
	where
	\begin{align}
	\label{def:bar xi:infinite}
	\bar{\xi}_t &=\left(\sqrt{\Delta} - \tfrac{\delta}{2} I_\ell\right) E\left[\int_t^\infty \left(\sqrt{\Delta} + \tfrac{\delta}{2} I_\ell\right) e^{-(\sqrt{\Delta}+ \frac{\delta}{2} I_\ell) (s-t)}\xi_s ds\,\Big|\, \mathcal{F}_t\right].
	\end{align}
\end{theorem}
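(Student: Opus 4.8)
The plan is to proceed in three stages: first establish existence by verifying that the proposed $(\varphi,\dot\varphi)$ together with a suitable martingale $M$ solves the FBSDE and satisfies the integrability/transversality requirements; then establish uniqueness by a reduction argument exploiting linearity; and finally sort out the matrix-function bookkeeping that makes the formulas meaningful. For existence, I would start by checking that $\bar\xi$ in \eqref{def:bar xi:infinite} is well defined and lies in $\scr{L}^2_\delta$: since $B$ has only positive eigenvalues, $\Delta=B+\frac{\delta^2}{4}I_\ell$ also has only positive eigenvalues, so by the theory of primary matrix functions (Appendix~B) the square root $\sqrt\Delta$ is well defined with eigenvalues of real part at least $\delta/2$; hence $\sqrt\Delta+\frac\delta2 I_\ell$ and $\sqrt\Delta-\frac\delta2 I_\ell$ both have spectra in the open right half-plane, the exponentials $e^{-(\sqrt\Delta\pm\frac\delta2 I_\ell)u}$ decay exponentially, and the conditional expectation defining $\bar\xi_t$ converges in $\scr{L}^2_\delta$ because $\xi\in\scr{L}^2_\delta$ (a standard estimate using Jensen/Minkowski and the exponential decay of the kernel, together with the $e^{-\delta s}$ weight). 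Then I would differentiate \eqref{eq:varphi:infinite} to get $\dot\varphi_t=\bar\xi_t-(\sqrt\Delta-\frac\delta2 I_\ell)\varphi_t$, which is the feedback representation \eqref{eq:ODE:infinite} referenced in the earlier sections, and check $\varphi,\dot\varphi\in\scr{L}^2_\delta$ again via the exponential decay of $e^{-(\sqrt\Delta-\frac\delta2 I_\ell)(t-s)}$ and $\bar\xi\in\scr{L}^2_\delta$.

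The heart of existence is verifying the backward equation \eqref{eq:fbsde2}. I would compute $d\dot\varphi_t$ from the feedback form: $d\dot\varphi_t=d\bar\xi_t-(\sqrt\Delta-\frac\delta2 I_\ell)\dot\varphi_t\,dt$. The process $\bar\xi$ is a conditional expectation of an $\mathcal F_t$-measurable affine transform plus a remainder, so one finds $d\bar\xi_t=(\sqrt\Delta-\frac\delta2 I_\ell)(\sqrt\Delta+\frac\delta2 I_\ell)\bar\xi_t^{(2)}\,dt + dM_t$ for an appropriate martingale $M$ — more precisely, writing $\bar\xi_t=(\sqrt\Delta-\frac\delta2 I_\ell)\,E[\int_t^\infty K_{s-t}\xi_s\,ds\mid\mathcal F_t]$ with $K_u=(\sqrt\Delta+\frac\delta2 I_\ell)e^{-(\sqrt\Delta+\frac\delta2 I_\ell)u}$, I would introduce the martingale $\tilde M_t=E[\int_0^\infty e^{-(\sqrt\Delta+\frac\delta2 I_\ell)s}\xi_s\,ds\mid\mathcal F_t]$ and apply integration by parts to express $\bar\xi_t$ in terms of $\tilde M_t$ and a Lebesgue integral, then differentiate. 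Collecting terms and using the polynomial identity $(\sqrt\Delta-\frac\delta2 I_\ell)(\sqrt\Delta+\frac\delta2 I_\ell)=\Delta-\frac{\delta^2}{4}I_\ell=B$ (valid since all these are polynomials in the single matrix $\Delta$, hence commute — this is exactly where Lemma~\ref{lem:matrix function:properties}(b) is used), one should recover $d\dot\varphi_t=dM_t+B(\varphi_t-\xi_t)\,dt+\delta\dot\varphi_t\,dt$. Finally I would invoke Proposition~\ref{prop:BSDE:martingale} to conclude $M\in\scr M^2_\delta$, so in particular $M$ has finite second moments and the triple is a genuine solution in the sense of the appendix.

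For uniqueness, suppose $(\varphi^1,\dot\varphi^1,M^1)$ and $(\varphi^2,\dot\varphi^2,M^2)$ are two solutions and set $\Delta\varphi=\varphi^1-\varphi^2$ etc.; then $(\Delta\varphi,\Delta\dot\varphi,\Delta M)$ solves the homogeneous FBSDE (with $\xi\equiv0$), $\Delta\varphi_0=0$, $\Delta\varphi,\Delta\dot\varphi\in\scr{L}^2_\delta$. I would show this forces $\Delta\varphi\equiv0$. One clean route: from the first-order / integration-by-parts identity as in the proof of Lemma~\ref{thm:indopt1} — i.e. integrating $e^{-\delta s}\Delta\dot\varphi_s$ by parts, using the transversality along a sequence $t_k\to\infty$ provided by $\Delta\dot\varphi\in\scr{L}^2_\delta$ and the martingale convergence theorem (Proposition~\ref{prop:BSDE:martingale} gives $\Delta M\in\scr M^2_\delta$, so $\int_0^\infty e^{-\delta s}d\Delta M_s$ converges) — one obtains the representation $\Delta\dot\varphi_t=-\,e^{\delta t}E[\int_t^\infty e^{-\delta s}B\,\Delta\varphi_s\,ds\mid\mathcal F_t]$. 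Combined with $\Delta\dot\varphi_t=\frac{d}{dt}\Delta\varphi_t$ and $\Delta\varphi_0=0$, this is a linear fixed-point equation; a Gronwall-type estimate against the $\scr{L}^2_\delta$ norm, using that $\delta>0$ makes the weight summable and the operator a strict contraction (exactly the mechanism behind Theorem~\ref{thm:fbsde:infinite}'s explicit solution), yields $\Delta\varphi\equiv0$, hence $\Delta\dot\varphi\equiv0$ and $\Delta M\equiv0$. Alternatively, one diagonalizes/block-triangularizes $B$ via its primary matrix function structure and reduces to the scalar case already handled in \cite{bank.al.17}, which is perhaps the most transparent; I would present whichever is shorter given the appendix's existing lemmas.

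The step I expect to be the main obstacle is the clean derivation of the dynamics of $\bar\xi$ and the resulting cancellation producing exactly $B(\varphi_t-\xi_t)$ in \eqref{eq:fbsde2}: one must carefully separate the $\mathcal F_t$-measurable part of $\int_t^\infty$ from the martingale part, keep track of the two different exponential kernels $\sqrt\Delta\pm\frac\delta2 I_\ell$, and repeatedly use that all matrices involved are primary matrix functions of $\Delta$ (hence commute and obey the expected algebraic identities) — this is where Lemma~\ref{lem:matrix function:properties} and Appendix~B are indispensable, and where a purely scalar intuition can mislead if $B$ is not diagonalizable. The integrability bookkeeping (showing $\bar\xi\in\scr{L}^2_\delta$ and that the various conditional expectations are finite) is routine but must be done with the $e^{-\delta s}$ weights consistently; I would handle it once with a lemma-style estimate and reuse it.
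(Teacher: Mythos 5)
Your existence half follows the paper's verification step closely and is fine: check $\bar\xi\in\scr{L}^2_\delta$, differentiate \eqref{eq:varphi:infinite} to get $\dot\varphi_t=\bar\xi_t-(\sqrt\Delta-\tfrac\delta2 I_\ell)\varphi_t$, introduce the auxiliary martingale, differentiate $\bar\xi$, and use $(\sqrt\Delta-\tfrac\delta2 I_\ell)(\sqrt\Delta+\tfrac\delta2 I_\ell)=B$ to recover \eqref{eq:fbsde2}. The gap is in your uniqueness argument. Having reduced to the homogeneous FBSDE with $\Delta\varphi_0=0$ and the backward representation $\Delta\dot\varphi_t=-e^{\delta t}E[\int_t^\infty e^{-\delta s}B\,\Delta\varphi_s\,ds\mid\mathcal F_t]$, you propose to close by ``a Gronwall-type estimate / strict contraction in $\scr{L}^2_\delta$''. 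That does not work here: the induced operator on $\scr{L}^2_\delta$ has no smallness (its operator norm grows with $\Vert B\Vert$), and a forward Gronwall cannot close because $\Delta\dot\varphi_t$ depends on $\Delta\varphi$ on $[t,\infty)$, not $[0,t]$ --- this is precisely the two-point nature of the FBSDE. More decisively, such a soft argument would never use the hypothesis that $B$ has only positive eigenvalues, yet that hypothesis is essential: if $B$ had an eigenvalue $b\in(-\tfrac{\delta^2}{4},0)$, both roots of $\lambda^2=\delta\lambda+b$ lie in $(0,\delta)$, and $\varphi_t=c\,(e^{\lambda_+t}-e^{\lambda_-t})$ is a nontrivial solution of the homogeneous FBSDE with $\varphi_0=0$ that is in $\scr{L}^2_\delta$. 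So uniqueness genuinely fails outside the stated spectral assumption, and the contraction route cannot see this.

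What the paper does instead --- and what your uniqueness argument would need --- is the algebraic step you stop just short of. After the change of variables $\tilde\varphi_t=e^{-\delta t/2}\varphi_t$ and integration by parts one gets the $2\ell$-dimensional identity \eqref{eq:inter:infinite:A} with $C_2=\left(\begin{smallmatrix}0&I_\ell\\ \Delta&0\end{smallmatrix}\right)$; multiplying by $\tilde C_2=\left(\begin{smallmatrix}I_\ell&\sqrt\Delta^{-1}\\ \sqrt\Delta& I_\ell\end{smallmatrix}\right)$ splits $e^{-C_2 t}$ into $e^{\pm\sqrt\Delta t}$ blocks. The second block row then reads $\sqrt\Delta\,e^{-\sqrt\Delta u}\tilde\varphi_u+e^{-\sqrt\Delta u}\dot{\tilde\varphi}_u=\cdots$, and because every eigenvalue of $\sqrt\Delta$ is at least $\delta/2$, the left-hand side vanishes along a sequence $u_k\to\infty$ given $\varphi,\dot\varphi\in\scr{L}^2_\delta$ (together with Proposition~\ref{prop:BSDE:martingale} and martingale convergence for the right-hand side). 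This resolves the forward-backward coupling into the forward ODE $\dot\varphi_t=\bar\xi_t-(\sqrt\Delta-\tfrac\delta2 I_\ell)\varphi_t$, whose solution is forced by variation of constants; uniqueness is then immediate. The Jordan-form reduction you mention as an alternative is viable in principle, but you would have to handle complex and non-semisimple eigenvalues by hand, which the primary-matrix-function argument avoids. Two small slips: the commutativity you invoke is Lemma~\ref{lem:matrix function:properties}(a), not (b); and writing $\Delta\varphi$ for a difference of solutions collides with the matrix $\Delta=B+\tfrac{\delta^2}{4}I_\ell$ already in play.
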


\begin{proof}
Let $(\varphi,\dot \varphi, M)$ be a solution to the FBSDE (\ref{eq:fbsde1}--\ref{eq:fbsde2}) and define
	\begin{equation*}
	\tilde \varphi_t := e^{-\frac{\delta}{2}t} \varphi_t.
	\end{equation*}
	Using that
	\begin{align}\label{eq:changevar:infinite}
	\dot {\tilde\varphi}_t &= -\tfrac{\delta}{2} \tilde \varphi_t + e^{-\frac{\delta}{2} t} \dot \varphi_t,  \qquad d \dot {\tilde\varphi}_t = -\tfrac{\delta}{2} \dot {\tilde \varphi}_t dt + e^{-\frac{\delta}{2} t} d \dot  \varphi_t -\tfrac{\delta}{2} e^{-\frac{\delta}{2} t} \dot \varphi_t dt
	\end{align}
	and the FBSDE (\ref{eq:fbsde1}-\ref{eq:fbsde2}) for $(\varphi,\dot{\varphi})$, it follows that $(\tilde \varphi,\dot{\tilde \varphi})$ solves the FBSDE
	\begin{align}
	d\tilde \varphi_t &=\dot{\tilde \varphi}_t dt, \quad \tilde \varphi_0=0, \quad t \in \scr{T}, \label{eq:fbsde tilde 1:infinite}\\
	d\dot{\tilde \varphi}_t&=d\tilde M_t +B\left(\tilde \varphi_t- \tilde \xi_t\right)dt + \tfrac{\delta^2}{4} \tilde \varphi_t dt, \quad t \in \scr{T}, \label{eq:fbsde tilde 2}
	\end{align}
	where $d\tilde M_t = e^{-\frac{\delta}{2} t} dM_t$ and $\tilde \xi_t = e^{-\frac{\delta}{2} t} \xi_t$. In matrix notation, this equation can be rewritten as
	$$
	d(\tilde \varphi_t,\dot{\tilde \varphi}_t)^\top=C_1d\tilde M_t+C_2(\tilde \varphi_t,\dot{\tilde \varphi}_t)^\top dt-C_3\tilde \xi_tdt,
	$$
	with
	$$
	C_1=\begin{pmatrix} 0 \\ I_\ell \end{pmatrix}, \quad C_2=\begin{pmatrix} 0 & I_\ell \\ \Delta & 0 \end{pmatrix}, \quad C_3=\begin{pmatrix} 0 \\ B \end{pmatrix}.
	$$
	Integration by parts shows
	$$d(e^{-C_2 t}(\tilde \varphi_t,\dot{\tilde \varphi}_t)^\top)=e^{-C_2 t}C_1d\tilde M_t-e^{-C_2 t}C_3\tilde \xi_t dt,$$
	and in turn
	\begin{equation}\label{eq:inter:infinite:A}
	e^{-C_2 u}\begin{pmatrix} \tilde \varphi_u \\ \dot{\tilde \varphi}_u \end{pmatrix}= e^{-C_2 t}\begin{pmatrix} \tilde \varphi_t \\ \dot{\tilde \varphi}_t \end{pmatrix}+\int_t^u e^{-C_2 s}C_1d \tilde M_s -\int_t^u e^{-C_2 s}C_3\tilde \xi_s ds, \quad \mbox{for $t< u < \infty$.}
	\end{equation}
Multiplying \eqref{eq:inter:infinite:A} by the matrix
\begin{equation*}
\tilde C_2=\begin{pmatrix}  I_\ell & \sqrt{\Delta}^{-1} \\ \sqrt{\Delta} & I_\ell \end{pmatrix}
\end{equation*}
and setting  $H(t)=\tilde C_2 e^{-C_2 t}$ yields
	\begin{equation}\label{eq:inter:infinite}
H(u) \begin{pmatrix} \tilde \varphi_u \\ \dot{\tilde \varphi}_u \end{pmatrix}= H(t)\begin{pmatrix} \tilde \varphi_t \\ \dot{\tilde \varphi}_t \end{pmatrix}+\int_t^u H(s)C_1d \tilde M_s -\int_t^u \tilde H(s) C_3\tilde \xi_s ds, \quad \mbox{for $t< u < \infty$.}
\end{equation}	
It follows by induction that
\begin{equation*}
\tilde C_2 (- C_2 t)^{2n} =\begin{pmatrix}  \Delta^n & \Delta^{n - \tfrac{1}{2}}  \\ \Delta^{n + \tfrac{1}{2}} & \Delta^n \end{pmatrix} t^{2n},  \quad \tilde C_2 (- C_2 t)^{2n+1} =-\begin{pmatrix}  \Delta^{n+\frac{1}{2}} & \Delta^{n }  \\ \Delta^{n+1} & \Delta^{n+\frac{1}{2}} \end{pmatrix} t^{2n +1}, \quad n \geq 0.
\end{equation*}
Now, the power series for the exponential function allows to deduce
$$
	H(t)=\begin{pmatrix} e^{- \sqrt{\Delta} t} &  \sqrt{\Delta}^{-1} e^{- \sqrt{\Delta} t}\\ \sqrt{\Delta} e^{- \sqrt{\Delta} t} & e^{- \sqrt{\Delta} t} \end{pmatrix}.
	$$
Together with \eqref{eq:inter:infinite}, it follows that
\begin{align}
	\sqrt{\Delta} e^{- \sqrt{\Delta} u}\tilde \varphi_u + e^{- \sqrt{\Delta} u} \dot{\tilde \varphi}_u &= \sqrt{\Delta}e^{- \sqrt{\Delta} t} \tilde \varphi_t + e^{- \sqrt{\Delta} t} \dot{\tilde \varphi}_t \notag \\
	&\quad +\int_t^u e^{- \sqrt{\Delta} s} d\tilde M_s -\int_t^u e^{- \sqrt{\Delta} s} B \tilde \xi_s ds, \quad \mbox{for $t< u < \infty$.}
	\label{eq:pf:fbsde:infinite:prelimit}
\end{align}
By the assumption that $\varphi, \dot \varphi \in \scr{L}^1_\delta \subset  \scr{L}^2_\delta$, \eqref{eq:changevar:infinite} and the fact that all eigenvalues of $\sqrt{\Delta}$ are greater or equal than $\delta/2$ (because $B$ has only nonnegative eigenvalues), there exists an increasing sequence $(u_k)_{k \in \mathbb{N}}$ with $\lim_{k \to \infty} u_k = + \infty$ along which the left-hand side of  \eqref{eq:pf:fbsde:infinite:prelimit} converges a.s.~to zero. Moreover, since $M \in  \scr{M}^2_\delta$ by Proposition \ref{prop:BSDE:martingale} and $\tilde \xi \in  \scr{L}^2_\delta$, the martingale convergence theorem and monotone convergence (together with Jensen's inequality) -- also using that all eigenvalues of $\sqrt{\Delta}$ are greater or equal than $\delta/2$ -- imply that for $u \to \infty$ (and a fortiori along $(u_k)_{k \in \mathbb{N}}$) the right hand side of \eqref{eq:pf:fbsde:infinite:prelimit} converges a.s. to
\begin{equation}
	\label{eq:pf:fbsde:infinite:limit}
\sqrt{\Delta}e^{- \sqrt{\Delta} t} \tilde \varphi_t + e^{- \sqrt{\Delta} t} \dot{\tilde \varphi}_t +\int_t^\infty e^{- \sqrt{\Delta} s} d\tilde M_s -\int_t^\infty e^{- \sqrt{\Delta} s} B \tilde \xi_s ds.
\end{equation}
Together, these two limits show that \eqref{eq:pf:fbsde:infinite:limit} vanishes. Multiplying \eqref{eq:pf:fbsde:infinite:limit}  by $e^{\sqrt{\Delta} t}$, rearranging, and taking conditional expectations (using again that all eigenvalues of $\sqrt{\Delta}$ are larger than or equal to $\delta/2$) we obtain
	\begin{align*}
\dot {\tilde \varphi}_t  &= E\left[\int_t^\infty e^{-\sqrt{\Delta} (s-t)} B e^{-\frac{\delta}{2} s} \xi_s ds\Big| \mathcal{F}_t\right] - \sqrt{\Delta} \tilde \varphi_t.
	\end{align*}
	Now, \eqref{eq:changevar:infinite} and rearranging give
	\begin{align*}
\dot \varphi_t = E\left[\int_t^\infty e^{-\sqrt{\Delta} (s-t)} B e^{-\frac{\delta}{2} (s-t)} \xi_s ds\Big| \mathcal{F}_t\right] - \left(\sqrt{\Delta} - \tfrac{\delta}{2} I_\ell\right) \varphi_t.
	\end{align*}
Finally, since $B$ commutes with $e^{-\sqrt{\Delta} (s-t)}$ (as $B = (\sqrt{\Delta} - \frac{\delta}{2} I_\ell)(\sqrt{\Delta} + \frac{\delta}{2} I_\ell) $ and by Lemma \ref{lem:matrix function:properties}(a)) it follows that
	\begin{equation}\label{eq:ODE:infinite}
	\dot {\varphi}_t = \bar{\xi}_t- \left(\sqrt{\Delta} - \tfrac{\delta}{2} I_\ell\right) \varphi_t.
	\end{equation}
	By the variations of constants formula, this linear (random) ODE has the unique solution \eqref{eq:varphi:infinite}. If a solution of the FBSDE (\ref{eq:fbsde1}-\ref{eq:fbsde2}) exists, it therefore must be of the form \eqref{eq:varphi:infinite}.
	
	It remains to verify that \eqref{eq:varphi:infinite} indeed solves the FBSDE (\ref{eq:fbsde1}-\ref{eq:fbsde2}).  To this end, we first show that $\bar \xi \in \scr{L}^2_\delta$. Indeed, denote by $\Vert \cdot \Vert_2$ both the Euclidean norm in $\mathbb{R}^{\ell}$ and the spectral norm in $\mathbb{R}^{\ell \times \ell}$. Since all eigenvalues of $B$ are positive, there is $\varepsilon > 0$ such that all eigenvalues of $\sqrt{\Delta} + \tfrac{\delta}{2} I_\ell$ are greater or equal that $\delta + \varepsilon$. Hence, by Lemma \ref{lem:matrix function:properties}(c) and the definition of the spectral norm, it follows that
	\begin{equation*}
	\left \Vert e^{-(\sqrt{\Delta} + \tfrac{\delta}{2} I_\ell)t} \right \Vert_2 \leq e^{-(\delta + \varepsilon) t}, \quad t \in [0, \infty).
	\end{equation*}
	Thus, by the definition of $\bar \xi$ in \eqref{def:bar xi:infinite}, the fact that $B = (\sqrt{\Delta} - \frac{\delta}{2} I_\ell)(\sqrt{\Delta} + \frac{\delta}{2} I_\ell) $, Jensen's inequality and Fubini's theorem, we obtain
	\begin{align*}
E\left[\int_0^\infty e^{-\delta t} \Vert \bar \xi_t \Vert^2_2 d t\right] &\leq \frac{\Vert B \Vert^2_2}{\delta + \epsilon} \int_0^\infty e^{-\delta t} \int_t^\infty e^{-(\delta + \varepsilon) (s-t)} E\left[\Vert\xi_s \Vert^2_2\right] ds dt \\
&\leq \frac{\Vert B \Vert^2_2}{\delta + \varepsilon} \int_0^\infty  \left(\int_0^s  e^{ \varepsilon t} dt \right) e^{-(\delta + \varepsilon) s} E\left[\Vert\xi_s \Vert^2_2\right] ds \notag \\
&\leq \frac{\Vert B \Vert^2_2}{\epsilon(\delta + \varepsilon)} \int_0^\infty  e^{-\delta s} E\left[\Vert\xi_s \Vert^2_2\right] ds < \infty.
	\end{align*}
	Next, we show that $\varphi \in \scr{L}^2_\delta$. Arguing similarly as above, we have 
		\begin{equation*}
	\left \Vert e^{-(\sqrt{\Delta} - \tfrac{\delta}{2} I_\ell)t} \right \Vert_2 \leq e^{-\varepsilon t}, \quad t \in [0, \infty).
	\end{equation*}
	Thus, by the definition of $\varphi$ in \eqref{eq:varphi:infinite}, Jensen's inequality and Fubini's theorem and since $\bar \xi \in \scr{L}^2_\delta$ by the above arguments, we obtain
		\begin{align*}
	E\left[\int_0^\infty e^{-\delta t} \Vert \varphi_t \Vert^2_2 d t\right] &\leq \frac{1}{\varepsilon} \int_0^\infty e^{-\delta t} \int_0^t e^{-\epsilon (t -s)} E\left[\Vert \bar \xi_s \Vert^2_2\right] ds dt \\
	&\leq \frac{1}{\epsilon} \int_0^\infty  \left(\int_s^\infty  e^{ -(\delta + \varepsilon) t} dt \right) e^{ \varepsilon s} E\left[\Vert \bar \xi_s \Vert^2_2\right] ds \notag \\
	&= \frac{1}{\epsilon(\epsilon \delta + \epsilon)} \int_0^\infty  e^{-\delta s} E\left[\Vert \bar \xi_s \Vert^2_2\right] ds < \infty.
	\end{align*}
By definition, we have $\varphi_0=0$. Next, integration by parts shows that $\dot{\varphi}$ satisfies the ODE
	\eqref{eq:ODE:infinite}, and this yields $\dot{\varphi} \in \scr{L}^2_\delta$ (because $\varphi, \bar \xi \in \scr{L}^2_\delta$). Define the $\mathbb{R}^\ell$-valued square-integrable martingale $(\bar{M}_t)_{t \in [0, \infty)}$ by\footnote{Note that $\int_0^\infty e^{-\sqrt{\Delta} s} B \tilde \xi_s ds$ is square integrable because $\xi \in \scr{L}^2_\delta$ and all eigenvalues of $\sqrt{\Delta}$ are greater than $\delta/2$.}
	\begin{equation*}
	\bar{M}_t= E\left[\int_0^\infty e^{-\sqrt{\Delta} s} B \tilde \xi_s ds\Big| \mathcal{F}_t\right],
	\end{equation*}
where $\tilde \varphi_t := e^{-\frac{\delta}{2}t} \varphi_t$ and $\tilde \xi_t := e^{-\frac{\delta}{2}t} \xi_t$ as before. Then multiplying \eqref{eq:ODE:infinite} by the matrix $e^{-(\sqrt{\Delta} + \frac{\delta}{2}I_\ell)t}$ and using \eqref{eq:changevar:infinite} as well as $\Delta - \frac{\delta^2}{4} I_\ell = B$ gives, after some rearrangement,
	\begin{align*}
	e^{-\sqrt{\Delta} t} \dot {\tilde \varphi}_t  &= \bar M_t - \int_0^t e^{-\sqrt{\Delta} s} B \tilde \xi_s ds - \sqrt{\Delta} e^{-\sqrt{\Delta} t} \tilde \varphi_t.
	\end{align*}
	Taking differentials, we therefore obtain
$$
	-\sqrt{\Delta} e^{-\sqrt{\Delta} t} \dot {\tilde \varphi}_t dt + e^{-\sqrt{\Delta} t} d \dot{\tilde \varphi}_t = d \bar M_t - e^{-\sqrt{\Delta} t} B \tilde \xi_t dt -\sqrt{\Delta} e^{-\sqrt{\Delta} t}  \dot {\tilde \varphi}_t dt + \Delta e^{-\sqrt{\Delta} t}  \tilde \varphi_t dt.
$$
	Rearranging, multiplying by $e^{\sqrt{\Delta} t}$ and using that $\sqrt{\Delta}$ and $e^{\sqrt{\Delta} t}$ commute, it follows that 
	\begin{align*}
 d\dot{\tilde \varphi}_t  = e^{\sqrt{\Delta} t} d \bar M_t - B \tilde \xi_t dt + \Delta \tilde \varphi_t dt.
\end{align*}
	Finally, again taking into account~\eqref{eq:changevar:infinite} and defining the martingale $M$ (which has finite second moments) by
	\begin{equation*}
	d M_t = e^{(\sqrt{\Delta} + \frac{\delta}{2} I_\ell) t} d \bar M_t, \quad M_0 = \bar M_0,
	\end{equation*}
	we obtain that $\varphi$ from  \eqref{eq:varphi:infinite} indeed satisfies (\ref{eq:fbsde1}--\ref{eq:fbsde2}).
\end{proof}

Let us briefly sketch the financial interpretation of the solution; cf.~\cite{garleanu.pedersen.16} for more details. In the context of individually optimal trading strategies (cf.~Lemma~\ref{thm:indopt1}), the ODE \eqref{eq:ODE:infinite} describes the optimal trading rate. It prescribes to trade with a constant \emph{relative} speed $\sqrt{\Delta} - \tfrac{\delta}{2} I_\ell$ towards the \emph{target portfolio}
$$
	(\sqrt{\Delta} - \tfrac{\delta}{2} I_\ell)^{-1} \bar{\xi}_t = E\left[\int_t^\infty \left(\sqrt{\Delta} + \tfrac{\delta}{2} I_\ell\right) e^{-(\sqrt{\Delta}+ \frac{\delta}{2} I_\ell) (s-t)}\xi_s ds\,\Big|\, \mathcal{F}_t\right].
$$	
In the context of Lemma~\ref{thm:indopt1}, this is an average of the future values of the frictionless optimal trading strategy $\xi$, computed using an exponential discounting kernel. As the trading costs tend to zero, the discount rate tends to infinity, and the target portfolio approaches the current value of the frictionless optimizer, in line with the small-cost asymptotics of \cite{moreau.al.15}.

\medskip

We now turn to the finite-horizon case. In order to satisfy the terminal condition $\dot{\varphi}_T=0$, the exponentials from Theorem~\ref{thm:fbsde:infinite} need to be replaced by appropriate hyperbolic functions in the one-dimensional case~\cite{bank.al.17}. In the present multivariate context, this remains true if these hyperbolic functions are used to define the corresponding ``primary matrix functions'' in the sense of Definition~\ref{def:matrix function}. The first step to make this precise is the following auxiliary result, which is applied for $\Delta =B+\frac{\delta^2}{4}I_\ell$ in Theorem~\ref{thm:fbsde} below:

\begin{lemma}\label{lem:G}
Let $\Delta \in \mathbb{R}^{\ell \times \ell}$. The matrix-valued function 
 \begin{equation}
G(t) = \sum_{n = 0}^\infty \frac{1}{(2n)!} \Delta^{n} (T - t)^{2n} \label{eq:G}
\end{equation}
is twice differentiable on $\mathbb{R}$ with derivative
$$\dot{G}(t)=-\sum_{n = 0}^\infty \frac{1}{(2n+1)!} \Delta^{n+1} (T - t)^{2n+1},$$
and solves the following ODE:
\begin{equation}\label{eq:ODEG}
\ddot{G}(t)=\Delta G(t),  \quad \mbox{with } G(T)=I_d \mbox{ and } \dot{G}(T)=0.
\end{equation}
Moreover, if the matrix $\Delta$ has only  positive eigenvalues then, in the sense of Definition~\ref{def:matrix function}, 
$$G(t) = \cosh(\sqrt{\Delta} (T-t)), \qquad \dot G(t) = -\sqrt{\Delta}\sinh(\sqrt{\Delta} (T-t));$$
for $\delta \geq 0$, the matrix $\Delta G(t) - \frac{\delta}{2} \dot{G}(t)$ is invertible for any $t \in [0, T]$ and, for any matrix norm $\Vert \cdot \Vert$,
\begin{equation}
\label{eq:lem:G:uniform}
\sup_{t \in [0, T]} \left\Vert \Big(\Delta G(t) - \frac{\delta}{2} \dot{G}(t)\Big)^{-1} \right\Vert < \infty. \qedhere
\end{equation}
\end{lemma}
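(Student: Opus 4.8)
The plan is to handle the lemma's assertions in sequence: (a) convergence of the defining series \eqref{eq:G} and term-by-term differentiation; (b) the ODE \eqref{eq:ODEG} with its boundary values; (c) the identification of $G$ and $\dot G$ with the primary matrix functions $\cosh(\sqrt\Delta(T-t))$ and $-\sqrt\Delta\sinh(\sqrt\Delta(T-t))$; and (d) invertibility of $\Delta G(t)-\tfrac\delta2\dot G(t)$ and the uniform bound \eqref{eq:lem:G:uniform}. For (a) and (b), fix a submultiplicative matrix norm $\Vert\cdot\Vert$, so that $\Vert\tfrac{1}{(2n)!}\Delta^n(T-t)^{2n}\Vert\le\tfrac{(\Vert\Delta\Vert^{1/2}\vert T-t\vert)^{2n}}{(2n)!}$; since the scalar majorant $\sum_n r^{2n}/(2n)!=\cosh r$ converges for every $r$, the series for $G$ and those obtained from it by multiplying terms by the polynomial factors $2n$ and $2n(2n-1)$ all converge uniformly on compact $t$-intervals. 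The standard theorem on differentiating uniformly convergent series of $C^1$ functions then yields $G\in C^2$ with the stated $\dot G$, and shifting the summation index turns $\ddot G(t)$ into $\Delta\sum_{m\ge0}\tfrac{1}{(2m)!}\Delta^m(T-t)^{2m}=\Delta G(t)$; evaluating the two series at $t=T$ leaves only the $n=0$ terms, giving $G(T)=I_\ell$ and $\dot G(T)=0$.

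For (c), when $\Delta$ has only positive eigenvalues the principal square root $\sqrt\Delta$ is a well-defined primary matrix function with $(\sqrt\Delta)^2=\Delta$, and primary matrix functions of $\Delta$ are represented by the corresponding convergent power series in $\Delta$ (Appendix~B). Substituting $(\sqrt\Delta)^{2n}=\Delta^n$ and $(\sqrt\Delta)^{2n+1}=\sqrt\Delta\,\Delta^n$ into the power series of $\cosh$ and $\sinh$ identifies $\cosh(\sqrt\Delta(T-t))$ with $G(t)$ and $-\sqrt\Delta\sinh(\sqrt\Delta(T-t))$ with $\dot G(t)$; the legitimacy of composing power series and pulling $\sqrt\Delta$ through the sums is exactly what Lemma~\ref{lem:matrix function:properties} supplies.

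For (d), write $\Delta G(t)-\tfrac\delta2\dot G(t)=g_t(\Delta)$ with $g_t(x):=x\cosh(\sqrt x(T-t))+\tfrac\delta2\sqrt x\sinh(\sqrt x(T-t))$; expanding the hyperbolic functions shows that $g_t$ is an everywhere-convergent power series in $x$ with nonnegative coefficients and linear coefficient $1+\tfrac\delta2(T-t)\ge1$, so this identity follows directly from the series in (a) for every $\Delta$. By the spectral mapping property of primary matrix functions, invoked in its Jordan-form version since $\Delta$ need not be diagonalizable, the eigenvalues of $g_t(\Delta)$ are precisely the numbers $g_t(\lambda)$ with $\lambda\in\sigma(\Delta)$; and for $\lambda>0$, $T-t\ge0$ and $\delta\ge0$ one has $\cosh(\sqrt\lambda(T-t))\ge1$ and $\sinh(\sqrt\lambda(T-t))\ge0$, hence $g_t(\lambda)\ge\lambda\ge\min\sigma(\Delta)>0$. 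Thus no eigenvalue of $\Delta G(t)-\tfrac\delta2\dot G(t)$ vanishes and the matrix is invertible for every $t\in[0,T]$. Finally, $t\mapsto\Delta G(t)-\tfrac\delta2\dot G(t)$ is continuous (indeed real-analytic) on the compact interval $[0,T]$ with values in the open set of invertible $\ell\times\ell$ matrices, and matrix inversion is continuous there, so $t\mapsto(\Delta G(t)-\tfrac\delta2\dot G(t))^{-1}$ is continuous on $[0,T]$ and hence bounded in every matrix norm, which is \eqref{eq:lem:G:uniform}.

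I expect the only genuinely delicate point to be the bookkeeping in steps (c)--(d): being entitled to translate the scalar identities for $\cosh$, $\sinh$, $\sqrt{\cdot}$ and their composites into statements about primary matrix functions, and in particular the spectral mapping identity $\sigma(g_t(\Delta))=g_t(\sigma(\Delta))$. This is precisely what the material collected in Appendix~B (Definition~\ref{def:matrix function} and Lemma~\ref{lem:matrix function:properties}) is designed to provide, so in practice it amounts to citing those results correctly; everything else reduces to the elementary majorant estimates of step (a) and the continuity/compactness argument of step (d).
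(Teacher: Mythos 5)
Your proof is correct and follows essentially the same approach as the paper: a scalar-majorant argument for convergence and term-by-term differentiation, power-series identification of $G$ and $\dot G$ with the primary matrix functions $\cosh$ and $-\sqrt\Delta\sinh$, and spectral mapping (Lemma~\ref{lem:matrix function:properties}) to get invertibility of $\Delta G(t)-\tfrac{\delta}{2}\dot G(t)$. One small improvement in your write-up is worth noting: for the uniform bound \eqref{eq:lem:G:uniform}, the paper appeals only to the pointwise scalar inequality plus Lemma~\ref{lem:matrix function:properties}(c)--(d), whereas you explicitly close the gap by observing that $t\mapsto\Delta G(t)-\tfrac{\delta}{2}\dot G(t)$ is continuous on the compact interval $[0,T]$ with values in the open set of invertible matrices, so that $t\mapsto(\Delta G(t)-\tfrac{\delta}{2}\dot G(t))^{-1}$ is continuous and hence bounded. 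That continuity-and-compactness step is really what the paper leaves implicit, since a lower bound on the moduli of eigenvalues alone does not in general control $\Vert M^{-1}\Vert$ for non-normal $M$. Everything else matches the paper's argument.
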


\begin{proof}
Note that $\sum_{n = 0}^\infty \frac{1}{(2n)!} \Vert \Delta \Vert^n (T - t)^{2n} < \infty$ for any matrix norm $\Vert \cdot \Vert$. Whence, $G(t)$ is well defined for each $t \in \mathbb{R}$. By twice differentiating term by term, and estimating the resulting power series in the same way, it is readily verified that $G$ is twice continuously differentiable on $\mathbb{R}$, has the stated derivative, and is a solution of \eqref{eq:ODEG}.

Suppose now that $\Delta$ has only positive eigenvalues and $\delta \geq 0$. Then the first two additional claims follow from Definition~\ref{def:matrix function} via the fact that $B = (\sqrt{B})^2$ and the series representation of the smooth functions $\cosh$ and $\sinh$. The final claim follows from Lemma~\ref{lem:matrix function:properties}(c) and (d) since, for fixed $x \in (0,\infty)$,
\begin{equation*}
\inf_{t \in [0, T]} x \cosh(x (T -t)) + \tfrac{\delta}{2} x \sinh(x (T -t)) \geq x > 0.
\end{equation*}
\end{proof}

The unique solution of our FBSDE can now be characterized using the function $G(t)$ from Lemma \ref{lem:G} as follows:

\begin{theorem}\label{thm:fbsde}
Suppose that $T < \infty$ and that the matrix $\Delta=B + \frac{\delta^2}{4} I_\ell$ has only positive eigenvalues. Then, the unique solution of the FBSDE (\ref{eq:fbsde1}-\ref{eq:fbsde2}) with terminal condition \eqref{eq:fbsde3:terminal} is given by 
\begin{equation}\label{eq:varphi}
\varphi_t= \int_0^t \left(e^{-\int_s^t F(u)du} \bar{\xi}_s \right)ds,
\end{equation}
where\footnote{Note that the inverses are well defined by Lemma \ref{lem:G}.}
\begin{align}
F(t) &=-\left(\Delta G(t) -\frac{\delta}{2} \dot G(t)\right)^{-1}\ B \dot G(t)\label{eq:Fdef},\\
\bar{\xi}_t &=\left(\Delta G(t) -\frac{\delta}{2} \dot G(t)\right)^{-1}E\left[\int_t^T \left(\Delta G(s) - \frac{\delta}{2} \dot G(s) \right) B e^{-\frac{\delta}{2} (s-t)}\xi_s ds\Big| \mathcal{F}_t\right].
\end{align}
\end{theorem}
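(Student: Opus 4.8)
The plan is to mirror the infinite-horizon argument of Theorem~\ref{thm:fbsde:infinite}, replacing the matrix exponential by the ``hyperbolic'' matrix function $G$ from Lemma~\ref{lem:G} in order to accommodate the terminal condition \eqref{eq:fbsde3:terminal}. First I would pass to $\tilde\varphi_t := e^{-\frac{\delta}{2} t}\varphi_t$ as in \eqref{eq:changevar:infinite}; using \eqref{eq:fbsde1}--\eqref{eq:fbsde2}, this turns the equation into the FBSDE $d\tilde\varphi_t = \dot{\tilde\varphi}_t\,dt$, $d\dot{\tilde\varphi}_t = d\tilde M_t + \Delta\tilde\varphi_t\,dt - B\tilde\xi_t\,dt$ (with $\tilde M,\tilde\xi$ the discounted versions of $M,\xi$) and transforms $\dot\varphi_T = 0$ into the mixed terminal condition $\dot{\tilde\varphi}_T + \tfrac{\delta}{2}\tilde\varphi_T = 0$. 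Throughout I would abbreviate $K(t) := \Delta G(t) - \tfrac{\delta}{2}\dot G(t)$, which is invertible with $\sup_{t\in[0,T]}\Vert K(t)^{-1}\Vert < \infty$ by Lemma~\ref{lem:G}, and record the algebraic identities $\ddot G = \Delta G = G\Delta$ (Lemma~\ref{lem:G} together with the commutation of primary matrix functions, Lemma~\ref{lem:matrix function:properties}(a)), $\dot K = \Delta(\dot G - \tfrac{\delta}{2} G)$, $\ddot K = \Delta K$, and $\dot K + \tfrac{\delta}{2} K = B\dot G$, all of which follow by elementary manipulation from $\ddot G = \Delta G$ and $B = \Delta - \tfrac{\delta^2}{4} I_\ell$.

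For the necessity (hence uniqueness) part, I would seek deterministic matrix functions $\Phi,\Psi$ for which $R_t := \Phi(t)\tilde\varphi_t + \Psi(t)\dot{\tilde\varphi}_t$ has dynamics $dR_t = \Psi(t)\,d\tilde M_t - \Psi(t) B\tilde\xi_t\,dt$, i.e.\ with no $\tilde\varphi$- or $\dot{\tilde\varphi}$-term. It\^o's formula forces $\Phi = -\dot\Psi$ and $\ddot\Psi = \Psi\Delta$, which is solved by any primary matrix function of $\Delta$; the additional requirement that the terminal condition make $R_T = 0$ singles out (up to an invertible factor) $\Psi(t) = \Delta^{-1} K(t)$ and $\Phi(t) = \tfrac{\delta}{2} G(t) - \dot G(t)$, which is where the matrix $\Delta G - \tfrac{\delta}{2}\dot G$ from the statement emerges. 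Integrating $dR_t$ from $t$ to $T$ and taking conditional expectations --- the stochastic integral drops out since $\tilde M$ has finite second moments and $G,\dot G$ are bounded on $[0,T]$ --- gives $\Phi(t)\tilde\varphi_t + \Delta^{-1} K(t)\dot{\tilde\varphi}_t = \Delta^{-1} E[\int_t^T K(s) B\tilde\xi_s\,ds\mid\mathcal F_t]$. Multiplying by $\Delta$, using $\tfrac{\delta}{2}\Delta G - \Delta\dot G = -\dot K$, multiplying by $K(t)^{-1}$, and undoing the change of variables then yields the linear random ODE $\dot\varphi_t = \bar\xi_t - F(t)\varphi_t$ with $F,\bar\xi$ exactly as in \eqref{eq:Fdef}, where one uses $F = -\tfrac{\delta}{2} I_\ell - K^{-1}\dot K = -K^{-1}(\dot K + \tfrac{\delta}{2} K) = -K^{-1} B\dot G$. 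Since $\varphi_0 = 0$, the variation-of-constants formula forces any solution to be of the form \eqref{eq:varphi}.

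For the sufficiency part, I would first note that on a finite horizon $\scr{L}^2_\delta$ coincides with $L^2$ (as $e^{-\delta t}$ is bounded above and below on $[0,T]$) and that $G,\dot G, K, K^{-1}$ are continuous, hence bounded, on $[0,T]$; consequently $\bar\xi \in \scr{L}^2_\delta$ (from $\xi\in\scr{L}^2_\delta$, Jensen and Fubini, as in the proof of Theorem~\ref{thm:fbsde:infinite}), so that $\varphi$ from \eqref{eq:varphi} and $\dot\varphi := \bar\xi - F\varphi$ lie in $\scr{L}^2_\delta$ and $\varphi_0 = 0$. The terminal condition $\dot\varphi_T = 0$ is immediate because $\dot G(T) = 0$ forces $F(T) = 0$ and the empty integral forces $\bar\xi_T = 0$. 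Finally, writing $\bar\xi_t = K(t)^{-1} Z_t$ with $Z_t := e^{\frac{\delta}{2} t}\bigl(N_t - \int_0^t K(s) B e^{-\frac{\delta}{2} s}\xi_s\,ds\bigr)$ for the square-integrable martingale $N_t := E[\int_0^T K(s) B e^{-\frac{\delta}{2} s}\xi_s\,ds\mid\mathcal F_t]$, I would differentiate $\dot\varphi_t = \bar\xi_t - F(t)\varphi_t$, set $dM_t := K(t)^{-1} e^{\frac{\delta}{2} t}\,dN_t$ (a martingale with finite second moments), substitute $\dot\varphi_t = \bar\xi_t - F(t)\varphi_t$ back in, and check \eqref{eq:fbsde2}; after matching the $\bar\xi$-coefficients (which again reduces to $F = -K^{-1}(\dot K + \tfrac{\delta}{2} K)$) the only remaining identity is the Riccati relation $\dot F = F^2 + \delta F - B$, which follows from $F = -\tfrac{\delta}{2} I_\ell - K^{-1}\dot K$ and $\ddot K = \Delta K$.

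I expect the main obstacle to be purely algebraic: correctly guessing the integrating pair $(\Phi,\Psi)$, recognising that the normalisation compatible with the terminal condition produces precisely $\Delta G(t) - \tfrac{\delta}{2}\dot G(t)$, and verifying the Riccati identity in the sufficiency direction. By contrast, the analytic steps (finite second moments of $M$, membership of $\varphi,\dot\varphi,\bar\xi$ in $\scr{L}^2_\delta$, and justifying the conditional-expectation step) are routine here because on a finite horizon all the matrix functions involved are continuous and hence bounded.
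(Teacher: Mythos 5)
Your proposal is correct and follows essentially the same route as the paper: the change of variables $\tilde\varphi_t = e^{-\delta t/2}\varphi_t$, reduction to the random linear ODE $\dot\varphi_t = \bar\xi_t - F(t)\varphi_t$ via a conditional-expectation argument with the terminal condition, variation of constants for uniqueness, and reverse-engineering the martingale $M$ for sufficiency. Your integrating pair $(\Phi,\Psi)$ is precisely the relevant row of the block exponential $e^{C_2(T-t)}$ that the paper computes, and the Riccati identity $\dot F = F^2 + \delta F - B$ you isolate for sufficiency is implicit in the paper's differentiation of $K(t)\dot{\tilde\varphi}_t$ using $\ddot G = \Delta G$, so the two presentations are equivalent.
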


\begin{proof}
Let $(\varphi,\dot{\varphi})$ be a solution of the FBSDE (\ref{eq:fbsde1}--\ref{eq:fbsde2}) with terminal condition \eqref{eq:fbsde3:terminal} and set
\begin{equation*}
\tilde \varphi_t := e^{-\frac{\delta}{2}t} \varphi_t.
\end{equation*}
Using that
\begin{align}\label{eq:changevar}
\dot {\tilde\varphi}_t &= -\tfrac{\delta}{2} \tilde \varphi_t + e^{-\frac{\delta}{2} t} \dot \varphi_t,  \quad d \dot {\tilde\varphi}_t = -\tfrac{\delta}{2} \dot {\tilde \varphi}_t dt + e^{-\frac{\delta}{2} t} d \dot  \varphi_t -\tfrac{\delta}{2} e^{-\frac{\delta}{2} t} \dot \varphi_t dt
\end{align}
and the FBSDE (\ref{eq:fbsde1}--\ref{eq:fbsde2}) for $(\varphi,\dot{\varphi})$ with \eqref{eq:fbsde3:terminal}, it follows that $(\tilde \varphi,\dot{\tilde \varphi})$ solves the FBSDE
\begin{align}
d\tilde \varphi_t &=\dot{\tilde \varphi}_t dt, \quad \tilde \varphi_0=0, \quad t \in [0, T]\label{eq:fbsde tilde 1:finite}\\
d\dot{\tilde \varphi}_t&=d\tilde M_t +B\left(\tilde \varphi_t- \tilde \xi_t\right)dt + \tfrac{\delta^2}{4} \tilde \varphi_t dt, \quad t \in [0, T],
\end{align}
with terminal condition
\begin{equation}
\label{eq:pf:fbsde:terminal}
\dot{\tilde \varphi}_T=-\tfrac{\delta}{2} \tilde \varphi_T.
\end{equation}
Here $d\tilde M_t =e^{-\frac{\delta}{2} t} dM_t$ is a square-integrable martingale (because $M$ is) and $\tilde \xi_t = e^{-\frac{\delta}{2} t} \xi_t$. In matrix notation, this equation can be rewritten as
$$
d(\tilde \varphi_t,\dot{\tilde \varphi}_t)^\top=C_1d\tilde M_t+C_2(\tilde \varphi_t,\dot{\tilde \varphi}_t)^\top dt-C_3\tilde \xi_tdt,
$$
with
$$
C_1=\begin{pmatrix} 0 \\ I_\ell \end{pmatrix}, \quad C_2=\begin{pmatrix} 0 & I_\ell \\ \Delta & 0 \end{pmatrix}, \quad C_3=\begin{pmatrix} 0 \\ B \end{pmatrix}.
$$
Integration by parts shows
$$d(e^{C_2(T-t)}(\tilde \varphi_t,\dot{\tilde \varphi}_t)^\top)=e^{C_2(T-t)}C_1d\tilde M_t-e^{C_2(T-t)}C_3\tilde \xi_t dt,$$
and in turn
\begin{equation}\label{eq:inter}
\begin{pmatrix} \tilde \varphi_T \\ \dot{\tilde \varphi}_T \end{pmatrix}= e^{C_2(T-t)}\begin{pmatrix} \tilde \varphi_t \\ \dot{\tilde \varphi}_t \end{pmatrix}+\int_t^T e^{C_2(T-s)}C_1d \tilde M_s -\int_t^T e^{C_2(T-s)}C_3\tilde \xi_s ds.
\end{equation}
Set $H(t)=e^{C_2(T-t)}$ and note that $$H(t)=\begin{pmatrix} G(t) &  -\Delta^{-1}\dot G(t)\\ -\dot G(t) & G(t)\end{pmatrix},$$
for the function $G(t)$ from Lemma \ref{lem:G}, as is readily verified by induction. Together with \eqref{eq:inter}, it follows that
\begin{align*}
\tilde \varphi_T &= G(t) \tilde \varphi_t -\Delta^{-1} \dot G(t) \dot{\tilde \varphi}_t - \int_t^T \Delta^{-1} \dot G(s) d\tilde M_s + \int_t^T \Delta^{-1} \dot G(s)  B \tilde \xi_s ds, \\
\dot{\tilde \varphi}_T &= -\dot{G}(t) \tilde \varphi_t + G(t) \dot{\tilde \varphi}_t +\int_t^T G(s) d\tilde M_s -\int_t^T G(s) B \tilde \xi_s ds.
\end{align*}
Since $\dot{\tilde \varphi}_T=-\frac{\delta}{2} \tilde \varphi_T$ by \eqref{eq:pf:fbsde:terminal}, this in turn yields
\begin{align*}
0 = &\left(\tfrac{\delta}{2} G(t) - \dot G(t) \right) \tilde \varphi_t + \left(-\tfrac{\delta}{2} \Delta^{-1} \dot G(t) +  G(t) \right) \dot {\tilde \varphi}_t  +\int_t^T \left(-\tfrac{\delta}{2} \Delta^{-1} \dot G(s) + G(s) \right) d\tilde M_s \\
&+ \int_t^T \left(\tfrac{\delta}{2} \Delta^{-1} \dot G(s) - G(s)\right) B \tilde \xi_s ds.
\end{align*}
Multiplying this equation by $\Delta$ and taking conditional expectations gives
\begin{align*}
\left(\Delta G(t) -\tfrac{\delta}{2} \dot G(t)\right) \dot {\tilde \varphi}_t  &= E\left[\int_t^T \left(\Delta G(s) - \tfrac{\delta}{2} \dot G(s) \right) B \tilde \xi_s ds\Big| \mathcal{F}_t\right] +\left(\Delta \dot G(t) -\tfrac{\delta}{2} \Delta G(t)\right) \tilde \varphi_t.
\end{align*}
Now, using \eqref{eq:changevar} and rearranging, it follows that
\begin{align*}
\left(\Delta G(t) -\tfrac{\delta}{2} \dot G(t)\right) e^{-\frac{\delta}{2} t} \dot \varphi_t  &= E\left[\int_t^T \left(\Delta G(s) - \tfrac{\delta}{2} \dot G(s) \right) B e^{-\frac{\delta}{2} s} \xi_s ds\Big| \mathcal{F}_t\right] \\
&\quad+\left(\Delta - \tfrac{\delta^2}{4} I_\ell\right) \dot G(t) e^{-\frac{\delta}{2} t} \varphi_t.
\end{align*}
After multiplying with the inverse of $\left(\Delta G(t) -\frac{\delta}{2} \dot G(t)\right)$ (which exists by Lemma \ref{lem:G}) and using that $\Delta - \frac{\delta^2}{4} I_\ell = B$, this leads to
\begin{equation}\label{eq:ODE}
 \dot {\varphi}_t = \bar{\xi}_t-F(t)\varphi_t.
 \end{equation}
By the variations of constants formula, this linear (random) ODE has the unique solution \eqref{eq:varphi}. If a solution of the FBSDE (\ref{eq:fbsde1}-\ref{eq:fbsde2}) exists, it therefore must be of the form \eqref{eq:varphi}.

It remains to verify that \eqref{eq:varphi} indeed solves the FBSDE (\ref{eq:fbsde1}-\ref{eq:fbsde2}). First, note that $\bar \xi \in \scr{L}^2_\delta$ by the fact that $\xi \in \scr{L}^2_\delta$ and the estimate \eqref{eq:lem:G:uniform}, and in turn $\varphi \in \scr{L}^2_\delta$. Moreover, by definition, we have $\varphi_0=0$. Next, integration by parts shows that $\dot{\varphi}$ satisfies the ODE
\eqref{eq:ODE}, and this yields $\dot{\varphi} \in \scr{L}^2_\delta$ (because $\varphi, \bar \xi \in \scr{L}^2_\delta$) and $\dot{\varphi}_T=0$ (because $G(T)=I$ and $\dot{G}(T)=0$). Define the $\mathbb{R}^\ell$-valued square-integrable martingale $(\bar{M}_t)_{t \in [0, T]}$ by
\begin{equation*}
\bar{M}_t= E\left[\int_0^T \left(\Delta G(s) - \tfrac{\delta}{2} \dot G(s) \right) B \tilde \xi_s ds\,\Big|\, \mathcal{F}_t\right],
\end{equation*}
where $\tilde \varphi_t := e^{-\frac{\delta}{2}t} \varphi_t$ and $\tilde \xi_t := e^{-\frac{\delta}{2}t} \xi_t$ as before. Then, multiplying \eqref{eq:ODE} by $\left(\Delta G(t) -\tfrac{\delta}{2} \dot G(t)\right)$ and using \eqref{eq:changevar} as well as $\Delta - \frac{\delta^2}{4} I_\ell = B$ gives, after some rearrangement,
\begin{align*}
\left(\Delta G(t) -\tfrac{\delta}{2} \dot G(t)\right) \dot {\tilde \varphi}_t  &= \bar M_t - \int_0^t \left(\Delta G(s) - \tfrac{\delta}{2} \dot G(s) \right) B \tilde \xi_s ds +\left(\Delta \dot G(t) -\tfrac{\delta}{2} \Delta G(t)\right) \tilde \varphi_t.
\end{align*}
Taking differentials, we therefore obtain
\begin{align*}
&\left(\Delta \dot G(t) -\tfrac{\delta}{2} \ddot{G}(t)\right) \dot {\tilde \varphi}_t dt + \left(\Delta G(t) -\tfrac{\delta}{2} \dot{G}(t)\right) d \dot{\tilde \varphi}_t   \\
&\qquad= d \bar M_t - \left(\Delta G(t) - \tfrac{\delta}{2} \dot G(t) \right) B \tilde \xi_t dt +\left(\Delta \dot G(t) -\tfrac{\delta}{2} \Delta G(t)\right) \dot{\tilde \varphi}_t dt + \left(\Delta \ddot G(t) -\tfrac{\delta}{2} \Delta \dot G(t)\right) \tilde \varphi_t dt.
\end{align*}
Using that $\ddot{G}(t) = \Delta G(t)$ by the ODE \eqref{eq:ODEG} and taking into account that $\Delta$ commutes with both $G(t)$ and $\dot G(t)$ by Lemma \ref{lem:matrix function:properties}(a), it follows that
\begin{align*}
\left(\Delta G(t) -\tfrac{\delta}{2} \dot{G}(t)\right) d \dot{\tilde \varphi}_t  = d \bar M_t - \left(\Delta G(t) - \tfrac{\delta}{2} \dot G(t) \right) B \tilde \xi_t dt  + \left(\Delta G(t) -\tfrac{\delta}{2} G(t)\right) \Delta \tilde \varphi_t dt.
\end{align*}
Now, multiplying with the inverse of $\left(\Delta G(t) -\frac{\delta}{2} \dot G(t)\right)$ (which exists by Lemma \ref{lem:G}) and using that $\Delta  = B + \frac{\delta^2}{4} I_\ell$, we obtain
\begin{align*}
d \dot{\tilde \varphi}_t  = \left(\Delta G(t) - \tfrac{\delta}{2} \dot G(t) \right)^{-1} d \bar M_t + B\left(\tilde \varphi_t- \tilde \xi_t\right)dt + \tfrac{\delta^2}{4} \tilde \varphi_t dt.
\end{align*}
Finally, again taking into account~\eqref{eq:changevar} and defining the square-integrable martingale $M$ by
\begin{equation*}
d M_t = e^{\frac{\delta}{2} t} \left(\Delta G(t) - \tfrac{\delta}{2} \dot G(t) \right)^{-1} d \bar M_t, \quad M_0 = \bar M_0,
\end{equation*}
we obtain that $\varphi$ from  \eqref{eq:varphi} indeed satisfies the FBSDE dynamics (\ref{eq:fbsde1}-\ref{eq:fbsde2}) with terminal condition \eqref{eq:fbsde3:terminal}.
\end{proof}

Let us again briefly comment on the financial interpretation of this result in the context of Lemma~\ref{thm:indopt1}. The basic interpretation is the same as in the infinite-horizon case studied in Theorem~\ref{thm:fbsde:infinite}. However, to account for the terminal condition that the trading speed needs to vanish, the optimal relative trading speed in \eqref{eq:ODE} is no longer constant. Instead, it interpolates between this terminal condition and the stationary long-run value from Theorem~\ref {thm:fbsde:infinite}, that is approached if the time horizon is distant. Analogously, the exponential discounting kernel used to compute the target portfolio in Theorem~\ref{thm:fbsde:infinite} is replaced by a more complex version here; compare \cite{bank.al.17} for a detailed discussion in the one-dimensional case. 

\medskip

To apply Theorems~\ref{thm:fbsde:infinite} and~\ref{thm:fbsde} to characterize the equilibrium in Theorem~\ref{thm:main} it remains to verify that the matrix $B$ appearing there only has real, positive eigenvalues, since this implies that the matrix $\Delta := B + \frac{\delta^2}{4} I_\ell$ has only real eigenvalues greater than $\delta^2/4 \geq 0$.

\begin{lemma}\label{lem:eigen}
Let $\Lambda \in  \mathbb{R}^{d \times d}$ be a diagonal matrix with positive entries $\lambda^1, \ldots, \lambda^d > 0$,  $\Sigma \in \mathbb{R}^{d \times d}$ a symmetric, positive definite matrix and $\gamma^1,\ldots,\gamma^N>0$ with $\gamma^N = \max(\gamma^1,\ldots,\gamma^N)$.  Then, the matrix
$$
B= \begin{pmatrix} \frac{\gamma^N-\gamma^1}{N} \frac{\Lambda^{-1} \Sigma}{2} & \cdots &\frac{\gamma^{N}-\gamma^{N-1}}{N} \frac{\Lambda^{-1} \Sigma}{2}   \\ \vdots & \cdots & \vdots  \\ \frac{\gamma^N-\gamma^1}{N} \frac{\Lambda^{-1} \Sigma}{2}  & \cdots & \frac{\gamma^{N}-\gamma^{N-1}}{N} \frac{\Lambda^{-1} \Sigma}{2}  \end{pmatrix} + \begin{pmatrix} \gamma^1 \frac{\Lambda^{-1} \Sigma}{2} &  &0  \\  & \ddots &   \\ 0 & & \gamma^{N-1} \frac{\Lambda^{-1} \Sigma}{2}  \end{pmatrix} \in \mathbb{R}^{d (N-1)\times d (N-1)}
$$
has only real, positive eigenvalues. 
\end{lemma}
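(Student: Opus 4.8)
The plan is to reduce the claim to two standard facts by exposing a tensor-product structure. Set $K := \tfrac{1}{2}\Lambda^{-1}\Sigma$. Comparing the two matrices displayed in the statement block by block, one sees at once that $B = C \otimes K$, where $C \in \mathbb{R}^{(N-1)\times(N-1)}$ is the matrix
\[
C = \operatorname{diag}(\gamma^1,\ldots,\gamma^{N-1}) + \mathbf{1} c^\top, \qquad c := \tfrac{1}{N}(\gamma^N-\gamma^1,\ldots,\gamma^N-\gamma^{N-1})^\top,
\]
with $\mathbf{1}=(1,\ldots,1)^\top$; the entries of $c$ are nonnegative because $\gamma^N = \max_n \gamma^n$. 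Since the spectrum of a Kronecker product $C\otimes K$ is exactly $\{\mu\nu : \mu \in \operatorname{spec}(C),\ \nu \in \operatorname{spec}(K)\}$ (as one reads off from simultaneous Schur triangularizations of $C$ and $K$, viewed over $\mathbb{C}$), it suffices to prove that $C$ and $K$ each have only real, positive eigenvalues.

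For $K$ this is immediate: $\tfrac12\Lambda^{-1}\Sigma$ is a product of two symmetric positive definite matrices and hence has only positive eigenvalues by \cite[Proposition 6.1]{serre.10} (alternatively, $K$ is similar to the positive definite matrix $\tfrac12\Lambda^{-1/2}\Sigma\Lambda^{-1/2}$).

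For $C$ I would argue directly from the eigenvalue equation. Let $(\lambda, x) \in \mathbb{C}\times(\mathbb{C}^{N-1}\setminus\{0\})$ satisfy $Cx = \lambda x$; since the $\mathbf{1}c^\top$ term contributes $(c^\top x)\mathbf{1}$, the $i$-th coordinate reads $(\lambda - \gamma^i)x_i = c^\top x =: s$. If $s = 0$, then $(\lambda-\gamma^i)x_i = 0$ for all $i$, and choosing $i$ with $x_i\neq 0$ forces $\lambda = \gamma^i > 0$. If $s \neq 0$, then $\lambda \neq \gamma^i$ and $x_i = s/(\lambda-\gamma^i)$ for every $i$; substituting back into $s = c^\top x$ and dividing by $s$ gives the secular equation $\sum_{i=1}^{N-1} c_i/(\lambda-\gamma^i) = 1$, which in particular forces $c \neq 0$. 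Taking imaginary parts yields $\operatorname{Im}(\lambda)\sum_i c_i/|\lambda-\gamma^i|^2 = 0$; since $c \geq 0$ is nonzero and the $\gamma^i > 0$ are real, the sum is strictly positive, so $\operatorname{Im}(\lambda) = 0$. Finally, if $\lambda \leq 0$ then every summand $c_i/(\lambda-\gamma^i)$ is $\leq 0$, contradicting that the sum equals $1$; hence $\lambda > 0$. Thus all eigenvalues of $C$ are real and positive, and combining with the Kronecker spectrum formula and the statement for $K$ completes the proof.

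None of the steps is difficult: the only genuine ingredient is the observation $B = C \otimes K$, after which the behaviour of $K$ is quoted and that of $C$ follows from the elementary secular-equation analysis. The one place that requires care is the case distinction for $C$, which must cover the degenerate situations $s=0$, $c=0$, and $\lambda$ coinciding with some $\gamma^i$; splitting according to whether $s = c^\top x$ vanishes handles all of them uniformly.
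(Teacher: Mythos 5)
Your proof is correct, and it takes a genuinely different route from the paper. The paper first diagonalizes $\tfrac12\Lambda^{-1}\Sigma$ via a similarity transformation, replaces each block by the resulting diagonal matrix, and then computes the characteristic polynomial of the block matrix by working over the commutative ring of $d\times d$ diagonal matrices (invoking a result of Silvester on determinants of block matrices with commuting blocks, together with explicit block row reduction). This eventually reduces to showing, for each $i$ and each $x\in\mathbb{C}\setminus(0,\infty)$, that $1+\sum_{n}\alpha^n u^i/(x-\gamma^n u^i)\neq 0$, which the paper checks by separating real and imaginary parts. Your observation that $B=C\otimes K$ with $K=\tfrac12\Lambda^{-1}\Sigma$ and $C$ a positive-definite-diagonal plus a sign-structured rank-one perturbation cuts through all of that: the Kronecker spectrum rule $\operatorname{spec}(C\otimes K)=\{\mu\nu\}$ immediately decouples the $d$-dimensional and $(N-1)$-dimensional pieces, and the analysis of $C$ via its secular equation $\sum_i c_i/(\lambda-\gamma^i)=1$ is essentially the same one-variable computation the paper performs at the very end, but reached with far less machinery. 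What your route buys is conceptual transparency and brevity (no appeal to block-determinant theory and no explicit row reduction); what the paper's route buys is self-containedness in the sense that it never leaves the language of block matrices and similarity. Your case split on $s=c^\top x$ correctly handles the degenerate situations ($c=0$, $\lambda$ equal to some $\gamma^i$), and the positivity of $\sum_i c_i/|\lambda-\gamma^i|^2$ is justified by the observation that a nontrivial secular equation forces $c\neq 0$; one could phrase the Schur step more plainly as ``triangularize $C$ and $K$ separately and tensor the results,'' but that is a wording quibble, not a gap.
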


\begin{proof}
First, recall that two matrices that are similar have the same eigenvalues. Since the matrix $\frac{\Lambda^{-1} \Sigma}{2}$ has only positive eigenvalues (because it is the product of two symmetric positive definite matrices, cf.~\cite[Proposition 6.1]{serre.10}), there is an invertible matrix $P \in \mathbb{R}^{d\times d}$ and a diagonal matrix $U \in \mathbb{R}^{d\times d}$ with positive diagonal entries $u^1, \ldots, u^d$ such that $\frac{\Lambda^{-1} \Sigma}{2} = P U P^{-1}$. Now, define the matrix
\begin{equation*}
Q = \begin{pmatrix}
    P &  & 0 \\
 & \ddots & \\
0 &  & P
\end{pmatrix}  \in \mathbb{R}^{d (N-1)\times d (N-1)}.
\end{equation*}
A direct computation shows that $Q$ is invertible with inverse
\begin{equation*}
Q^{-1} = \begin{pmatrix}
P^{-1} &  & 0 \\
& \ddots & \\
0 &  & P^{-1}
\end{pmatrix}  \in \mathbb{R}^{d (N-1)\times d (N-1)}.
\end{equation*}
 Whence $B$ is similar to $\bar B := P^{-1} B P$, and
\begin{equation*}
\bar B = \begin{pmatrix} \frac{\gamma^N-\gamma^1}{N} U& \cdots &\frac{\gamma^{N}-\gamma^{N-1}}{N} U  \\ \vdots & \cdots & \vdots  \\ \frac{\gamma^N-\gamma^1}{N} U & \cdots & \frac{\gamma^{N}-\gamma^{N-1}}{N} U \end{pmatrix} + \begin{pmatrix} \gamma^1 U&  &0  \\  & \ddots &   \\ 0 & & \gamma^{N-1} U \end{pmatrix}.
\end{equation*}
To prove that $\bar B$ (and hence $B$) only has real and positive eigenvalues, we calculate the determinant of $V(x) = xI_{d (N-1)} - \bar B$ for $x \in \mathbb{C} \setminus (0, \infty)$ and show that $\det(V(x)) \neq 0$. So let $x \in \mathbb{C} \setminus (0, \infty)$. Denote by $\mathcal{R}_d$ the commutative subring of all diagonal matrices in $\mathbb{C}^{d\times d}$ and let $\aleph^1, \ldots, \aleph^{N-1}, \gimel^1(x), \ldots \gimel^{N-1}(x) \in \mathcal{R}_d$ be given by
\begin{equation*}
\aleph^n = - \frac{\gamma^N - \gamma^n}{N} U \quad \text{and} \quad \gimel^n(x) = x I_d - \gamma^n U.
\end{equation*}
With this notation, the $\mathbb{R}^{d(N-1) \times d (N-1)}$-valued matrix $V(x)$ can also be understood as an element of $\mathcal{R}_d^{(N-1)\times(N-1)}$ (the $(N-1)\times(N-1)$ matrices with elements from the diagonal matrices in $\mathbb{R}^{d \times d }$) and we have
\begin{equation*}
V(x) =
\begin{pmatrix} 
\aleph^1 +\gimel^1(x) & \aleph^2 & \cdots &\aleph^{n-2} &\aleph^{N-1} \\
\aleph^1 & \aleph^2 +\gimel^2(x)  & \ddots & \vdots &\vdots \\
\vdots & \ddots  & \ddots & \ddots &\vdots \\
\vdots & \vdots & \ddots &\aleph^{N-2}+\gimel^{N-2}(x) &\aleph^{N-1} \\
\aleph^1  & \aleph^2 & \cdots &\aleph^{N-2} &\aleph^{N-1} + \gimel^{N-1}(x)
 \end{pmatrix}.
\end{equation*}
Now use that by \cite[Theorem 1]{silvester.00}, $\det (V (x)) = \det( \mathfrak{det}(V(x)))$, where $ \mathfrak{det}: \mathcal{R}_d^{(N-1)\times(N-1)} \to  \mathcal{R}_d$ is the determinant map on the commutative ring $\mathcal{R}_d$. By subtracting the last row (in $\mathcal{R}_d$) of $V(x)$ from the other rows, the problem boils down to calculating the determinant of 
\begin{equation*}
\bar V(x) := \begin{pmatrix} 
\gimel^1(x) & 0 & \cdots &0 &- \gimel^{N-1}(x) \\
0 & \gimel^2(x)  & \ddots & \vdots &\vdots \\
\vdots & \ddots  & \ddots & 0 &\vdots \\
0 & \cdots &0 &\gimel^{N-2}(x) &- \gimel^{N-1}(x) \\
\aleph^1  & \aleph^2 & \cdots &\aleph^{N-2} &\aleph^{N-1} + \gimel^{N-1}(x)
 \end{pmatrix}.
\end{equation*}
As $x \in \mathbb{C} \setminus (0, \infty)$ and for $n \in \{1, \ldots, N-1\}$, the eigenvalues of $\gamma^n U$ are $\gamma^n u^1, \ldots, \gamma^n u^d \in (0, \infty)$, it follows that $\det(\gimel^n (x)) \neq 0$ and hence $\gimel^n (x)$ is invertible for each $n$. Now, subtracting $\aleph^n (\gimel^n(x))^{-1}$-times the $n$-th row from the last row for $n = 1, \ldots, N-2$, the problem simplifies to calculating the determinant of 
\begin{equation*}
\hat V(x) := \begin{pmatrix} 
\gimel^1(x) & 0 &0 &- \gimel^{N-1}(x) \\
0 & \ddots  & 0 &\vdots \\
0 & \ddots &\gimel^{N-2}(x) &- \gimel^{N-1}(x) \\
0 & \cdots &0 &\gimel^{N-1}(x)\left(I_d + \sum_{n =1}^{N-1} \aleph^n
(\gimel^n(x))^{-1}\right)
 \end{pmatrix}.
\end{equation*}
As a result:
\begin{equation*}
 \mathfrak{det}(\hat V(x)) = \left(\prod_{n = 1}^{N-1} \gimel^n(x) \right) \left(I_d+ \sum_{n =1}^{N-1} \aleph^n
(\gimel^n(x))^{-1}\right),
\end{equation*}
and in turn
\begin{equation*}
\det (V(x)) = \det \Big(  \mathfrak{det}(\hat V(x)) \Big) = \left(\prod_{n = 1}^{N-1} \det(\gimel^n(x))\right) \det\left(I_d + \sum_{n =1}^{N-1} \aleph^n
(\gimel^n(x))^{-1}\right).
\end{equation*}
It therefore remains to show that $ \det\left(I_d + \sum_{n =1}^{N-1} \aleph^n
(\gimel^n(x))^{-1}\right) \neq 0$. As $I_d + \sum_{n =1}^{N-1} \aleph^n
(\gimel^n(x))^{-1}$ is a diagonal matrix, we have
\begin{equation*}
\det\left(I_d + \sum_{n =1}^{N-1} \aleph^n
(\gimel^n(x))^{-1}\right) = \prod_{i = 1}^d \left(1 + \sum_{n =1}^{N-1} \alpha^n u^i \frac{1}{x - \gamma^n u^i}\right),
\end{equation*}
where $\alpha^n = -\frac{\gamma^N -\gamma^n}{N}$, $n \in \{1, \ldots, N-1\}$. As $\gamma^N = \max(\gamma^1, \ldots, \gamma^{N})$, we have $\alpha^n \leq 0$ for each $n \in \{1, \ldots, N-1\}$. It suffices to show that for $i \in \{1, \ldots, d\}$,
\begin{equation}
\label{eqn:eigenvalue:01}
1 + \sum_{n =1}^{N-1} \alpha^n u^i \frac{1}{x - \gamma^n u^i} \neq 0.
\end{equation}
Writing $x = \Re(x) + \mathrm{i} \Im(x)$ and expanding each fraction in \eqref{eqn:eigenvalue:01} by $\Re(x) - \Im(x) \mathrm{i} - \gamma^n u^i$, we obtain
\begin{align}
1 + \sum_{n =1}^{N-1} \alpha^n u^i \frac{1}{x - \gamma^n u^i} &= 1 + \sum_{n =1}^{N-1} \alpha^n u^i \frac{\Re(x) - \gamma^n u^i - \Im(x) \mathrm{i}}{(\Re(x) - \gamma^n u^i)^2 + \Im(x)^2} \notag \\
&= 1 + \Re(x) \sum_{n =1}^{N-1} \alpha^n u^i \frac{1}{(\Re(x) - \gamma^n u^i)^2 + \Im(x)^2} \\
&\quad\quad + \sum_{n =1}^{N-1} \alpha^n u^i \frac{- \gamma^n u^i}{(\Re(x) - \gamma^n u^i)^2 + \Im(x)^2} \notag \\
&\quad \quad - \Im(x) \mathrm{i} \sum_{n =1}^{N-1} \alpha^n u^i \frac{1}{(\Re(x) - \gamma^n u^i)^2 + \Im(x)^2}\notag \\
&= 1 + \Re(x) c^i(x) + d^i(x) - \Im(x) \mathrm{i} c^i(x) \notag \\
&= 1 +  d^i(x) + c^i(x) \overline{x},
\end{align}
where $\overline{x}$ is the complex conjugate of $x$ and
\begin{align*}
c^i(x) &= \sum_{n =1}^{N-1} \alpha^n u^i \frac{1}{(\Re(x) - \gamma^n u^i)^2 + \Im(x)^2}, \\
d^i(x) &= \sum_{n =1}^{N-1} \alpha^n u^i \frac{- \gamma^n u^i}{(\Re(x) - \gamma^n u^i)^2 + \Im(x)^2}.
\end{align*}
As each $\alpha^n$ is nonpositive, $d^i(x)$ is nonnegative and $c^i(x)$ is nonpositive. Combining this with $\overline{x} \in \mathbb{C} \setminus (0, \infty)$, it follows that $1 + d^i(x)  + c^i(x) \overline{x} \neq 0$ so that all eigenvalues of the matrix $B$ are indeed real and positive.
\end{proof}

\section{Primary Matrix Functions}
In this appendix, we collect some facts about matrix functions from the textbook \cite{higham.08} that are used in Appendix~\ref{a:FBSDE}. First, we recall the definition of a (primary) matrix function:

\begin{definition}\normalfont
	\label{def:matrix function} 
Let $A \in \mathbb{C}^{\ell \times \ell}$ be a matrix with distinct eigenvalues $\lambda_1, \dots, \lambda_m$, $m \leq \ell$. Denote by $n_i$  the algebraic multiplicity of $\lambda_i$, $i \in \{1, \ldots, m\}$.  Let $O$ be an open neighbourhood of $\lambda_1, \dots, \lambda_m$ in $\mathbb{C}$ and $f: O \to \mathbb{C}$ a function.\footnote{If $A \in \mathbb{R}^{\ell \times \ell}$ and all eigenvalues of $A$ are real, $O$ can be taken as an open neighbourhood of $\lambda_1, \dots, \lambda_m$ in $\mathbb{R}$, provided that $f$ is also real valued.}
\begin{enumerate}
	\item [{\normalfont(a)}] The function $f$ is said to be \emph{defined on the spectrum of $A$} if it is $n_i-1$-times differentiable at $\lambda_i$, $i \in \{1, \ldots, m\} $.
	\item [{\normalfont(b)}] If $f$ is defined on the spectrum of $A$, then the \emph{primary matrix function} $f(A)$ is defined by
\begin{equation*}
f(A) :=  p(A),
\end{equation*}
where $p:  \mathbb{C} \to \mathbb{C}$ is the unique Hermite interpolating polynomial satisfying $p^{(k)}(\lambda_i) = f^{(k)}(\lambda_i)$ for $k \in \{0, \ldots, n_i-1\}$ and $i \in \{1, \ldots, s\}$.\footnote{If $A$, $\lambda_1, \ldots, \lambda_m$, $O$ and $f$ are all real valued and $f$ defined on the spectrum of $A$, the Hermite interpolating polynomial is also real valued.}
\end{enumerate}
\end{definition}

As a prime example, note that the exponential function is defined on the spectrum of all matrices $A \in \mathbb{C}^{\ell \times \ell}$ and $\exp(A)$ is just the matrix exponential. We recall some elementary properties of (primary) matrix functions:

\begin{lemma}
\label{lem:matrix function:properties}
Let $A \in \mathbb{C}^{\ell \times \ell}$ be a matrix with distinct eigenvalues $\lambda_1, \dots, \lambda_m$, $m \leq \ell$ and $f: \mathbb{C} \to \mathbb{C}$ a function defined on the spectrum of $A$. Then: 
\begin{enumerate}
	\item  [{\normalfont(a)}] If $P \in \mathbb{C}^{\ell \times \ell}$ commutes with $A$, then $f(A)$ and $P$ also commute.
		\item  [{\normalfont(b)}] If $P \in  \mathbb{C}^{\ell \times \ell}$ is invertible, then $P f(A) P^{-1} = f(P A P^{-1})$.
		\item [{\normalfont(c)}] The eigenvalues of $f(A)$ are $f(\lambda_1), \ldots, f(\lambda_m)$.
		\item  [{\normalfont(d)}] $f(A)$ is invertible if and only if $f(\lambda_i) \neq 0$ for all $i \in \{1, \ldots, m\}$.
\end{enumerate}
\end{lemma}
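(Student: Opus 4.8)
The plan is to exploit the one structural fact provided by Definition~\ref{def:matrix function}: $f(A) = p(A)$ for a scalar polynomial $p$, namely the Hermite interpolant of $f$ at the eigenvalues of $A$. In other words, $f(A)$ is a polynomial in $A$, and everything will follow from elementary ring-theoretic manipulations together with one triangularization argument. Property (a) is then immediate: if $PA = AP$, then $P A^k = A^k P$ for every $k \geq 0$ by induction, hence $P p(A) = p(A) P$, i.e.\ $P$ commutes with $f(A)$.

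For (b), I would first note that conjugation by an invertible $P$ is a ring homomorphism, so $P A^k P^{-1} = (P A P^{-1})^k$ for all $k$ and therefore $P p(A) P^{-1} = p(P A P^{-1})$ for any polynomial $p$. It then remains to check that the \emph{same} polynomial $p$ computes $f(P A P^{-1})$. This holds because $A$ and $P A P^{-1}$ are similar, hence share the distinct eigenvalues $\lambda_1, \dots, \lambda_m$ with the same algebraic multiplicities $n_1, \dots, n_m$; consequently the Hermite conditions $p^{(k)}(\lambda_i) = f^{(k)}(\lambda_i)$, $k \in \{0,\dots,n_i-1\}$, that define $p$ for $A$ are verbatim those defining the Hermite interpolant for $P A P^{-1}$. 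Thus $f(P A P^{-1}) = p(P A P^{-1}) = P p(A) P^{-1} = P f(A) P^{-1}$.

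For (c), I would invoke the Schur triangularization $A = Q T Q^{*}$ with $Q$ unitary and $T$ upper triangular carrying the eigenvalues of $A$ (with multiplicity) on its diagonal. Then $f(A) = p(A) = Q\, p(T)\, Q^{*}$, and $p(T)$ is again upper triangular with diagonal entries $p(\lambda)$ as $\lambda$ runs over the diagonal of $T$; since $p(\lambda_i) = f(\lambda_i)$ by the interpolation conditions, the eigenvalues of $f(A)$ are exactly $f(\lambda_1), \dots, f(\lambda_m)$ (counted with multiplicity). Property (d) is then immediate from (c): a square matrix is invertible if and only if $0$ is not one of its eigenvalues, and by (c) this is equivalent to $f(\lambda_i) \neq 0$ for every $i \in \{1,\dots,m\}$.

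All four items are routine; the only step requiring a little care is (c), where passing merely from "eigenvectors of $A$ are eigenvectors of $f(A)$" yields only the inclusion $\{f(\lambda_i)\}_{i} \subseteq \mathrm{spec}(f(A))$, so one genuinely needs the Schur form (or, equivalently, the Jordan form of $A$) to get the full spectral mapping statement and the correct multiplicities. Alternatively, since this appendix merely collects standard facts, all four assertions can simply be cited from \cite[Ch.~1]{higham.08}.
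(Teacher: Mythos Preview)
Your proposal is correct. The paper's own proof simply cites \cite[Theorem 1.13]{higham.08} for (a)--(c) and then derives (d) from (c) exactly as you do; you instead supply self-contained elementary arguments (polynomiality of $f(A)$ for (a) and (b), Schur triangularization for (c)) before noting the citation alternative, so your treatment is more detailed but entirely compatible with the paper's approach.
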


\begin{proof}
Assertions (a), (b) and (c) are parts of \cite[Theorem 1.13]{higham.08}. Finally, (d) follows from (c) and the fact that $f(A)$ is invertible if and only if zero is not an eigenvalue. 
\end{proof}

Finally, we recall a result on the principal square root~\cite[Theorem 1.29]{higham.08}:
\begin{lemma}
	\label{lem:principal square root}
Let $A \in \mathbb{R}^{\ell \times \ell}$ be a matrix whose eigenvalues are all real and positive. Then there exists a unique matrix $P \in \mathbb{R}^{\ell \times \ell}$ with positive eigenvalues such that $P^2 = A$. It is given by the primary matrix function $P = \sqrt{A}$ in the sense of Definition \ref{def:matrix function}. 
\end{lemma}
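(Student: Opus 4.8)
The plan is to prove existence and uniqueness separately, relying on the properties of primary matrix functions recalled in Lemma~\ref{lem:matrix function:properties}; this reproduces the special case of \cite[Theorem 1.29]{higham.08} needed here.

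For \emph{existence}, I would argue as follows. Since every eigenvalue $\lambda_i$ of $A$ lies in $(0,\infty)$, the principal square root $z \mapsto \sqrt{z}$ is real-analytic on an open neighbourhood of $\{\lambda_1,\dots,\lambda_m\}$ in $\mathbb{R}$; in particular it is defined on the spectrum of $A$ in the sense of Definition~\ref{def:matrix function}(a), and---by the footnotes to that definition---its Hermite interpolating polynomial $p$ is real, so that $P := \sqrt{A} = p(A)$ is a well-defined real matrix. By Lemma~\ref{lem:matrix function:properties}(c) its eigenvalues are $\sqrt{\lambda_1},\dots,\sqrt{\lambda_m}$, all strictly positive. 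It remains to check that $P^2 = A$, which follows from the identity $(\sqrt{z})^2 = z$ on the spectrum of $A$ together with the multiplicativity $(fg)(A) = f(A) g(A)$ of the primary matrix function map (a standard property, also part of \cite[Theorem 1.13]{higham.08}); concretely, $p(z)^2 - z$ vanishes, with all the derivatives prescribed in Definition~\ref{def:matrix function}(b), at each $\lambda_i$, hence is divisible by the minimal polynomial of $A$, so that $p(A)^2 - A = 0$.

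For \emph{uniqueness}, suppose $X \in \mathbb{R}^{\ell \times \ell}$ has only positive eigenvalues and satisfies $X^2 = A$. Then $X$ commutes with $A = X^2$, hence with $P = p(A)$ by Lemma~\ref{lem:matrix function:properties}(a), and therefore $(X - P)(X + P) = X^2 - P^2 = A - A = 0$. Since $X$ and $P$ commute, they can be simultaneously upper-triangularized over $\mathbb{C}$; the diagonal entries of $X + P$ are then sums $\mu_k + \nu_k$ of an eigenvalue $\mu_k > 0$ of $X$ and an eigenvalue $\nu_k > 0$ of $P$, so $\det(X + P) = \prod_k (\mu_k + \nu_k) > 0$ and $X + P$ is invertible. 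Multiplying $(X - P)(X + P) = 0$ on the right by $(X + P)^{-1}$ yields $X = P = \sqrt{A}$.

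I do not expect a genuine obstacle here: the only slightly delicate point is the identity $P^2 = A$, that is, the multiplicativity of the primary matrix function map, which is classical and already implicit in the facts quoted from \cite{higham.08} in this appendix; the simultaneous triangularization used in the uniqueness step is elementary linear algebra over $\mathbb{C}$.
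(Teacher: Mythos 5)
The paper does not prove this lemma; it simply cites \cite[Theorem 1.29]{higham.08}, so there is no proof to compare against and your proposal supplies a genuine, self-contained argument. Both halves are correct. For existence, you rightly notice that multiplicativity $(fg)(A)=f(A)g(A)$ is \emph{not} among the facts recorded in Lemma~\ref{lem:matrix function:properties}, and so you give the direct polynomial argument: $p(z)^2 - z$ vanishes together with its first $n_i-1$ derivatives at each $\lambda_i$, hence is divisible by $\prod_i(z-\lambda_i)^{n_i}$ (the characteristic polynomial, since Definition~\ref{def:matrix function} in this paper uses algebraic multiplicities; your phrase ``minimal polynomial'' is also correct since the index of each $\lambda_i$ is at most $n_i$), and Cayley--Hamilton then gives $p(A)^2=A$. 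That $P=p(A)$ is real with positive eigenvalues follows from the footnote to Definition~\ref{def:matrix function} and Lemma~\ref{lem:matrix function:properties}(c), exactly as you say. For uniqueness, the chain is clean: $X$ commutes with $A=X^2$, hence with $P=p(A)$ by Lemma~\ref{lem:matrix function:properties}(a), so $(X-P)(X+P)=X^2-P^2=0$; simultaneous upper-triangularization of the commuting pair over $\mathbb{C}$ shows the diagonal entries of $X+P$ are sums of positive eigenvalues, so $X+P$ is invertible and $X=P$. This matches the standard proof in Higham, so while the paper merely cites, your reconstruction is faithful and uses only what the appendix makes available.
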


\bibliographystyle{abbrv}
\bibliography{mms}

\end{document}